\newtheorem{thm}{Theorem}[section]
\newtheorem{cor}[thm]{Corollary}
\newtheorem{dfn}[thm]{Definition}
\setlist[itemize,1]{leftmargin=4mm}
\setlist[enumerate,1]{leftmargin=6mm}
\newcommand{\DANISH}{\texttt{PCAPS}\xspace}
\newcommand{\PCAPS}{\texttt{PCAPS}\xspace}
\newcommand{\OPT}{\texttt{OPT}\xspace}
\newcommand{\ALG}{\texttt{ALG}\xspace}
\newcommand{\CAP}{\texttt{CAP}\xspace}
\newcommand{\E}{\mathbb{E}\xspace}
\newcommand\blfootnote[1]{
    \begingroup
    \renewcommand\thefootnote{}\footnote{#1}
    \addtocounter{footnote}{-1}
    \endgroup
}
\newcommand{\sref}[2]{\hyperref[#2]{#1 \ref{#2}}}
\newcommand{\mbf}[1]{{\mathbf{#1}}}
\patchcmd{\hyper@makecurrent}{%
    \ifx\Hy@param\Hy@chapterstring
        \let\Hy@param\Hy@chapapp
    \fi
}{%
    \iftoggle{inappendix}{%
        \@checkappendixparam{chapter}%
        \@checkappendixparam{section}%
        \@checkappendixparam{subsection}%
        \@checkappendixparam{subsubsection}%
        \@checkappendixparam{paragraph}%
        \@checkappendixparam{subparagraph}%
    }{}%
}{}{\errmessage{failed to patch}}
\newcommand*{\@checkappendixparam}[1]{%
    \def\@checkappendixparamtmp{#1}%
    \ifx\Hy@param\@checkappendixparamtmp
        \let\Hy@param\Hy@appendixstring
    \fi
}
\apptocmd{\appendix}{\toggletrue{inappendix}}{}{\errmessage{failed to patch}}
\begin{document}

\date{} 

\title{
\vspace{-0.5cm}
Carbon- and Precedence-Aware Scheduling for Data Processing Clusters \vspace{-0.5cm}
}

\author{
{\rm Adam Lechowicz}\\
University of Massachusetts Amherst
\and
{\rm Rohan Shenoy}\\
University of California Berkeley
\and
{\rm Noman Bashir}\\
Massachusetts Institute of Technology
\and
{\rm Mohammad Hajiesmaili}\\
University of Massachusetts Amherst
\and
{\rm Adam Wierman}\\
California Institute of Technology
\and
{\rm Christina Delimitrou}\\
Massachusetts Institute of Technology
} %

\maketitle

\begin{abstract}

As large-scale data processing workloads continue to grow, their carbon footprint raises concerns. Prior research on carbon-aware schedulers has focused on shifting computation to align with availability of low-carbon energy, but these approaches assume that each task can be executed independently. 
In contrast, data processing jobs have precedence constraints (i.e., outputs of one task are inputs for another) that complicate decisions, since delaying an upstream ``bottleneck'' task to a low-carbon period will also block downstream tasks, impacting the entire job's completion time. 
In this paper, we show that carbon-aware scheduling for data processing benefits from knowledge of both time-varying carbon and precedence constraints. 
Our main contribution is $\texttt{PCAPS}$, a carbon-aware scheduler that interfaces with modern ML scheduling policies to explicitly consider the precedence-driven importance of  each task in addition to carbon.  
To illustrate the gains due to fine-grained task information, we also study $\texttt{CAP}$, a wrapper for any carbon-agnostic scheduler that adapts the key provisioning ideas of $\texttt{PCAPS}$.
Our schedulers enable a configurable priority between carbon reduction and job completion time, and we give analytical results characterizing the trade-off between the two.
Furthermore, our Spark prototype on a 100-node Kubernetes cluster shows that a moderate configuration of $\texttt{PCAPS}$ reduces carbon footprint by up to 32.9\% without significantly impacting the cluster's total efficiency.

\end{abstract}

\section{Introduction} \label{sec:intro}
Concerns about the climate impact of machine learning (ML) and artificial intelligence (AI) have primarily revolved around the carbon footprint during the training phase~\cite{Hanafy:23:CarbonScaler, Wiesner:2021:WaitAwhile} or, in some cases, the inference phase~\cite{Baolin:2023:Clover} of their life cycle. 
However, as the data requirements of foundation models have ballooned, the \textit{data processing} tasks that must be completed before training %
account for almost one-third of the cumulative computation for an AI model during its life cycle~\cite{WU:2022:SustainableAI}. 
Furthermore, foundation model \textit{finetuning} generally trains a model on a narrower data set that may require additional data processing~\cite{Mosbach:21:FinetuningBERT} -- as the finetuning of general purpose models (e.g., Llama) for specific tasks~\cite{Liu:24:TaskLLMFinetuning, Lin:24:DataEfficientFinetuning} has gained traction, the comparative fraction of computational demand borne by data processing tasks is expected to grow.

Therefore, efforts towards responsible and sustainable development in AI must consider and optimize the carbon footprint of data processing. 
Even beyond sustainability, companies such as Microsoft have implemented \textit{internal carbon pricing} for short- and long-term decisions~\cite{Microsoft:19, McKinsey:21} that put financial responsibility on business divisions for each metric ton of operational CO$_2$ that they emit.  In the data center context, most current schedulers do not consider the time-varying aspect of carbon intensity and the resulting compute-carbon impact -- this must change to accommodate additional operational concerns such as carbon pricing.

Data processing frameworks (e.g., Apache Spark) ingest workloads that are composed of \textit{precedence-constrained tasks}, where e.g., the outputs of one operation are the inputs to another~\cite{Zaharia:12}.  
There is a rich literature studying scheduling algorithms for this case of precedence-constrained tasks (e.g., represented as a directed acyclic graph (DAG)) that characterize large-scale data processing.  From the theoretical side, optimal scheduling of precedence-constrained tasks (in terms of total completion time) is known to be NP-hard~\cite{Lenstra:78}.  Although there has been progress in approximation techniques~\cite{Su:24:Tompecs, Chudak:99, Lassota:23, Li:17, Davies:20, Davies:21, Maiti:20, Su:23}, the hardness of the problem necessitates simple settings with relatively strong assumptions. From an experimental perspective, there have been several studies proposing data-driven and/or evolutionary approaches for scheduling, both in the general precedence-constrained tasks case and the specific data processing case~\cite{Hongzi:2019:Decima, Wu:18, Li:23, Grinsztajn:20, Zhou:22, Cheng:96:Genetic, Pezzella:08:Genetic, Davis:14:Genetic, Islam:21}.  In recent years, such works have leveraged learning techniques such as graph neural networks (GNNs) and reinforcement learning (RL) to learn an improved scheduling policy, showing significant improvements in experiments.  However, owing to the complexities of these approaches, theoretical guarantees for learning-based approaches have proven difficult to obtain.

Beyond the singular objective of job completion time, a select few works have considered settings that are closer to the carbon-aware problem we study in this paper~\cite{Su:24:Tompecs, Goiri:2012:GreenHadoop, Liu:23}.  These \textit{multi-objective} scheduling environments balance the objectives of e.g., reducing job completion time alongside another metric of interest, such as cost.  For instance, several works have considered energy efficiency in tandem with job completion time, from both theoretical and experimental perspectives.  \citet{Su:24:Tompecs} study \textit{energy-aware} list scheduling for precedence constrained tasks, %
giving theoretical bounds for an combined objective of energy consumption and performance.
GreenHadoop~\cite{Goiri:2012:GreenHadoop} is a MapReduce framework for data centers with local renewable sources that predicts the future availability of carbon-free (``green'') electricity and schedules jobs accordingly, subject to deadlines for individual jobs.  
\citet{Liu:23} consider job scheduling for low-carbon data center operation in a general model with both DAG and non-DAG jobs -- they develop an RL-based scheduler that focuses primarily on increasing energy-efficiency.

Despite these previous works, focusing on \textit{carbon-efficiency} rather than energy-efficiency requires different techniques.  In particular, while carbon-efficiency and energy-efficiency are sometimes complementary objectives, they are often contradictory~\cite{Hanafy:23} -- for instance, due to the \textit{time-varying} nature of carbon intensity, it may be advantageous to scale up during low-carbon periods (i.e., sacrificing energy efficiency) in exchange for the ability to scale down during high-carbon periods.
Works that \textit{do} consider carbon emissions (e.g., GreenHadoop) use abstractions, such as job-level deadlines and ``green'' vs. ``brown'' energy, that do not adequately model the current m.o. in data centers.

To address this multi-objective 
setting while catering to realistic scenarios, we propose that a middle-ground approach is needed -- namely, by drawing on techniques from the theoretical literature for precedence-constrained and carbon-aware scheduling, and simultaneously considering experimental advances, we seek a simple and interpretable framework that comes with guarantees in terms of the \textit{trade-off} between job completion time and carbon savings.

In this paper, we propose \PCAPS
(\textbf{P}recedence- and \textbf{C}arbon-\textbf{A}ware \textbf{P}rovisioning and \textbf{S}cheduling), a carbon-aware scheduler for data processing clusters.  \PCAPS is theoretically-inspired, leveraging a paradigm of interpretable and configurable \textit{threshold-based design} that informs decisions at each time step based on e.g., the current carbon intensity and/or carbon price.  In keeping with this inspiration, we give analytical results that characterize the trade-off between job completion time and carbon savings.
\PCAPS is also practically relevant, drawing on recent insights from ML-based DAG schedulers (e.g., Decima~\cite{Hongzi:2019:Decima}, LACHESIS~\cite{Zhou:22}, and others~\cite{Wu:18, Li:23, Grinsztajn:20}).  By interfacing with a score or probability distribution over available tasks, \PCAPS defines a notion of \textit{relative importance} (i.e., compared to other tasks) -- this allows it to make fine-grained carbon-aware decisions that take the DAG's structure into account, such as continuing to schedule bottleneck tasks even if carbon intensity is high.  See \autoref{sec:danish-design} for a formal description of \PCAPS's design.

As a simplification of \PCAPS, we additionally propose and study \CAP (\textbf{C}arbon-\textbf{A}ware \textbf{P}rovisioning), which takes the provisioning ideas of \PCAPS and generalizes them to interoperate with any carbon-agnostic scheduler.  
Without explicitly considering inter-task dependencies, \CAP changes the resources available to the cluster, capturing an intuition that clusters should \textit{throttle down} during high-carbon periods and vice versa~\cite{Hanafy:23:CarbonScaler, radovanovic2022carbon} -- see \autoref{sec:cap-design} for a description.

\begin{figure*}[t]
    \includegraphics[width=\linewidth]{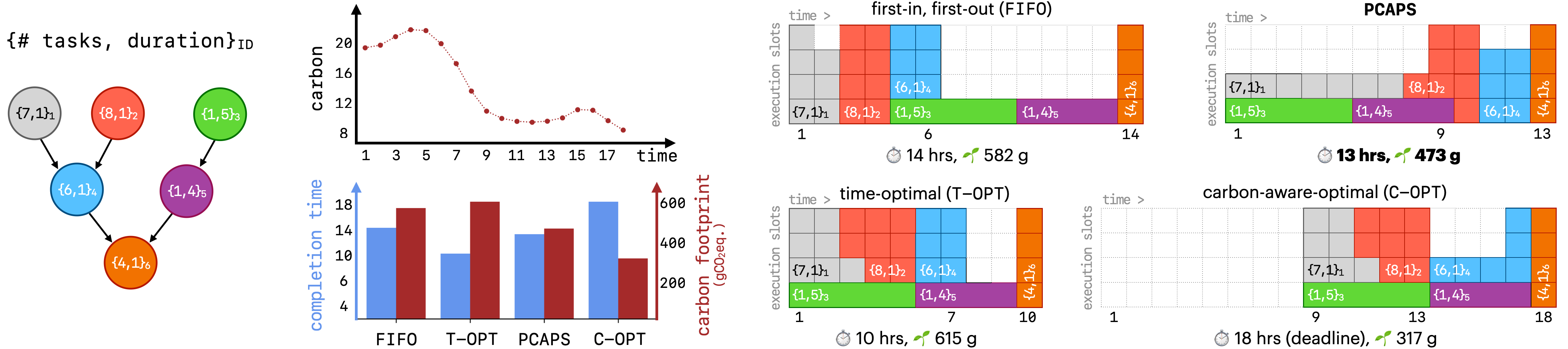}
    \vspace{-1em}
    \caption{ Four scheduling policies for a motivating DAG and 18-hour-long carbon intensity trace (on the left hand side).  
    Compared to a carbon-agnostic FIFO scheduler, the time-optimal approach (\texttt{T-OPT}) prioritizes starting the green and purple stages early to reduce completion time.  A carbon-aware-optimal approach (\texttt{C-OPT}) with a \textit{deadline} to finish the DAG within 18 hours reduces carbon emissions by 51.2\%, at the expense of increasing time by 28.5\% compared to FIFO.  
    By prioritizing green and purple stages during high-carbon periods, \PCAPS reduces carbon emissions by 23.1\% and still completes the job 7\% earlier compared to FIFO.} 
    \label{fig:motivation} \vspace{-1em}
\end{figure*}

We have implemented \PCAPS and \CAP as modules for Spark on Kubernetes and as extensions for a high-fidelity simulator~\cite{Hongzi:2019:Decima}.  Our experiments consider real and synthetic data processing workloads from Alibaba traces and TPC-H~\cite{TPCH:18, Alibaba:18}, alongside real carbon intensity traces from six power grids~\cite{electricity-map}.  
In our prototype implementation, we evaluate \PCAPS and \CAP on a 100-node Spark cluster.
We report the impact of carbon-aware policies on both job completion time and carbon savings, showing that \PCAPS and \CAP's configurable design enables notable carbon reduction for mild increases in \textit{end-to-end completion time}, which measures the total time to complete all jobs in a given experiment, measuring the system's overall throughput and efficiency.
We summarize our key contributions as follows:

\begin{enumerate}[itemsep=0.1cm]
    \item \PCAPS, 
    a carbon-aware scheduler that interfaces with a probability distribution over stages of a DAG, such as those provided by ML schedulers.  \PCAPS incorporates carbon into decisions that arbitrage between stages of a job at a granular level, obtaining a favorable trade-off between carbon savings and job completion time.
    \item \CAP, %
    a carbon-awareness module that dynamically adjusts cluster resources without replacing an existing scheduler (see \autoref{sec:cap-design}).  Compared to \PCAPS, it obtains a worse trade-off between carbon and completion time in exchange for flexibility and ease of implementation.
    \item  We analyze the \textit{carbon stretch factor} for \PCAPS and \CAP, which bounds the increase in job completion time due to carbon-aware actions (e.g., see \sref{Theorems}{thm:ksMakespan} and \ref{thm:danishMakespan}).
    \item  We implement \PCAPS and \CAP as extensions for a high-fidelity Spark simulator, alongside proof-of-concept prototypes for Spark on Kubernetes (see \autoref{sec:impl}).  We evaluate our proposed carbon-aware schedulers against baselines and a state-of-the-art ML scheduler (see \autoref{sec:eval}). \blfootnote{Our experiment code is available at \url{https://github.com/umass-solar/carbon-aware-dag/}.}
\end{enumerate}

\section{Problem and Motivation} \label{sec:prob}
This section formalizes the carbon-aware scheduling problem and motivates insights to contextualize 
our desiderata.

\vspace{-0.5em}
\subsection{Carbon-aware DAG scheduling problem}

Each job %
is represented as a directed acyclic graph (DAG) $\mathcal{J} = \{ \mathcal{V}, \mathcal{E} \}$, where each node in $\mathcal{V}$ is one of $n$ tasks, and each edge in $\mathcal{E}$ encodes precedence constraints between tasks -- e.g., for tasks $j, j' \in \mathcal{V}$, an edge $j \to j'$ indicates that $j'$ cannot start until after $j$ has completed.
A typical data processing cluster includes $K \geq 1$ machines (or executors).  More than one job can simultaneously run on a cluster -- e.g., given a set of current jobs $\{ \mathcal{J} \}$, the scheduler assigns tasks to machines over time while respecting precedence and capacity constraints.  We index continuous time by $t \geq 0$.

The goal of a typical scheduler is \textit{performance}, e.g., in terms of throughput, utilization, and average job completion time.  In this work, we additionally consider the goal of \textit{carbon-awareness} -- with respect to a time varying carbon signal given by a function $c(t) : t \geq 1$, a carbon-aware scheduler's objective is to minimize a combination of typical metrics (i.e., job completion time) and the overall carbon footprint (on both a per-job and a global, cluster basis).

Although future values of this carbon signal are unknown to the scheduler, in the rest of the paper, we follow prior work~\cite{Lechowicz:23, Bostandoost:24} and assume that it is bounded by constants $L$ and $U$ that are known to the scheduler, where $L \leq c(t) \leq U$.  In practice, the values of $L$ and $U$ can capture e.g., short-term forecasts of the best and worst carbon conditions on a given electric grid over the next 24 or 48 hours.

\vspace{-0.5em}
\subsection{Prior work and motivation}\label{sec:motiv}

Scheduling directed acyclic graphs (DAGs), or more broadly, precedence-constrained tasks, has been extensively studied.\\ 
Classic results establish the difficulty of this problem: even in its simplest forms, DAG scheduling is NP-hard~\cite{Lenstra:78}. 
To address this, prior work has developed heuristic methods and approximation algorithms~\cite{Su:24:Tompecs, Chudak:99, Lassota:23, Li:17, Davies:20, Davies:21, Maiti:20, Su:23}, ranging from the well-known list scheduling algorithm~\cite{Graham:66}, priority-based algorithms~\cite{Sels:12:Priority}, to more complex approaches such as genetic programming~\cite{Cheng:96:Genetic, Pezzella:08:Genetic, Davis:14:Genetic}. These methods often rely on simplifying assumptions, such as fixed task durations or centralized knowledge of the task graph.  %

In recent years, DAG scheduling has become %
a key problem in \textit{data processing frameworks} such as Apache Airflow, Beam, and Spark, which use DAGs to represent workflows. 
In Spark, each node of a job's DAG corresponds to a \textit{stage}, which encapsulates operations (\textit{tasks}) that can be executed in parallel over partitions of input data. Inter-stage dependencies impose precedence constraints: a stage can only begin once all ``parent'' stages have completed. 
Frameworks such as Spark typically implement simple scheduling strategies such as first-in, first-out (FIFO) and fair-share scheduling~\cite{SparkScheduling} -- these are explainable and efficient in terms of overhead, but suboptimal in terms of job completion time.

Recent works that revisit scheduling %
for data processing have explored %
learning-based techniques, such as reinforcement learning (RL) methods that dynamically learn scheduling policies~\cite{Hongzi:2019:Decima, Wu:18, Li:23, Grinsztajn:20, Zhou:22, Islam:21}.  Although these methods outperform default policies and hand-tuned heuristics in terms of job completion time, theoretical results for these techniques have proven difficult to obtain.

Carbon awareness adds a new dimension to the DAG scheduling problem -- an online scheduler must consider the time-varying carbon intensity while choosing to assign resources to specific nodes in the job DAG(s), with an overarching goal of reducing carbon footprint, combined with traditional metrics such as job completion time -- see \autoref{fig:motivation} for an illustration of this desired behavior for \DANISH, FIFO, and optimal schedules.
As discussed above, the state-of-the-art for carbon-agnostic DAG scheduling falls into two categories: theoretical models that focus on provably near-optimal schedules under idealized assumptions, and heuristic or learning-based methods that do not provide theoretical bounds but perform well in practice.
In adding carbon-awareness to the problem, we consider a \textit{middle ground} that balances between design goals of simplicity, interpretability, configurability, and performance. 
In particular, we seek a carbon-aware scheduler that is tractable for theoretical insight, offering provable bounds on, e.g., the trade-off between carbon and job completion time while not sacrificing the efficiency gains that come from, e.g., learning DAG structure.

\section{Theoretical Foundations} 
\label{sec:theory}
This section details theoretical underpinnings and intuition for our design in \autoref{sec:design}.
Recent literature has studied carbon-aware scheduling problems with a theoretical lens~\cite{Lechowicz:23,Lechowicz:24,Lechowicz:24CFL,Bostandoost:24}, spanning relatively simple suspend-resume~\cite{Lechowicz:23} to settings considering scaling and uncertainty in job lengths~\cite{Bostandoost:24}.
In these online carbon-aware scheduling problems, the key challenge is the inherent uncertainty in future carbon intensity values due to the proliferation of intermittent renewable energy sources. 

A common approach to manage this uncertainty is \textit{threshold-based design}~\cite{Lechowicz:23, Bostandoost:24:HotCarbon, Wiesner:2021:WaitAwhile}, %
that uses a predetermined and parameterized threshold function to inform decisions. 
Among studies that use this design paradigm, a common theoretical performance metric is \textit{competitive ratio}, %
which is the worst-case ratio ($\geq 1$) between the cost of an online algorithm vs. that of an optimal solution.  Algorithms designed using this metric are known to be pessimistic in practice~\cite{Lykouris:18, Purohit:18}.  Moreover, existing theoretical studies on carbon-aware scheduling focus on simple settings where, e.g., the job is bound by a deadline, the objective is only to reduce carbon, and precedence constraints are not considered.
However, threshold-based algorithms have been demonstrated to work well in practice: they are often close to optimal provided their inputs are reasonably accurate~\cite{Daneshvaramoli:24}.

Carbon-aware DAG scheduling exhibits an inherent trade-off between carbon savings and job completion time (JCT).  
Although worst-case metrics %
(i.e., bounds with respect to an intractable offline solution) have limited utility in this setting, it is still useful to quantify a trade-off between carbon and JCT -- to this end, we introduce two metrics that we use in the following sections.
We start by introducing some notation:  let $\OPT_K(\mathcal{J})$ denote the optimal makespan for job $\mathcal{J}$, and let $\ALG_K(\mathcal{J})$ denote the makespan for the schedule generated by some scheduler $\ALG$ (all using a maximum of $K$ machines).

\vspace{-0.5em}
\begin{dfn}[Carbon Stretch Factor (CSF)] \label{dfn:csf}
Given a scheduling policy (e.g., FIFO), let $\texttt{AG}$ denote the regular (i.e., carbon-agnostic) scheduling policy, and let $\texttt{CA}$ denote a carbon-aware variant of the same scheduling policy.
If $a$ is an upper bound such that $\texttt{AG}_K(\mathcal{J}) \leq a \cdot \OPT_K(\mathcal{J}) : \forall \mathcal{J}$, and $b$ is an upper bound such that $\texttt{CA}_K(\mathcal{J}) \leq b \cdot \OPT_K(\mathcal{J}) : \forall \mathcal{J}$, where $b \geq a$, then the \textbf{carbon stretch factor} is defined as $\nicefrac{b}{a}$, which indicates (multiplicatively) how much worse the makespan of $\texttt{CA}$ is compared to $\texttt{AG}$. Note that $\nicefrac{b}{a} \geq 1$. 
\end{dfn}
\vspace{-0.5em}

\noindent To quantify carbon savings, we define $C_\ALG(t)$ as the instantaneous carbon emissions at time $t$ due to decisions by scheduler $\ALG$.  It is a function of the number of executors active in $\ALG$'s schedule at time $t$ (denoted by $E_\ALG(t)$) and the current carbon intensity: $C_\ALG(t) \coloneqq c(t) E_\ALG (t)$.

\begin{dfn}[Carbon Savings] \label{dfn:carbonsavings}
Let $\texttt{AG}$ and $\texttt{CA}$ denote a carbon-agnostic and carbon-aware scheduler as outlined in \sref{Def}{dfn:csf}. %
For a job $\mathcal{J}$, if $\texttt{AG}$ runs from time step $0$ until $T$ (its completion time), and $\texttt{CA}$ operates from time $0$ to $T'$, %
then $\texttt{CA}$'s \textbf{carbon savings} are given by $\int_0^T C_{\texttt{AG}}(t) - \int_0^{T'} C_{\texttt{CA}}(t)$.
\end{dfn}

\noindent Using CSF and carbon savings, we describe the desired behavior of a carbon-aware scheduler for data processing.
A basic intuition in threshold-based designs is ``hedging'' between completing tasks now and waiting for lower-carbon periods that may arrive. 
To do this, thresholds rely on the \textit{range} of carbon intensities that are expected to appear in the near future (i.e., $L$ and $U$).  In the context of CSF, this translates into two conditions that a scheduler should satisfy:

\textbf{i) } If the fluctuation of carbon intensity is \textit{low} (e.g., $L$ and $U$ are close), the CSF should be close to $1$, i.e., JCT should be close to that of the carbon-agnostic algorithm.  

\textbf{ii) } If the fluctuation is \textit{high} (e.g., $L$ and $U$ are not close), the CSF should be \textit{finite}, i.e., the scheduler does not wait indefinitely to complete the job.  In threshold-based designs, this is often met by imposing a \textit{deadline} on the job~\cite{Goiri:2012:GreenHadoop, Lechowicz:23}.

In the context of the DAG scheduling for data processing workloads, additional unique challenges exist. For instance, in the single job settings considered by prior work, specifying a deadline for each job is straightforward~\cite{Lechowicz:23, Bostandoost:24}.  However, on a cluster scale that considers multiple jobs of unknown length and different arrival times, setting a proper deadline quickly becomes complicated. Instead, our schedulers (see \autoref{sec:design}) guarantee a minimum amount of job progress whenever there are outstanding tasks in the queue.  

Due to precedence constraints, carbon-aware scheduling actions that do not consider the structure of the DAG may inadvertently block bottleneck tasks from processing, having a large negative impact on JCT.  This gives a third condition:

\textbf{iii) } When fluctuation is \textit{high} (i.e., $L$ and $U$ are not close) and the system is in a high-carbon period,
a scheduler should carefully consider the \textit{structure} of a job's DAG, prioritizing \textit{bottleneck} tasks to use the limited cluster resources.

\noindent Conditions \textbf{i - iii)} summarize the desired high-level behavior of a carbon-aware scheduler for data processing workloads.  In the following section, we present \PCAPS that takes into account the above conditions in its design.

\section{Design} \label{sec:design}

\noindent In this section, we present \PCAPS, our Precedence- and Carbon-Aware Provisioning and Scheduling system, and then, as a flexible and easy-to-implement alternative, we present \CAP (Carbon-Aware Provisioning).

\vspace{-0.5em}
\subsection{\PCAPS} \label{sec:danish-design}

From the discussion in \autoref{sec:theory}, we seek an interpretable and configurable scheduler that satisfies the conditions $\textbf{i - iii)}$.
To this end, we introduce \PCAPS (Precedence- and Carbon-Aware Provisioning and Scheduling), which interfaces with a probabilistic DAG scheduler such as Decima~\cite{Hongzi:2019:Decima}.

\PCAPS's key idea is a metric of \textit{relative importance} (\sref{Def.}{dfn:rel-imp}) that is implicitly embedded in a probability distribution over tasks.  We define a configurable carbon and importance-aware threshold function that uses the \textit{relative importance} metric to make per-task fine-grained scheduling decisions -- as illustrated in \autoref{fig:danish-diagram}. 
Next, we detail how the theoretical literature on threshold-based algorithms inspires \PCAPS, describe its operation, and discuss analytical results that characterize the trade-off between carbon savings and JCT. 

\noindent \textbf{\PCAPS design. }
We first formalize a class of \textit{probabilistic schedulers} that \PCAPS interfaces with, giving a concrete example of an ML scheduler in this class.

\vspace{-0.5em}
\begin{dfn}[Probabilistic Scheduler] \label{dfn:pb}
At each \textbf{scheduling event},\footnote{Scheduling events include job arrivals, task completions, and machines becoming available.} a probabilistic scheduler generates a distribution $\{ p_{v,t} : v \in \mathcal{A}_t\}$, where $\mathcal{A}_t$ denotes the set of tasks that are ready to be executed at time $t$.
\end{dfn}
\vspace{-0.5em}

\noindent One example of a probabilistic scheduler is Decima~\cite{Hongzi:2019:Decima}, an RL-based scheduler for data processing workloads.  
Decima learns actions in the form of scores for each task -- a masked softmax is applied to these scores to obtain a probability distribution over $\mathcal{A}_t$, and the next scheduled task is sampled from this distribution.
Recall the motivation behind \PCAPS: 
in addition to ramping down during high-carbon periods and ramping up during low-carbon periods, bottleneck tasks (i.e., tasks with a large score) should be scheduled even if the carbon intensity is high to reduce JCT.
To this end, we define a notion of \textit{relative importance} that compares the probability mass assigned to a single task $v$ against other tasks in $\mathcal{A}_t$.

\begin{figure}[t]
    \centering 
    \includegraphics[width=0.95\linewidth]{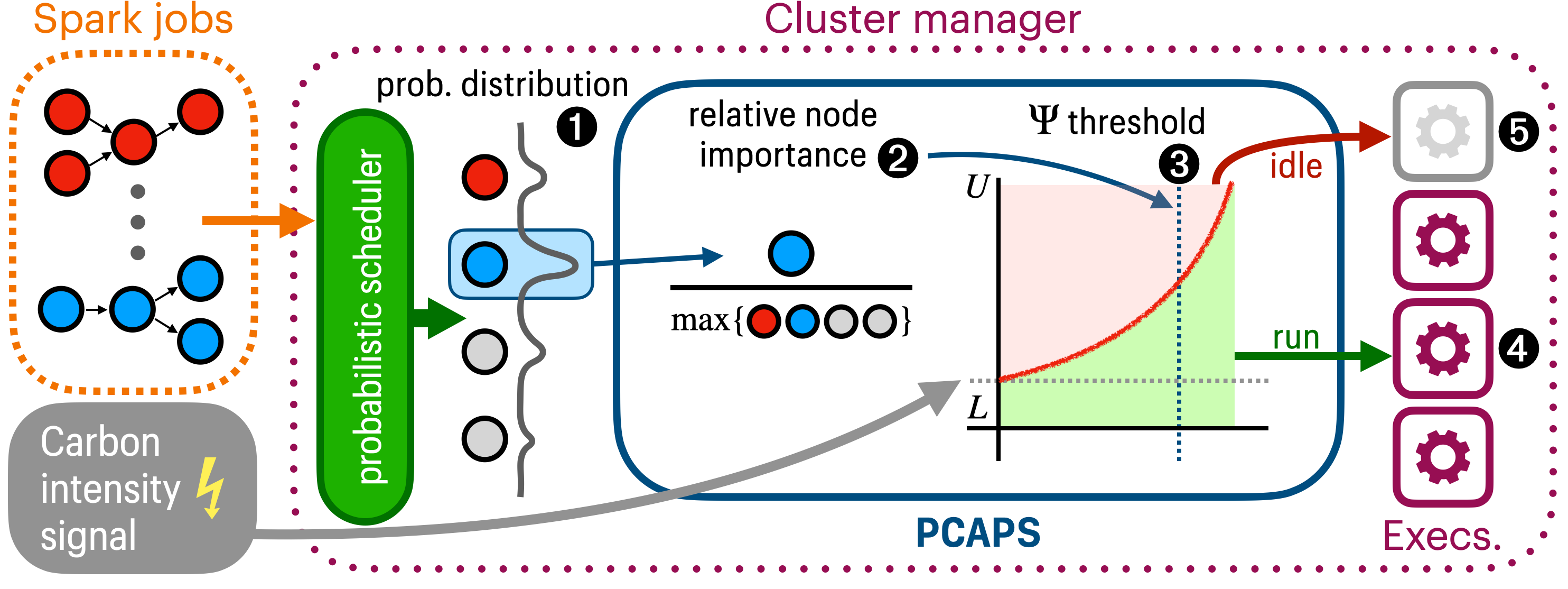} \vspace{-1em}
    \caption{\PCAPS interfaces with a probabilistic (\texttt{PB}) scheduling policy. Given a probability distribution over nodes \ding{202}, \PCAPS computes a \textit{relative importance} score \ding{203} that is used to determine which nodes should run based on the current carbon intensity \ding{204} -- e.g., bottleneck nodes impeding job completion run regardless of carbon \ding{205}, while less important nodes can be deferred for lower carbon periods \ding{206}.}
    \label{fig:danish-diagram} 
\end{figure}

 \vspace{-0.5em}
\begin{dfn}[Relative Importance] \label{dfn:rel-imp}
Given a time $t \geq 0$ and node $v \in \mathcal{A}_t$, the relative importance $r_{v,t}$ is defined:
{\small
\begin{equation*}
    r_{v,t} \coloneqq \frac{p_{v,t}}{\max_{u \in \mathcal{A}_t} p_{u,t}} \in [0,1].
\end{equation*}}
\end{dfn}
 \vspace{-0.5em}

\noindent If a task's relative importance is closer to $1$, the task is relatively \textit{more important}, and a value closer to $0$ implies the opposite.
Note that if $\vert \mathcal{A}_t \vert =1$ (i.e., only one task can be scheduled), the importance of that task is $1$. Leveraging inspiration from threshold-based design, we define scheduling decisions using a threshold function $\Psi_\gamma$ that considers the current carbon intensity and the relative importance of a task.
$\gamma \in [0,1]$ is a carbon-awareness parameter that controls the ``strictness'' of the function: $\gamma = 0$ recovers carbon-agnostic actions, while $\gamma = 1$ is maximally carbon-aware for tasks with low relative importance.  We define $\Psi_\gamma$ as:
{\small
\begin{equation*}
    \Psi_\gamma(r) \coloneqq \left( \gamma L + (1-\gamma) U \right) + \left[ U - \left( \gamma L + (1-\gamma) U \right)\right] \frac{\exp (\gamma r) -1 }{\exp (\gamma) -1}, \label{eq:Psi}
\end{equation*}
}
\noindent The function $\Psi_\gamma(.)$ exhibits an exponential dependence on $r$, the relative importance of a task.  This draws on continuous versions of online search~\cite{ElYaniv:01, Zhou:08}, where an exponential trade-off is derived by balancing the marginal reward of the current price against the risk that better prices exist in the future. 
We interpret relative importance analogously: high-importance tasks are scheduled regardless of carbon intensity to avoid the negative impact of not scheduling them, while low-importance tasks can be deferred to wait for lower-carbon periods with less impact on JCT. 
This threshold function is used in a \textit{carbon-awareness filter} of sampled tasks before they are scheduled-- we formalize this in \autoref{alg:danish}:

\begin{algorithm}[t]
	\caption{\PCAPS (Precedence- and Carbon-Aware Provisioning and Scheduling) }
	\label{alg:danish}
    {\small
	\begin{algorithmic}[1]
		\State \textbf{input:} hyperparameter $\gamma$, threshold function $\Psi_\gamma(\cdot)$, probabilistic (carbon-agnostic) scheduler \texttt{PB}
        \State \textbf{define:} a \textit{scheduling event} occurs whenever \texttt{PB} is invoked or the carbon intensity $c(t)$ changes
        \While{cluster active at time $t \ge 0$} 
        \If{scheduling event at time $t$} 
        \State Sample $v \in \mathcal{A}_t$ and probabilities $p_{v,t} : v \in \mathcal{A}_t$ from \texttt{PB}
        \State Compute relative importance $r_{v,t} = \frac{p_{v,t}}{\max_{u \in \mathcal{A}_t} p_{u,t}}$
        \If{$\Psi_\gamma(r_{v,t}) \geq c(t)$ \textbf{or} no machines currently busy}
        \State Send task $v$ to an available machine at time $t$
        \Else
        \State Idle until next scheduling event
        \EndIf
        \EndIf
        \EndWhile
	\end{algorithmic}
    }
\end{algorithm}	

\noindent
\PCAPS's carbon-awareness filter accomplishes all three of the motivation points defined in \autoref{sec:motiv}.
It schedules (or defers) tasks based on the current carbon intensity $c(t)$, with the effect of reducing execution during high-carbon periods.  Furthermore, the likelihood of a task being scheduled irrespective of the current carbon intensity is proportional to its importance in the DAG (i.e., in terms of precedence constraints).
Note that $\Psi_\gamma(1) = U$ -- tasks with high relative importance are always scheduled.  
While deferring a task has a negative impact on an individual job's JCT, \PCAPS is optimized for the case where multiple DAGs share a cluster.  Prioritizing outstanding bottleneck tasks across all jobs helps to manage the system's \textit{end-to-end completion time} (ECT) for e.g., a set of jobs.
In \autoref{fig:rel-imp}, we illustrate the intuitions behind \PCAPS's carbon-awareness filter using two sample job DAGs. 

\noindent\textbf{Analytical results. }
We analyze the \textit{carbon stretch factor} (\sref{Def.}{dfn:csf}) and \textit{carbon savings} (\sref{Def.}{dfn:carbonsavings}) for \PCAPS, with full proofs in \sref{Appendix}{apx:danish-proofs}.  $\texttt{PB}$ denotes a carbon-agnostic probabilistic scheduler throughout.
For an arbitrary job $\mathcal{J}$, we let $\mathcal{D}(\gamma, \mbf{c}) \in [0,1]$ denote a function that depends on the \textit{expected amount of deferrals}
(i.e., the tasks that $\PCAPS$ prevents from being scheduled, which depends on the carbon signal $\mbf{c}$).  See \sref{Appendix}{apx:danishMakespan} for a formal definition.
\begin{thm}\label{thm:danishMakespan}
    For time-varying carbon intensities given by $\mbf{c}$, the carbon stretch factor of \PCAPS is $1 + \frac{\mathcal{D}(\gamma, \mbf{c}) K}{2 - \frac{1}{K}}$.
\end{thm}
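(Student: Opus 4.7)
My plan is to bound the makespan of \PCAPS by decomposing it into the portion that mirrors the carbon-agnostic scheduler \texttt{PB} and an additional idle portion induced by the carbon-awareness filter, then apply \sref{Def.}{dfn:csf}. First I would note that \texttt{PB} dispatches a sampled task whenever a machine is free and $\mathcal{A}_t$ is nonempty, so it qualifies as a list scheduler in Graham's sense. Graham's classical analysis then yields $\texttt{PB}_K(\mathcal{J}) \leq (2 - \tfrac{1}{K})\OPT_K(\mathcal{J})$ for every job $\mathcal{J}$, so we may take $a = 2 - \tfrac{1}{K}$ as the carbon-agnostic upper bound demanded by \sref{Def.}{dfn:csf}.

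Next I would argue the carbon-aware bound. Comparing \autoref{alg:danish} to \texttt{PB}, \PCAPS deviates only when $\Psi_\gamma(r_{v,t}) < c(t)$ \emph{and} at least one machine is currently busy; in that case \PCAPS leaves one machine idle until the next scheduling event. Writing $\PCAPS_K(\mathcal{J}) = \texttt{PB}_K(\mathcal{J}) + I(\gamma, \mbf{c})$, where $I(\gamma, \mbf{c})$ captures the extra wall-clock time due to deferrals, the goal reduces to showing $\E[I(\gamma, \mbf{c})] \leq \mathcal{D}(\gamma, \mbf{c})\,K\,\OPT_K(\mathcal{J})$. The starting point is that the total work in $\mathcal{J}$ is at most $K \cdot \OPT_K(\mathcal{J})$, which serves as a proxy for the worst-case serial horizon of the workload. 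Taking $\mathcal{D}(\gamma, \mbf{c}) \in [0,1]$ to be the expected fraction of that horizon during which \PCAPS is forced to idle (averaged over both the random sampling of $v \in \mathcal{A}_t$ by \texttt{PB} and the schedule of $c(t)$), as formalized in \sref{Appendix}{apx:danishMakespan}, yields the required bound.

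Combining these gives $\E[\PCAPS_K(\mathcal{J})] \leq \bigl((2 - \tfrac{1}{K}) + \mathcal{D}(\gamma, \mbf{c})\,K\bigr)\OPT_K(\mathcal{J})$, so we may take $b = (2 - \tfrac{1}{K}) + \mathcal{D}(\gamma, \mbf{c})\,K$ and compute $\nicefrac{b}{a} = 1 + \frac{\mathcal{D}(\gamma, \mbf{c})\,K}{2 - \nicefrac{1}{K}}$ as claimed.

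The main obstacle is rigorizing the idle-time accounting under precedence constraints. A single filter rejection idles just one machine immediately, but a deferred bottleneck task can starve many downstream tasks several time steps later, so a naive per-event sum risks both double counting and missing cascades. The cleanest way forward is a charging argument: each unit of machine idleness is charged to the specific threshold event $\Psi_\gamma(r_{v,t}) < c(t)$ that caused the first upstream deferral along its ancestor chain, and the expected total charge collapses, by the definition of $\mathcal{D}(\gamma, \mbf{c})$, to $\mathcal{D}(\gamma, \mbf{c})\,K\,\OPT_K(\mathcal{J})$. Verifying that every idle unit is charged at most once, and only to events whose filter test actually failed, is where most of the technical care will concentrate; a secondary subtlety is the interaction between the deferral distribution and the distribution that \texttt{PB} already induces on its sampled task, which should factor cleanly by the independence of the threshold test and the sampling step within a single scheduling event.
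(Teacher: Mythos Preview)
Your high-level outline---establish $a = 2 - \tfrac{1}{K}$ via Graham's bound for \texttt{PB}, then bound the extra makespan from deferrals by $\mathcal{D}(\gamma,\mbf{c})\,K\,\OPT_K(\mathcal{J})$ using the work bound $\OPT_1(\mathcal{J}) \le K\,\OPT_K(\mathcal{J})$---matches the paper exactly. Where you diverge is in the accounting of the deferral cost, and the paper's route is considerably simpler than the charging argument you sketch.

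The paper does \emph{not} track idle machine-time units and charge them back to filter events. Instead it bounds the makespan increase \emph{per deferral}: if task $v$ is sampled but rejected at time $t$, the worst-case added makespan is at most $X_v$, the runtime of $v$. The reason is the property $\Psi_\gamma(1) = U$: once every other ready task has been dispatched and $\mathcal{A}_{t'} = \{v\}$, the relative importance of $v$ is $1$, so the filter necessarily passes and $v$ runs. Hence the ``cascade'' you worry about is already absorbed into this single worst-case scenario---all downstream tasks are blocked for at most $X_v$ additional time. Summing over deferral events gives $\E[\PCAPS_K(\mathcal{J})] \le \E[\texttt{PB}_K(\mathcal{J})] + \sum_v \E[D_v]\,X_v$, where $D_v$ counts how many times $v$ was sampled-but-deferred. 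The paper then upper-bounds this sum by the $\E[D]$ largest task durations (sorted), and \emph{defines} $\mathcal{D}(\gamma,\mbf{c})$ as that partial sum divided by $\OPT_1(\mathcal{J})$.

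So your charging argument is not wrong in spirit, but it is unnecessary: the per-deferral bound of $X_v$, enabled by $\Psi_\gamma(1)=U$, dissolves the double-counting and cascade concerns you flag as the main obstacle. Your informal description of $\mathcal{D}(\gamma,\mbf{c})$ as an ``expected idle fraction of the serial horizon'' is also looser than the paper's concrete definition in terms of sorted task lengths; adopting the latter makes the final inequality immediate rather than something that must be separately verified.
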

\noindent At a high level, $\mathcal{D}(\gamma, \mbf{c})$ describes the fraction of tasks (in terms of total runtime) that are deferred by \PCAPS with a given $\gamma$ and carbon trace $\mbf{c}$.  It is $\leq 1$ for any $\gamma$, and $\mathcal{D}(0, \mbf{c}) = 0$ for any $\mbf{c}$.  As $\gamma$ grows and \PCAPS becomes ``more carbon-aware'',  $\mathcal{D}(\gamma, \mbf{c})$ grows and CSF increases accordingly.

\begin{figure}[t]
    \centering 
    \includegraphics[width=\linewidth]{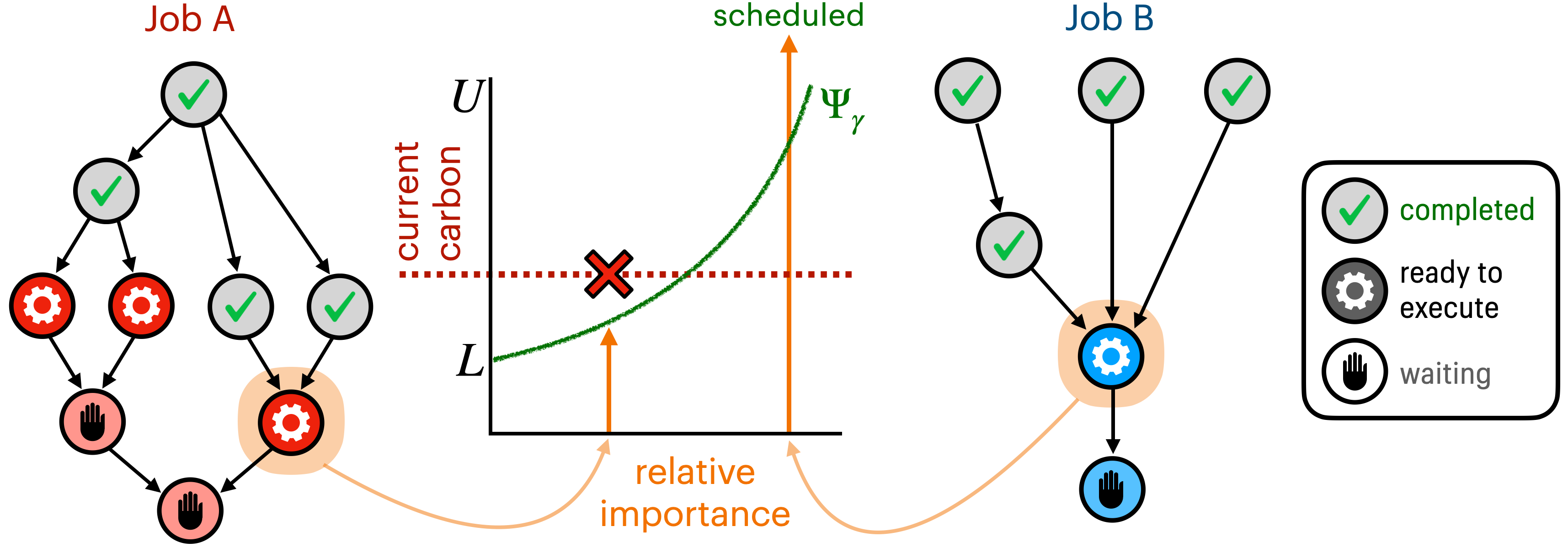} \vspace{-2em}
    \caption{Illustrating \PCAPS's carbon-awareness filter.  Jobs A and B are  DAGs found in TPC-H queries and Alibaba traces, respectively~\cite{TPCH:18, Alibaba:18}.  Highlighted nodes explain two scheduling outcomes.  In job A, the sampled node has low relative importance, so it is deferred.  In contrast, job B's sampled node is a \textit{bottleneck} task with high relative importance: even when the current carbon intensity is high, such tasks are scheduled to avoid increasing job completion time.}
    \label{fig:rel-imp} \vspace{-1em}
\end{figure}

We also analyze the \textit{carbon savings} of \PCAPS.  
Suppose \texttt{PB}'s schedule finishes at time $T$, and \PCAPS's finishes at time $T'$ (where $T \leq T'$).   We let $W$ denote the \textit{excess work} that \PCAPS must ``make up'' with respect to \texttt{PB}'s schedule (i.e., due to deferrals) -- note that this implicitly depends on the carbon stretch factor.
We let $\overline{s}_{-}^{(0,T)}$, $\overline{s}_{+}^{(0,T)}$, and $\overline{c}^{(T, T')}$ denote weighted average carbon intensity values based on the schedules of \texttt{PB} and \PCAPS.  In short, $\overline{s}_{-}^{(0,T)}$ captures the carbon emissions that \PCAPS avoids due to deferrals between time $0$ and time $T$,  
$\overline{s}_{+}^{(0,T)}$ captures the extra carbon (if any) incurred by \PCAPS due to higher utilization relative to \texttt{PB} between time $0$ and time $T$, %
while $\overline{c}^{(T, T')}$ captures the emissions that \PCAPS incurs after time $T$.  
See \sref{Appendix}{apx:danishMakespan} for formal definitions of these quantities.
\begin{thm}\label{thm:danishCarbonSavings}
   For time-varying carbon intensities given by $\mbf{c}$, \PCAPS yields carbon savings of $W ( \overline{s}_{-}^{(0,T)} - \overline{s}_{+}^{(0,T)} - \overline{c}^{(T, T')} )$.
\end{thm}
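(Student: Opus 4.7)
The plan is to start from the definition of carbon savings (Definition~\ref{dfn:carbonsavings}) and algebraically rearrange the integrals so they match the weighted averages introduced in Appendix~\ref{apx:danishMakespan}. First, expand carbon savings as $\int_0^T c(t) E_{\texttt{PB}}(t)\, dt - \int_0^{T'} c(t) E_{\PCAPS}(t)\, dt$, split the second integral at $T$, and combine the two integrals over $[0,T]$ to get $\int_0^T c(t)[E_{\texttt{PB}}(t) - E_{\PCAPS}(t)]\, dt - \int_T^{T'} c(t) E_{\PCAPS}(t)\, dt$. The minus sign on the tail integral is precisely the $-\overline{c}^{(T,T')}$ contribution in the statement, so the remaining work is to handle the integrand on $[0,T]$.

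The next step is to partition $[0,T]$ by the sign of $E_{\texttt{PB}}(t) - E_{\PCAPS}(t)$. On the portion where \texttt{PB} runs more executors than \PCAPS, the filter in Algorithm~\ref{alg:danish} is deferring tasks and thereby \emph{avoiding} carbon; this is the $\overline{s}_{-}^{(0,T)}$ contribution. On the portion where \PCAPS runs \emph{more} executors than \texttt{PB} --- which can occur because deferred tasks pile up and are later released during a low-carbon window while \texttt{PB} is already past them --- extra carbon is incurred, and this is the $\overline{s}_{+}^{(0,T)}$ contribution. The remaining key ingredient is work conservation: because both schedulers ultimately execute exactly the same set of tasks, the total executor-time is identical, $\int_0^T E_{\texttt{PB}}(t)\, dt = \int_0^{T'} E_{\PCAPS}(t)\, dt$. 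Rearranging this identity yields $W_{-} - W_{+} = W$, where $W_{-}$ and $W_{+}$ are the deferred and excess executor-times on $[0,T]$ and $W$ is the executor-time on $[T,T']$ --- i.e., the excess work referenced in the statement. Substituting the Appendix~\ref{apx:danishMakespan} definitions of $\overline{s}_{-}^{(0,T)}$, $\overline{s}_{+}^{(0,T)}$, and $\overline{c}^{(T,T')}$ (each normalized so its product with $W$ reproduces the associated integral) then recombines the three pieces into $W\bigl(\overline{s}_{-}^{(0,T)} - \overline{s}_{+}^{(0,T)} - \overline{c}^{(T,T')}\bigr)$.

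The main obstacle I expect is the bookkeeping around $W$: the statement places a single $W$ in front of all three weighted averages, whereas the natural work associated with each of the three integrals is $W_{-}$, $W_{+}$, and $W$ respectively. Reconciling this cleanly requires either normalizing each $\overline{s}$ by $W$ directly (so the formula is tautological once the partition is set up) or using the conservation identity $W_{-} = W + W_{+}$ to rewrite the avoided-carbon term on the $W_{-}$ interval as an equivalent sum over the $W$ and $W_{+}$ portions. Choosing the right convention --- and confirming it is the one adopted by the appendix --- is essentially the entire combinatorial content of the proof; the rest is straightforward manipulation of integrals. A secondary care-point is verifying that work conservation is truly valid: this requires that Algorithm~\ref{alg:danish}'s carbon filter never permanently starves a task, which is guaranteed by the ``or no machines currently busy'' escape clause that forces progress whenever the cluster would otherwise idle.
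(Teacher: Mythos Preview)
Your approach is essentially the paper's: split at $T$, partition $[0,T]$ by the sign of $E_{\texttt{PB}}(t) - E_{\PCAPS}(t)$, and observe that once each of the three weighted averages is normalized by a common $W$ the identity is definitional. The only clarification is that the paper's $W$ is your $W_{-}$ (formally $W \coloneqq \sum_t \max\{E^{\texttt{PB}}_t - E^{\PCAPS}_t,\,0\}$ over $[0,T]$), not the tail executor-time, so the paper adopts the first of the two conventions you list---normalize all three averages by this $W$ so the formula falls out tautologically---and the argument is carried out in a discretized time model (sums) rather than with integrals.
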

\noindent Taken together, \autoref{thm:danishMakespan} and \ref{thm:danishCarbonSavings}  characterize 
the carbon-time trade-off for \PCAPS, implying that a larger CSF unlocks greater potential carbon savings.

\vspace{-0.5em}
\subsection{Carbon-aware provisioning (\CAP)}  \label{sec:cap-design}

While \PCAPS captures all three intuition points in \autoref{sec:theory} by interfacing with a probabilistic scheduler, many existing data processing schedulers use simple policies such as FIFO~\cite{SparkScheduling}.  This naturally prompts the question of how \PCAPS can be \textit{simplified} to retain many of the same qualities, while interoperating with \textit{any} scheduler.
In particular, \PCAPS implicitly performs \textit{resource provisioning}, changing the amount of resources available to the cluster based on carbon -- this is a key technique used by prior work in carbon-aware scheduling~\cite{radovanovic2022carbon, Hanafy:23:CarbonScaler}.  In this section, we introduce \CAP (Carbon-Aware Provisioning), a simplified policy that applies a time-varying resource quota to the cluster and coexists with any underlying scheduler.  In what follows, we motivate the design and discuss analytical results on its carbon-JCT trade-off.

\begin{figure}[t]
    \centering 
    \includegraphics[width=\linewidth]{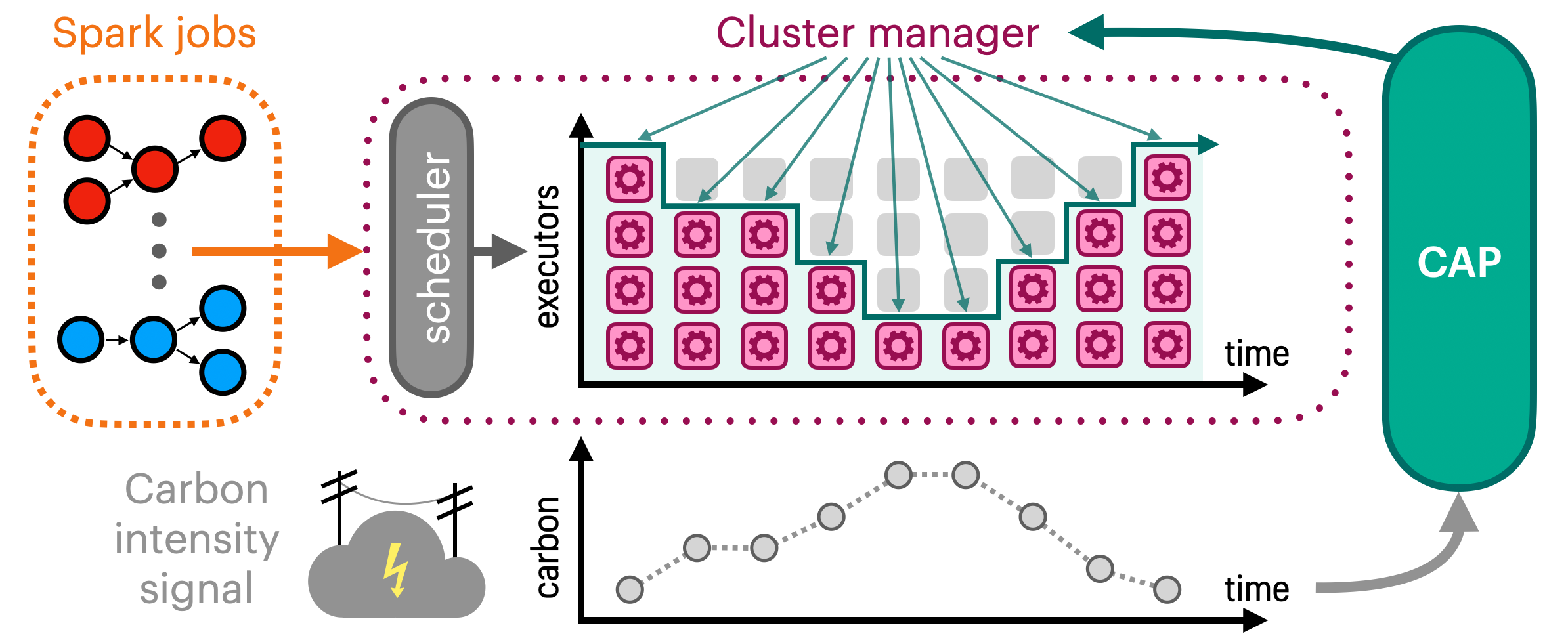}\vspace{-1em}
    \caption{The \CAP (Carbon-Aware Provisioning) module interacts directly with a \textit{cluster manager} %
    to specify the \textit{amount of resources} (e.g., no. of machines) that can be used at any given time, based on a \textit{carbon intensity signal}.  \CAP can be implemented without changes to an existing scheduling policy and/or the cluster manager.}
    \label{fig:cap-diagram}\vspace{-0.5cm}
\end{figure}

\noindent\textbf{\CAP design. } Given a cluster with $K$ machines, the possible resource quotas are given by $\{0, 1, \dots, K\}$. 
This set calls to mind the $k$-search problem~\cite{Lorenz:08}, where an online player must choose when to purchase $k$ items over a deadline-constrained sequence of time-varying prices. Variants of $k$-search have been applied to carbon-aware scheduling with deadlines~\cite{Lechowicz:23, Hanafy:23}.
\CAP uses the $k$-search threshold set, which captures the trade-off between executing now and waiting for better carbon intensities. 
Instead of using a deadline, \CAP frames the problem of determining a resource quota as \textit{repeated rounds} of $(K-B)$-search, where a minimum quota $B \in \{1, \dots, K\}$ always allows the cluster to use $\le B$ machines, ensuring continuous progress on jobs.  The thresholds are given by:
{\small
\begin{equation*}
\Phi_B = U; \ 
\Phi_{i+B} = U - \left(U - \frac{U}{\alpha} \right) \left( 1 + \frac{1}{\left(K-B\right)\alpha} \right)^{i-1} \kern-1em :  i \in \{1, \dots, K-B\},
\end{equation*}}
\noindent where $\alpha$ is the solution to $\left( 1 + \frac{1}{\left(K-B\right)\alpha} \right)^{\left(K-B\right)} \kern-1em = \frac{U-L}{U(1 - \frac{1}{\alpha})}$.  Each of these thresholds corresponds to a carbon intensity, and a quota is set based on how many values are \textit{above} the current carbon intensity.  Formally, the resource quota at time $t$ is $r(t) \gets \arg \max_{i\in \mathcal{R}} \Phi_i : \Phi_i \le c(t)$.  For ease of implementation, this quota is enforced without preemption; when machines become available, new task assignments are only allowed if $r(t)$ is greater than the number of busy machines.

\noindent\textbf{Analytical results. }
We analyze the \textit{carbon stretch factor} (\sref{Def.}{dfn:csf}) and \textit{carbon savings} (\sref{Def.}{dfn:carbonsavings}) for \CAP, with full proofs in \sref{Appendix}{apx:cap-proofs}.  $\texttt{AG}$ denotes a carbon-agnostic baseline scheduler throughout. Suppose \CAP's schedule completes at time $T'$.
We let $\mathcal{M}(B, \mbf{c})$ denote the minimum resource cap specified by \CAP at any point in its schedule (note this depends on the carbon signal $\mbf{c}$).  Formally, $\mathcal{M}(B, \mbf{c}) \coloneqq \arg \max_{i\in[K]} \Phi_i : \Phi_i \le c(t) \ \forall t \in [0, T']$.

\begin{thm}\label{thm:ksMakespan}
    For time-varying carbon intensities given by $\mbf{c}$, the carbon stretch factor of \CAP is $\left(\frac{K}{\mathcal{M}(B, \mbf{c})}\right)^2 \frac{2\mathcal{M}(B, \mbf{c})-1}{2K-1}$.
\end{thm}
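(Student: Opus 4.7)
The plan is to reduce the claim to two successive applications of Graham's classical list scheduling bound, stitched together by a scaling inequality that relates the optimal makespan across different machine counts. The key observation is that \CAP, in its worst case, behaves exactly like the baseline scheduler $\texttt{AG}$ run on a throttled cluster of $\mathcal{M}(B,\mbf{c})$ machines rather than $K$. This mirrors the structure of the CSF definition, which compares a carbon-aware scheduler's worst-case makespan to its carbon-agnostic counterpart, both expressed relative to $\OPT_K(\mathcal{J})$.

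First, I would handle the denominator $a$ by invoking Graham's $(2-1/K)$-bound on $\texttt{AG}$: since FIFO (or any list scheduling discipline that $\texttt{AG}$ embodies) satisfies $\texttt{AG}_K(\mathcal{J}) \leq (2 - 1/K)\cdot \OPT_K(\mathcal{J})$, we identify $a = (2K-1)/K$. For the numerator $b$, let $\mathcal{M} \coloneqq \mathcal{M}(B,\mbf{c})$ be the minimum resource cap that \CAP ever imposes. The worst case for makespan occurs when the cap stays at $\mathcal{M}$ for the full execution, since raising the cap later only releases additional machines and monotonically reduces makespan. In that worst case, $\texttt{AG}$ is effectively performing list scheduling on a cluster of $\mathcal{M}$ machines, so a second application of Graham's bound yields
\begin{equation*}
\texttt{CAP}_K(\mathcal{J}) \;\leq\; \bigl(2 - \tfrac{1}{\mathcal{M}}\bigr)\cdot \OPT_{\mathcal{M}}(\mathcal{J}).
\end{equation*}

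Next, I would bridge $\OPT_{\mathcal{M}}(\mathcal{J})$ back to $\OPT_K(\mathcal{J})$ via the scaling inequality $\OPT_{\mathcal{M}}(\mathcal{J}) \leq (K/\mathcal{M})\cdot \OPT_K(\mathcal{J})$: an optimal $K$-machine schedule can be simulated on $\mathcal{M}$ machines by time-slicing, where each interval containing up to $K$ concurrent tasks is serialized across $\mathcal{M}$ machines at a multiplicative slowdown of at most $K/\mathcal{M}$, and precedence is preserved because the relative ordering of task start times is unchanged. Chaining the two steps gives $\texttt{CAP}_K(\mathcal{J}) \leq \tfrac{K(2\mathcal{M}-1)}{\mathcal{M}^2}\cdot \OPT_K(\mathcal{J})$, so $b = K(2\mathcal{M}-1)/\mathcal{M}^2$, and the CSF becomes
\begin{equation*}
\frac{b}{a} \;=\; \frac{K(2\mathcal{M}-1)}{\mathcal{M}^2}\cdot \frac{K}{2K-1} \;=\; \left(\frac{K}{\mathcal{M}}\right)^{\!2} \frac{2\mathcal{M}-1}{2K-1},
\end{equation*}
matching the claim.

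The main obstacle will be justifying the scaling inequality $\OPT_{\mathcal{M}} \leq (K/\mathcal{M})\OPT_K$ rigorously in the non-preemptive DAG setting, since a naive time-slicing argument can split a single task across machine boundaries. I would either reduce to a Brent-style $\OPT_{\mathcal{M}} \leq W/\mathcal{M} + D$ bound combined with the two lower bounds $W/K \leq \OPT_K$ and $D \leq \OPT_K$ (absorbing the additive critical-path term as lower-order in the regime where total work dominates depth), or appeal to a preemptive relaxation. A secondary, smaller subtlety is verifying that Graham's bound legitimately applies to \CAP at the worst-case cap: this requires the ``admit a new task whenever the busy-machine count is below $r(t)$'' discipline from Section 4.2 to coincide with standard list scheduling on $\mathcal{M}$ machines, which it does once the cap has stabilized at its minimum value $\mathcal{M}$.
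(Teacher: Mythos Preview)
Your proposal is correct and follows essentially the same route as the paper: apply Graham's $(2-1/K)$ bound to $\texttt{AG}_K$, upper-bound $\CAP_K$ by $\texttt{AG}_{\mathcal{M}}$ (list scheduling on the throttled cluster), invoke Graham again on $\mathcal{M}$ machines, and then bridge back via $\OPT_{\mathcal{M}} \le (K/\mathcal{M})\,\OPT_K$. The scaling inequality you flag as the main obstacle is exactly where the paper is least formal---it is justified there by an informal geometric/utilization argument plus a worked figure rather than a Brent-style decomposition---so your instinct that this step needs care (and may hide integrality slack in the non-preemptive setting) is on point but does not deviate from the paper's own level of rigor.
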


\noindent We also analyze the \textit{carbon savings} of \CAP.  
If \texttt{AG}'s schedule finishes at time $T$ (where $T \leq T'$), we use $W$ as shorthand to denote the \textit{excess work} that \CAP must complete after time $T$ (i.e., after $\texttt{AG}$ has completed).  
As in \autoref{thm:danishCarbonSavings}, we let $\overline{s}^{(0,T)}$ and $\overline{c}^{(T, T')}$ denote weighted average carbon intensity values based on the schedules of \texttt{AG} and \CAP, respectively -- in short, $\overline{s}^{(0,T)}$ captures the carbon emissions that \CAP avoids by deferring $W$ amount of work relative to \texttt{AG}, while $\overline{c}^{(T, T')}$ captures the emissions that \CAP incurs after time $T$.  
See \sref{Appendix}{apx:ksCarbonSavings} for formal definitions of all three quantities.

\begin{thm}\label{thm:ksCarbonSavings}
   For time-varying carbon intensities given by $\mbf{c}$, \CAP yields carbon savings of $W ( \overline{s}^{(0,T)} - \overline{c}^{(T, T')} )$.
\end{thm}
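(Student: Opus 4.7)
The plan is to treat this as a bookkeeping identity that rewrites the difference of two integrals, using the work-conservation property that both schedulers eventually complete the same set of jobs. Starting from Definition \ref{dfn:carbonsavings}, the savings equal $\int_0^T c(t) E_{\texttt{AG}}(t)\, dt - \int_0^{T'} c(t) E_{\texttt{CA}}(t)\, dt$. My first step is to split the \CAP integral at time $T$ and regroup the $[0,T]$ terms, obtaining $\int_0^T c(t)\bigl(E_{\texttt{AG}}(t) - E_{\texttt{CA}}(t)\bigr)\, dt - \int_T^{T'} c(t) E_{\texttt{CA}}(t)\, dt$. This cleanly separates the ``avoided'' emissions (before $T$) from the ``extra'' emissions \CAP must pay after \texttt{AG} has finished.

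Next, I would invoke work conservation. Because \CAP merely caps machine usage on the same job set that \texttt{AG} processes, the total executor-time expended is identical: $\int_0^T E_{\texttt{AG}}(t)\, dt = \int_0^{T'} E_{\texttt{CA}}(t)\, dt$. Rearranging, the deficit accumulated by \CAP during $[0,T]$ (the executor-time that \texttt{AG} uses in excess of \CAP) is exactly the executor-time \CAP expends during $[T, T']$, and both equal the excess work $W$. That is, $W = \int_0^T \bigl(E_{\texttt{AG}}(t) - E_{\texttt{CA}}(t)\bigr)\, dt = \int_T^{T'} E_{\texttt{CA}}(t)\, dt$.

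The last step is to identify the weighted averages defined in the appendix as $\overline{s}^{(0,T)} = \frac{1}{W}\int_0^T c(t)\bigl(E_{\texttt{AG}}(t) - E_{\texttt{CA}}(t)\bigr)\, dt$ and $\overline{c}^{(T,T')} = \frac{1}{W}\int_T^{T'} c(t) E_{\texttt{CA}}(t)\, dt$. Substituting these into the expression from the first step immediately yields $W\bigl(\overline{s}^{(0,T)} - \overline{c}^{(T,T')}\bigr)$, as claimed.

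The main obstacle is not computational — the proof is essentially a one-line rearrangement — but conceptual: I need to justify work conservation rigorously. This requires (i) that \CAP's minimum quota $B \geq 1$ from \S\ref{sec:cap-design} guarantees the cluster always makes progress whenever tasks are pending, so $T' < \infty$; (ii) that the ``executor-time equals work'' accounting is consistent across the two policies, i.e., that task durations are not affected by the carbon-awareness filter; and (iii) that the boundary case $T = T'$ (no deferrals occurred, so $W = 0$) yields zero savings consistently. Once these are pinned down, the identity follows mechanically.
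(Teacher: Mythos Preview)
Your proposal is correct and mirrors the paper's argument almost exactly: split the carbon-savings expression at $T$, invoke work conservation to identify both pieces with $W$, and then recognize the two weighted averages $\overline{s}^{(0,T)}$ and $\overline{c}^{(T,T')}$ as definitions. The only cosmetic difference is that the paper works in a discretized time model (sums over steps where $c_t$ is constant) rather than integrals, and your explicit discussion of why work conservation holds is actually more careful than the paper, which simply asserts $\sum_{i=0}^{T}(E^{\texttt{AG}}_i-E^{\CAP}_i)=\sum_{i=T+1}^{T'}E^{\CAP}_i=W$ without further comment.
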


\noindent \autoref{thm:ksMakespan} and \ref{thm:ksCarbonSavings} imply that a larger CSF unlocks greater carbon savings for \CAP.  We explore the relative performance of \PCAPS vs. \CAP in our experiments, in \autoref{sec:eval}.

\section{Implementation} \label{sec:impl}
\noindent We have implemented proof-of-concepts of \DANISH and \CAP for Apache Spark on Kubernetes -- see \autoref{sec:spark-k8s-imp} for details.  
We also conduct large-scale experiments in a realistic Spark simulator -- see \autoref{sec:spark-sim-imp} for how we extend an existing simulator~\cite{Hongzi:2019:Decima} to evaluate carbon-aware scheduling policies.
\vspace{-0.5em}

\subsection{Spark and Kubernetes integration} \label{sec:spark-k8s-imp}

\noindent \textbf{Resource scaling \& stage scheduling.\ }
In Spark deployed on a Kubernetes cluster, 
each application is submitted to the API server~\cite{k8s:2015} that creates a ``driver'' running in a pod.  
We use Spark's dynamic allocation feature, which enables the driver to create executor pods dynamically as needed by the application -- these executors connect with the driver and execute application code. 
Kubernetes handles the scheduling of (driver and executor) pods for each application, while the Spark driver selects stages to execute within an application.

To implement \CAP, we develop a Python daemon that gets carbon intensity from an API (e.g., Electricity Maps~\cite{electricity-map}) and adjusts the resources available to Spark.  \CAP sets a \textit{resource quota}~\cite{kubernetesResourceQuotas} within a dedicated namespace for Spark apps -- our implementation adjusts \textit{CPU and memory} quotas to correspond with a maximum number of executors.  When the quota is \textit{lowered}, existing pods are \textit{not} preempted, but new pods are not scheduled until usage falls below the quota.  
We implemented \DANISH as a pluggable scheduling service that coordinates between Spark and Kubernetes. 
The service includes inference for Decima~\cite{Hongzi:2019:Decima}. 
\DANISH gets carbon intensity from an API and collects context about the cluster and job states from Kubernetes and Spark.

While \CAP can be implemented without modifications to Spark or Kubernetes, we made two key changes for \DANISH.
First, we implemented a Kubernetes \textit{scheduler plugin}~\cite{k8s-scheduler-plugins:21} that communicates with \DANISH to determine which application should receive available resources.  
This builds on source code APIs exposed by the default \verb|kube-scheduler| and requires building/configuring a custom scheduler pod.  We restrict the scope of our plugin to a dedicated namespace for Spark apps.
Next, we made changes to Spark~\cite{Spark:16} such that each application communicates with \DANISH before choosing the next stage for execution -- Spark provides scripts to build a pod Docker image~\cite{Merkel:14:Docker} based on a custom build.

\vspace{0.05cm}
\noindent \textbf{Setting level of parallelism.\ }
In a Spark DAG, each \textit{stage} (i.e., node) includes multiple tasks that are parallelizable over multiple executors.
Setting a \textit{parallelism limit} (number of executors working on a stage) is a key component of Spark scheduling (e.g., see \cite[Section 5.2]{Hongzi:2019:Decima}). 
More executors are not necessarily better: assigning many executors to a stage that does not benefit from parallelism \textit{blocks} them from working on other jobs in the queue.
For \textit{carbon-aware} scheduling, we enable \DANISH and \CAP to set new parallelism limits for the current job each time a stage is scheduled, and particularly to set \textit{lower} limits during high-carbon periods (e.g., see conditions \textbf{i)} and \textbf{ii)}, \autoref{sec:theory}).

In \DANISH, if a stage is deferred, it \textit{idles} (see \sref{Alg.}{alg:danish}) the newly freed executors that prompted a scheduling event.  Otherwise, the stage's parallelism limit is set to $P' \coloneqq \lceil P \cdot \min \left\{ \exp( \gamma (L - c_t) ), (1-\gamma) \right\} \rceil$, where $P$ is the limit chosen by Decima.  This mirrors the exponential trade-off in \DANISH's design -- e.g., when the current carbon $c_t$ is close to $L$ the limit is set to $\lceil (1- \gamma) P \rceil$, and as $c_t$ grows, the limit decreases exponentially to $1$.
For \CAP, given that the underlying scheduler specifies a parallelism limit $P$, \CAP first attempts to schedule a stage with $P' = \lceil P \cdot \nicefrac{r(t)}{K} \rceil$, where $\nicefrac{r(t)}{K}$ is the ratio of the resource quota vs. the total number of executors.  If the number of available executors is less than $P'$, the current stage takes all of the remaining available executors.

\vspace{-0.5em}
\subsection{Spark simulator environment} \label{sec:spark-sim-imp}
\citet{Hongzi:2019:Decima} developed a simulator %
that is a faithful representation of Spark's \textit{standalone mode}  (i.e., where Spark is the cluster manager), achieving an error (i.e., in run times) of within 5\%~\cite[Fig. 18]{Hongzi:2019:Decima}.
This simulator captures all first-order effects of Spark execution (e.g., delays in executor movement, parallelism overheads) -- it has since seen wide use in Spark contexts~\cite{Shin:24, Hu:24, Gertsman:23, Mathur:23, Li:23:JSS, Bengre:21}.
We implement \DANISH and \CAP as extensions to this simulator, which provides fast evaluation and flexibility.  We make the following modifications:

\noindent$\blacktriangleright$ \textit{Carbon accounting: } Each job's carbon footprint is measured \textit{ex post facto} to avoid impacting simulator fidelity.  Once an experiment is complete, existing computations (e.g., executor times) and a carbon trace are used to tally the footprint.

\noindent$\blacktriangleright$ \textit{\CAP: } We implement \CAP as a wrapper over three carbon-agnostic schedulers in the simulator: FIFO, Decima, and Weighted Fair (a heuristic tuned for the simulator's test jobs).

\noindent$\blacktriangleright$ \textit{\DANISH: } We implement \DANISH to interface with Decima, which provides a probability distribution over tasks.

\noindent With these modifications, the simulator allows us to quickly test many scenarios with a high degree of accuracy.
\vspace{-0.5em}

\section{Evaluation} \label{sec:eval}
\noindent We evaluate our carbon-aware schedulers in a prototype cluster and a realistic Spark simulator, using workloads from TPC-H benchmarks~\cite{TPCH:18} and Alibaba production DAG traces~\cite{Alibaba:18}.  We answer the following questions:
\begin{enumerate}[leftmargin=*]
    \item How do \DANISH and \CAP navigate the trade-off between carbon emissions and job completion time?
    \item How do \DANISH and \CAP adapt to changes in carbon intensity characteristics and workload characteristics?
\end{enumerate}
\vspace{-0.5em}

\subsection{Experimental setup} 
\label{sec:carbon-traces}

\begin{table}[t]
\caption{Summary of carbon intensity trace characteristics, including the duration, granularity, minimum, maximum, mean, and coefficient of variation (\textit{higher value implies more variation}) for carbon intensities.} \label{tab:characteristics} \vspace{-1em}
\begin{tabular}{|l|lllll|}
\hline
  & \multicolumn{5}{c|}{\begin{tabular}[c]{@{}c@{}}Avg. Carbon Intensity \vspace{-0.1em}\\ {\footnotesize \textit{(in gCO$_2$eq./kWh)} \cite{electricity-map}} \end{tabular}} \\ \cline{2-6} 
\multirow{-2}{*}{\begin{tabular}[c]{@{}c@{}}Grid \vspace{-0.1em}\\ Code \end{tabular}}           & \multicolumn{1}{l|}{ {\small Duration} }   & \multicolumn{1}{l|}{{\small Min.}}  & \multicolumn{1}{l|}{{\small Max.}} & \multicolumn{1}{l|}{{\small Mean}} & {\small Coeff. Var.}  \\ \hline
\worldflag[length=0.75em, width=0.6em]{US} {\small \texttt{\textbf{PJM}}}                               & \multicolumn{1}{l|}{}                                                                                      & \multicolumn{1}{l|}{293} & \multicolumn{1}{l|}{567} & \multicolumn{1}{l|}{425} & \multicolumn{1}{l|}{0.110}   \\ \cline{1-1} \cline{3-6} 
\worldflag[length=0.75em, width=0.6em]{US} {\small \texttt{\textbf{CAISO}}}      & \multicolumn{1}{l|}{}                                                                                      & \multicolumn{1}{l|}{83}  & \multicolumn{1}{l|}{451} & \multicolumn{1}{l|}{274} & \multicolumn{1}{l|}{0.309}    \\ \cline{1-1} \cline{3-6} 
\worldflag[length=0.75em, width=0.6em]{CA} {\small \texttt{\textbf{ON}}}   & \multicolumn{1}{l|}{}                                                                                      & \multicolumn{1}{l|}{12}  & \multicolumn{1}{l|}{179}  & \multicolumn{1}{l|}{50} & \multicolumn{1}{l|}{0.654}   \\ \cline{1-1} \cline{3-6}
\worldflag[length=0.75em, width=0.6em]{DE} {\small \texttt{\textbf{DE}}}   & \multicolumn{1}{l|}{}                                                                                      & \multicolumn{1}{l|}{130} & \multicolumn{1}{l|}{765} & \multicolumn{1}{l|}{440} &\multicolumn{1}{l|}{0.280}  \\ \cline{1-1} \cline{3-6} 
\worldflag[length=0.75em, width=0.6em]{AU} {\small \texttt{\textbf{NSW}}} & \multicolumn{1}{l|}{}                                                                                      & \multicolumn{1}{l|}{267} & \multicolumn{1}{l|}{817} & \multicolumn{1}{l|}{647} & \multicolumn{1}{l|}{0.143}    \\ \cline{1-1} \cline{3-6} 
\worldflag[length=0.75em, width=0.6em]{ZA} {\small \texttt{\textbf{ZA}}}     & \multicolumn{1}{l|}{\multirow{-6}{*}{\small \begin{tabular}[c]{@{}l@{}}01/01/2020-\\ 12/31/2022 \\ Hourly \\ granularity\\ 26,304\\ data points\end{tabular}}}                                                                                      & \multicolumn{1}{l|}{586} & \multicolumn{1}{l|}{785}  & \multicolumn{1}{l|}{713} & \multicolumn{1}{l|}{0.046}                                                                                               \\ \hline
\end{tabular}
\end{table}
\begin{figure}[t]
    \centering 
    \vspace{-1em}
    \includegraphics[width=\linewidth]{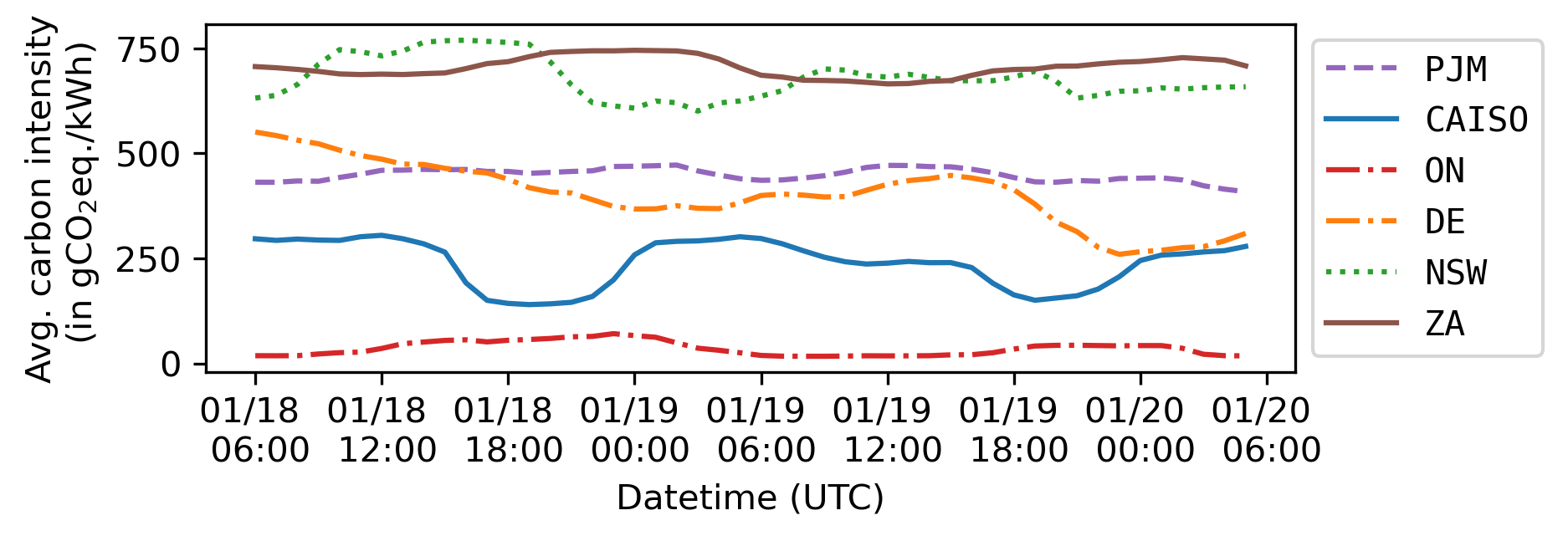} \vspace{-2.5em}
    \caption{ Time-varying carbon intensity for six grids (detailed in \autoref{tab:characteristics}) over 48 hours in January 2021.}
    \label{fig:CI-traces} \vspace{-1em}
\end{figure}

\noindent
\textbf{Carbon intensity traces.}
We use historical carbon traces from six regions~\cite{electricity-map} -- each trace provides hourly carbon intensity data in grams of CO$_2$ equivalent per kilowatt-hour (gCO$_2$eq./kWh).  The chosen power grids represent different energy generation mixes and thus different characteristics in terms of average carbon intensity and variability; we evaluate how these impact the behavior of \DANISH and \CAP.
In \autoref{tab:characteristics} and \autoref{fig:CI-traces}, we give snapshots of each region, showing how grid characteristics impact time-varying carbon intensity.  Larger \textit{coefficients of variation} (the ratio of the standard deviation to the mean) correspond to greater renewable penetration -- for instance, a large fraction of \verb|CAISO|'s capacity is solar PV, while the capacity in \verb|ZA| is predominantly coal.

To better observe the behavior of our carbon-aware schedulers, we follow prior work~\cite{Goiri:2012:GreenHadoop} and scale time in our experiments such that $1$ minute of \textit{real time} corresponds to $1$ hour of \textit{experiment time} -- since carbon intensity is reported hourly, this approximates a scenario where each job works with large amounts of data and runs for several hours, as is becoming common in e.g., data curation for LLMs~\cite{Liu:24, Chen:24, Villalobos:24, Bother:24, Brown:20}.

\vspace{0.05cm}
\noindent
\textbf{Workload traces.}
For workloads, we use TPC-H benchmarks~\cite{TPCH:18} and real DAG traces from a production Alibaba cluster~\cite{Alibaba:18}.  
We construct workloads such that the inter-arrival times follow a Poisson distribution while specific jobs are randomly picked from the respective traces. 
In the main body, we consider an average inter-arrival time of 30 minutes (30 real-time seconds), with additional experiments measuring the impact of this parameter in \sref{Appendix}{apx:exp}.

The TPC-H queries we experiment with operate on synthetic data with scales of 2 GB, 10 GB, and 50 GB -- these correspond to average real durations of 180 seconds, 386 seconds, and 1,261 seconds when given a single executor.
In our prototype experiments, we also construct workloads based on DAG information from the Alibaba trace~\cite{Alibaba:18}.  These DAGs exhibit a realistic power law distribution (many DAGs of short duration, few DAGs of long duration), they have 66 nodes on average, and an average total duration (on one executor) of 7,989 seconds. %
We scale all durations by $\nicefrac{1}{60}$ to match our experiment scale -- this yields jobs that take 2.2 real-time minutes to complete on average.  %

In the simulator, each experiment is run over a full carbon trace (spanning three years of data).  In the prototype, each experiment is run for several trials, where each starts at a uniformly randomly chosen time in the carbon trace, and new carbon intensities are reported once per (real-time) minute.  In both implementations, the upper and lower bounds of $U$ and $L$ correspond to the maximum and minimum forecasted carbon intensities over a lookahead window of 48 hours.

\begin{figure*}[t]
    \centering
    \includegraphics[width=0.95\linewidth]{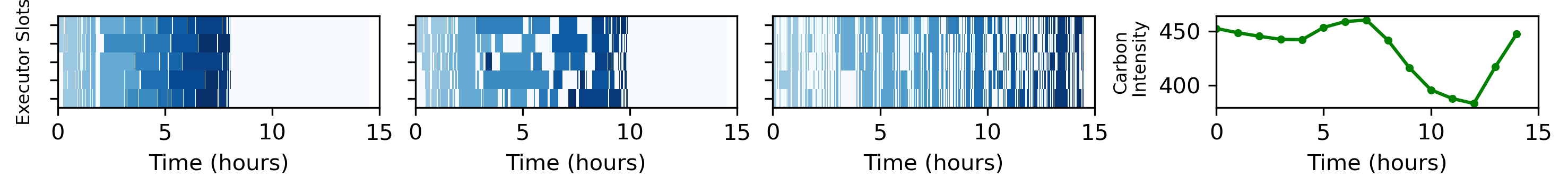}
    \vspace{-0.2cm}
    {\raggedleft
    \hspace{5.5em} \textbf{\textit{(a)}} Decima \hspace{6em} \textbf{\textit{(b)}} \PCAPS \hspace{6em} \textbf{\textit{(c)}} \CAP-FIFO \hspace{6em} \textbf{\textit{(d)}} Carbon intensity \hfill}
    \caption{ Visualizing executor usage over time for three schedulers, \textit{(a)} Decima, \textit{(b)} \PCAPS, and \textit{(c)} \CAP-FIFO in a small simulator cluster with 5 executors and 20 TPC-H jobs, over a 15 hour period in the \textnormal{\texttt{DE}} grid. \textit{(d)} In the executor plots, each job is a unique shade of blue, while ``idle'' executors are indicated by a white background. }
    \vspace{-0.4cm}
    \label{fig:multiplot}
\end{figure*}

\vspace{0.05cm}
\noindent\textbf{Baselines.}
We compare against the following baselines:

$\blacktriangleright$ \textbf{Default Spark/Kubernetes behavior (default)}:
The default behavior of Spark on Kubernetes -- Spark uses first in, first out (FIFO) to choose stages within a job, while the Kubernetes scheduler mediates between pods of each job during execution~\cite{SparkKubernetes}. 
In the simulator's Spark standalone mode, this baseline implements only the FIFO scheduling.

$\blacktriangleright$ \textbf{Decima}:
An RL scheduler for Spark that is optimized for job completion time~\cite{Hongzi:2019:Decima}.  We use the simulator's training environment to train Decima for 20,000 epochs.

$\blacktriangleright$ \textbf{Weighted Fair}: A heuristic that assigns executors proportionally to each job's workload, with tuned weights to improve performance on the simulated workloads~\cite{Hongzi:2019:Decima}. 

$\blacktriangleright$ \textbf{GreenHadoop:} A MapReduce framework proposed to leverage green energy by matching workloads with the availability of solar~\cite{Goiri:2012:GreenHadoop}.  This framework predates Spark, so we adapt its key ideas for DAG scheduling in the simulator -- see \autoref{apx:extra-details} for implementation details.

\vspace{0.05cm}
\noindent 
\textbf{Metrics.} We use three metrics to evaluate the fidelity of our approach in reducing carbon footprint without significantly impacting the completion time. 

\vspace{0.05cm}
$\blacktriangleright$ \textbf{Carbon Footprint}: 
We report the carbon footprint for various scheduling policies as a percentage decrease of the carbon-agnostic default baseline unless stated otherwise. 
The values are in the range of [-100\%, $\infty$), with negative values indicating a carbon reduction and positive values indicating an increase relative to the baseline. 
 
$\blacktriangleright$ \textbf{Job Completion Time (JCT)}:
We report the average job completion time across all the jobs in each experimental run. 
We report JCT as a fraction of the average JCT for the carbon-agnostic default baseline unless stated otherwise. 
The values can be in the (0, $\infty$) range, with below 1 indicating a reduction in JCT and above 1 indicating an increase in JCT.

$\blacktriangleright$ \textbf{End-to-end Completion Time (ECT)}:
We report the total time to complete all the jobs in a given experiment as the end-to-end completion time (ECT) as a fraction of the ECT for carbon-agnostic default. 
Its values lie in the same range as JCT. 
However, while JCT focuses on individual jobs, ECT represents the system's throughput and efficiency. 
Also, as \DANISH and \CAP focus on minimizing the total carbon footprint for a set of jobs and not the instantaneous rate of carbon consumption, ECT is a better metric for performance with respect to time in our case.

\subsection{Carbon-aware schedulers in action} \label{sec:in-action}

Before moving to our main results, we demonstrate the carbon-aware behavior of our schedulers in \autoref{fig:multiplot}, which visualizes the schedules generated by Decima, \PCAPS and \CAP-FIFO during a short period in the \verb|DE| grid on the simulator.
\PCAPS makes fine-grained scheduling decisions, idling specific executors during the high-carbon period ($t = (5,8)$) while keeping bottleneck tasks running, achieving the lowest carbon footprint of the three schedules shown.  This is in contrast to \CAP-FIFO, which applies a resource quota uniformly across the cluster without consideration of bottlenecks -- note the gaps in \CAP-FIFO's schedule that are straight vertical lines across multiple executors.  Just before $t = 5$, a large gap in the schedule indicates that \CAP-FIFO cannot run tasks because it did not prioritize bottleneck tasks early on.

\begin{table}[h]
\caption{Summary of prototype results averaged over all six carbon traces.  Each metric is normalized with respect to the Spark / Kubernetes default.  \DANISH and \CAP are configured to be moderately carbon aware. %
} 
\vspace{-1em} 
\label{tab:results-prototype}
{\small
\begin{tabular}{|l|l|l|l|l|}
\hline
{\footnotesize \textit{\textbf{\begin{tabular}[c]{@{}l@{}}Metric normalized \\ w.r.t. Default\end{tabular}}}} & {\footnotesize \textbf{Default} } & {\footnotesize \textbf{Decima}~\cite{Hongzi:2019:Decima} } & {\footnotesize \textbf{\CAP} } & {\footnotesize \textbf{\DANISH} } \\ \hline
{\footnotesize CO$_2$ Reduction (\%) } & 0\% & 1.2\% & 24.7\% & \textbf{32.9\%} \\ \hline
{\footnotesize Avg. ECT } & 1.0 & 0.857 & 1.126 & \textbf{1.013 }\\ \hline
{\footnotesize Avg. JCT } & 1.0 & 0.852 & 1.996 & \textbf{1.381} \\ \hline
\end{tabular}
} \vspace{-1em}
\end{table}

\subsection{Prototype experiments} \label{sec:eval-proto}

Our prototype is deployed on an OpenStack cluster running Kubernetes v1.31 and Spark v3.5.3 (both modified per \autoref{sec:impl}) in Chameleon Cloud~\cite{Keahey:20:Chameleon}. 
Our testbed consists of 51 \verb|m1.xlarge| virtual machines, each with 8 VCPUs and 16GB of RAM.  
One VM is designated as the control plane node, while the remaining 50 are workers, each hosting two executor pods.
Our Spark configuration allocates 4 VCPUs and 7GB of RAM to each of the 100 executors\footnote{ Spark's default \textit{memory overhead factor} is 10\%. The difference between 7GB ($\times 2$) and the 16GB RAM of each worker is to accommodate this~\cite{SparkConfiguration}.}.  
To avoid a known issue with Spark's dynamic allocation feature that can cause it to hang on Kubernetes~\cite{SparkKubernetes}, we configure an upper limit of 25 executors that can be allocated to any single job.  
We implement a carbon intensity API that replays historical traces to test our carbon-aware schedulers in the prototype.  
In our prototype, we implement default and Decima as the baselines. 
Unless stated otherwise, the results are averaged over the batch sizes of 25, 50, and 100 jobs. 
Furthermore, the results for each experimental configuration are averaged over 10 trials.

\begin{figure}[t]
    \centering 
    \includegraphics[width=0.8\linewidth]{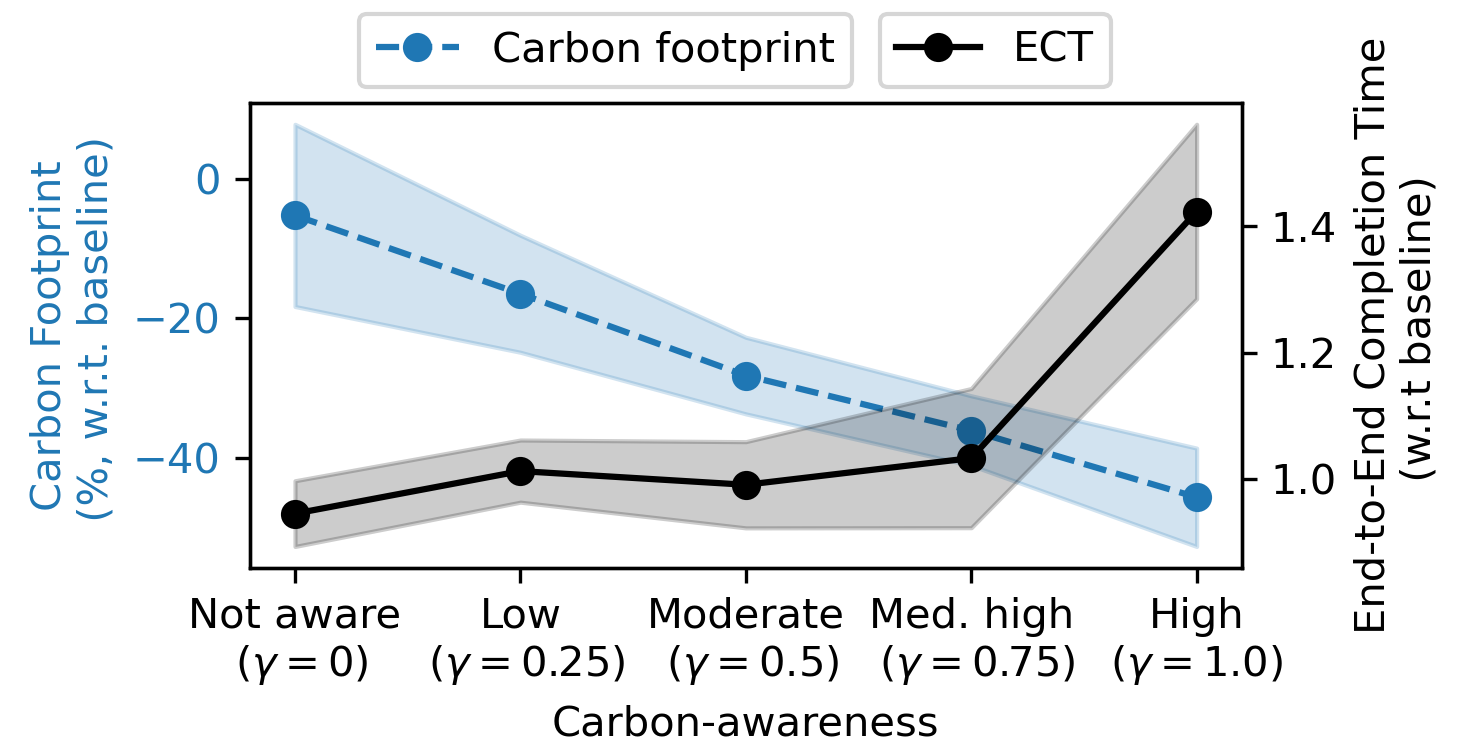} \vspace{-0.4cm}
    \caption{Relative carbon footprint and end-to-end completion times (w.r.t. the Spark/Kubernetes default) for \DANISH in the prototype, with five different degrees of carbon-awareness ($\gamma$). The shaded region denotes the standard deviation across 10 random trials. }
    \label{fig:gamma-tradeoff-proto}
    \includegraphics[width=0.8\linewidth]{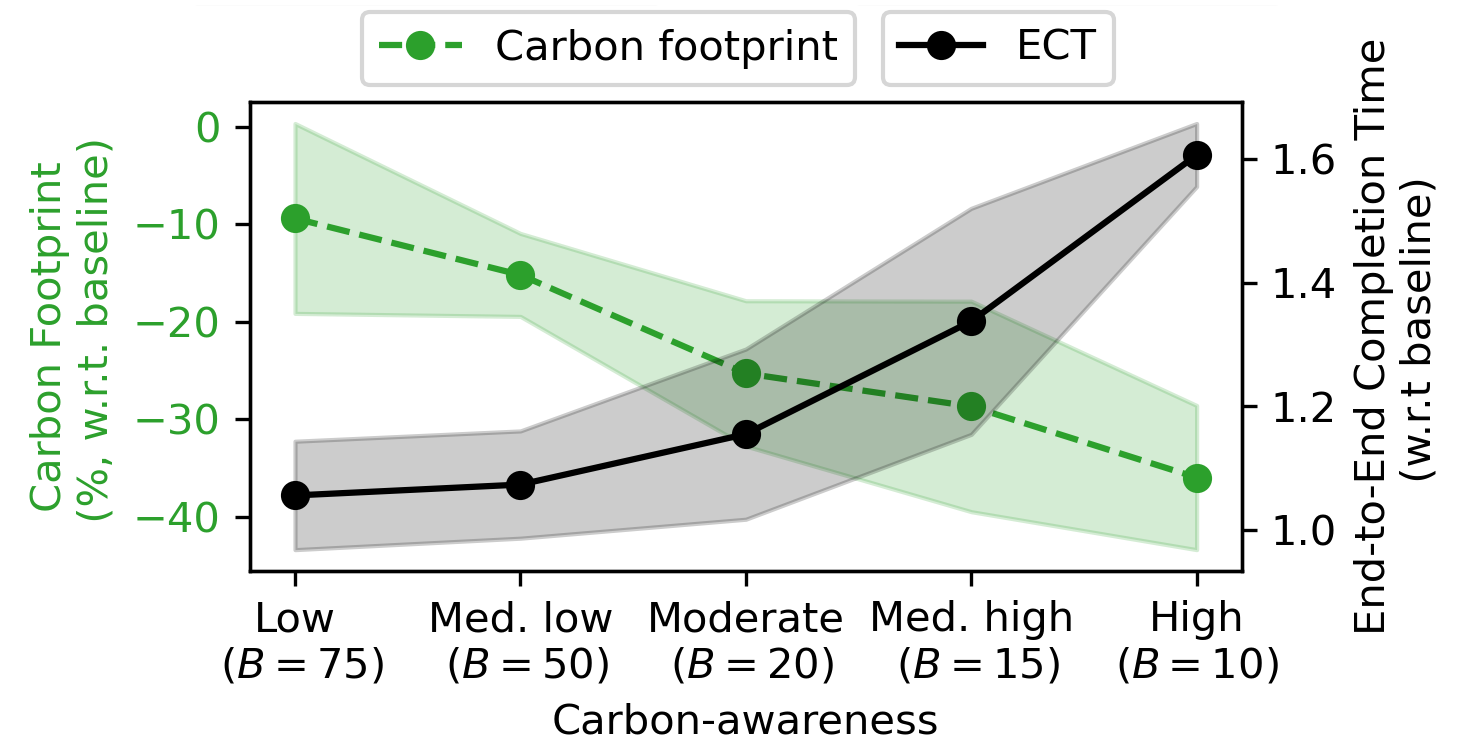} 
    \vspace{-0.4cm}
    \caption{ Relative carbon footprint and end-to-end completion times (w.r.t. the Spark/Kubernetes default) for \CAP in the prototype, with five different degrees of carbon-awareness ($B$).  The shaded region denotes the standard deviation across 10 random trials. }
    \label{fig:B-tradeoff-proto} \vspace{-0.5cm}
\end{figure}

\smallskip
\noindent \textbf{Results. \ } \autoref{tab:results-prototype} presents the results for our prototype experiments.
\DANISH and \CAP configured to be \textit{moderately carbon-aware} (\CAP is configured with $B = 20$ and \DANISH is configured with $\gamma = 0.5$) achieve average carbon reductions of 32.8\% and 24.6\% compared to the default baseline, respectively.  Compared to Decima, \DANISH reduces carbon by 32.1\%.  

In terms of JCT, \DANISH performs significantly worse than the default and Decima baselines; JCT increases by 38.1\% and 62\% with respect to the default and Decima, respectively. 
This is expected since Decima targets minimizing average JCT. 
However, the key objective of \DANISH is to reduce the total carbon footprint without increasing ECT. 
\DANISH increases average ECT by only 12.4\% compared to Decima and 1.3\% compared to the default.  \CAP also performs well and increases average ECT by 12.6\% on top of the default.

\begin{figure}[t]
    \centering 
    \includegraphics[width=0.8\linewidth]{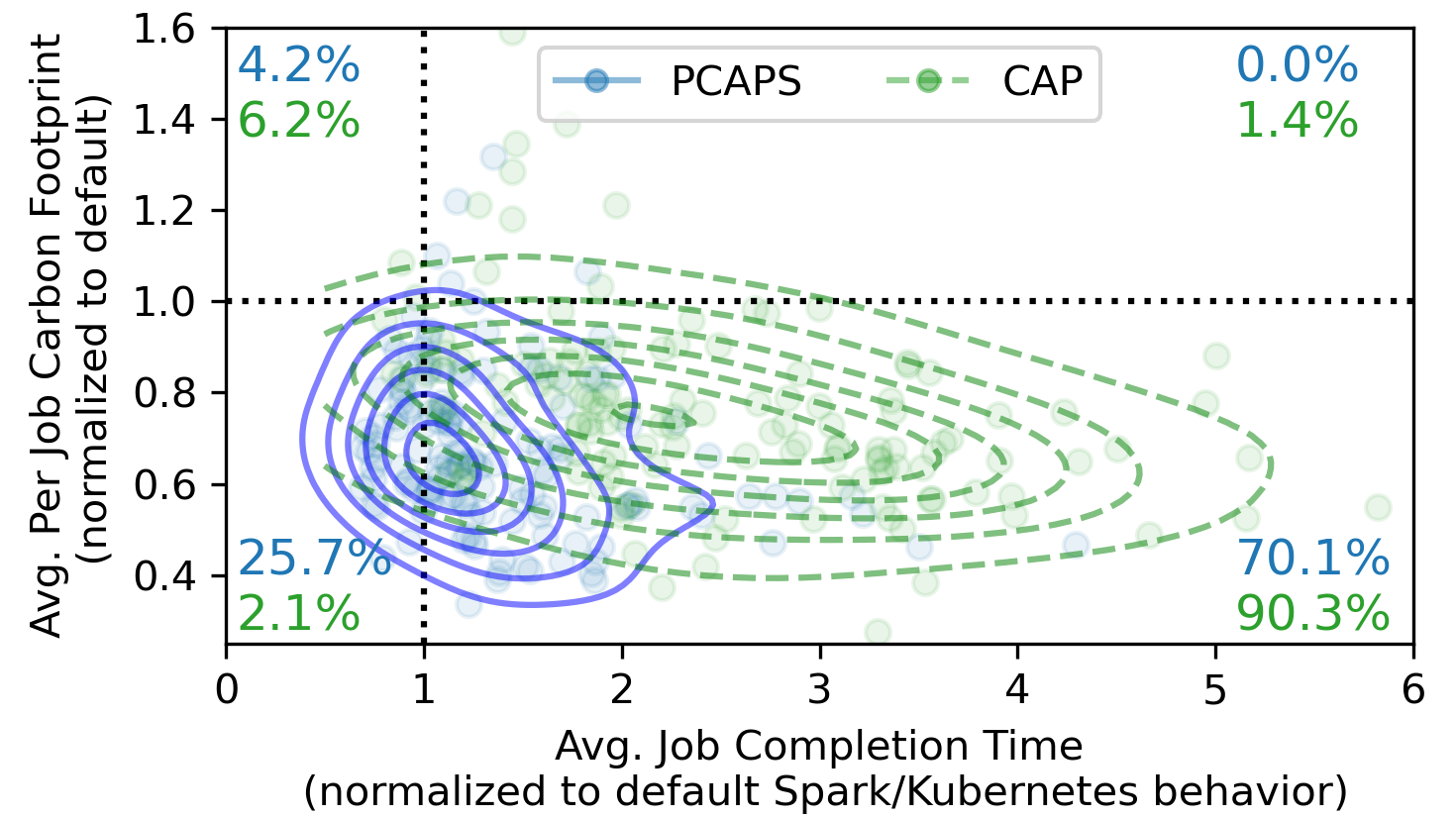} 
    \vspace{-0.4cm}
    \caption{ Per-job metrics for \DANISH and \CAP in the prototype -- each point represents a single trial's average JCT and per-job carbon footprint.  Trials are normalized so that the Spark/Kubernetes default is represented by $(1, 1)$ -- this splits the plot into quadrants, and each is annotated with the percentage of trials it holds (\DANISH is the top percentage, and \CAP is the bottom).  Contour lines outline a Gaussian KDE for each point cluster. %
    }
    \label{fig:scatter-proto} \vspace{-0.6cm}
\end{figure}

\noindent \textbf{Trade-offs between carbon and job completion time. \ }
We next test several parameter settings for \DANISH and \CAP to configure their carbon awareness in the \verb|DE| grid region with batches of 50 TPC-H or Alibaba jobs.  
\autoref{fig:gamma-tradeoff-proto} plots the carbon-time trade-off for five settings of \DANISH relative to the Spark/Kubernetes default.
Increasing the carbon awareness of \DANISH improves carbon savings at the expense of longer ECT, with a most pronounced effect for values of $\gamma$ approaching $1$.  
Conversely, \autoref{fig:B-tradeoff-proto} plots the same carbon-time trade-off for five settings of \CAP; \CAP sacrifices more in ECT (relative to \DANISH) for the same amount of carbon savings.

On a per-job level, we observe similar trends in \autoref{fig:scatter-proto}, which plots average JCT and average \textit{per-job} carbon footprint across trials of prototype experiments where \DANISH and \CAP are configured to be moderately carbon-aware.  Contour lines represent Gaussian kernel density estimators of the \textit{outcome distribution} -- note that the location of each ``hot spot'' represents the scheduler average.  Splitting the plot into quadrants, we observe that \DANISH improves on the baseline scheduler's per-job carbon footprint in 95.8\% of trials, corresponding to the lower two quadrants.  \DANISH improves on \textit{both} carbon and completion time in 25.7\% of cases, while \CAP does so in only 2.1\% of cases.

\noindent \textbf{Effects of carbon intensity trace characteristics. \ }
Next, we analyze the effect of grid characteristics on the carbon-time trade-offs of our carbon-aware schedulers using subsets of each carbon trace (via 30 trials with 25, 50, and 100 jobs).

\autoref{fig:traces-proto} plots the carbon reduction and average ECT of \DANISH, \CAP, and Decima.  
Decima is carbon-agnostic and shows a minimal reduction in carbon that stays relatively constant across all regions.  \DANISH and \CAP incorporate grid behavior into their decisions, and we observe a positive relationship between the variability of a carbon trace and the resulting carbon reduction.
For example, in \verb|ZA| where carbon is relatively constant, high-carbon periods do not prompt \DANISH to defer tasks since the \textit{future potential reductions} are insignificant.  
In contrast, high-carbon periods in the \verb|CAISO| grid correspond to nighttime scenarios on the grid, where the prospect of daytime solar bolsters future potential reductions.
These interactions are further illustrated through ECT -- grid regions with more intermittent and variable energy mixes drive increases in ECT in exchange for more carbon reduction, since \DANISH and \CAP \textit{wait} for potential reductions.

\begin{figure}[t]
    \centering 
    \includegraphics[width=0.9\linewidth]{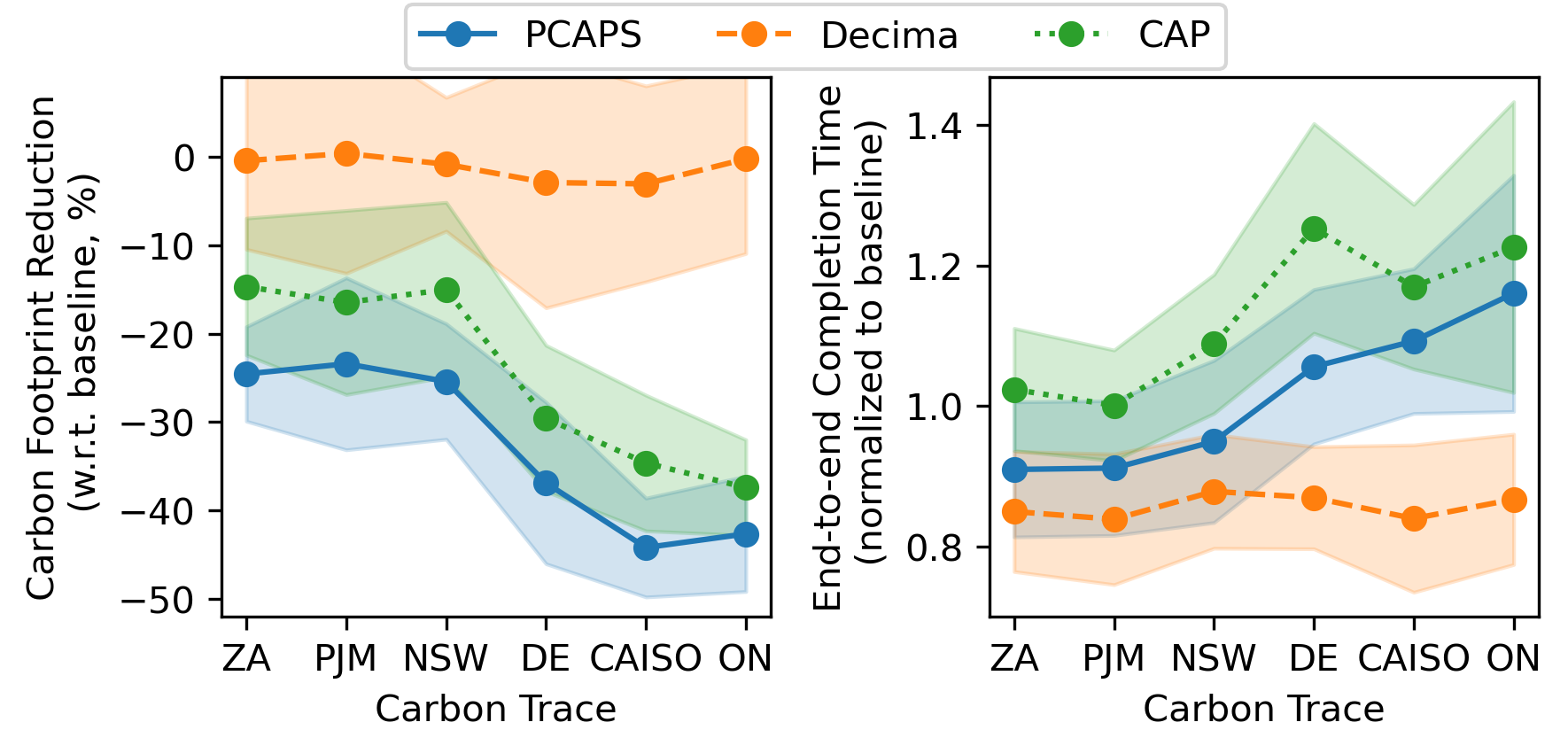}  
    \vspace{-0.4cm}
    \caption{ Carbon reduction \textit{(left)} and ECT \textit{(right)} for \DANISH, \CAP, and Decima in six grid regions. Shaded regions denote standard deviation across 30 trials. }
    \label{fig:traces-proto} 
    \vspace{-0.5cm}
\end{figure}

\vspace{-0.5em}

\subsection{Simulator experiments}
\label{sec:eval-sim}

In the simulator, we evaluate \DANISH and \CAP using TPC-H workloads, comparing them against \textbf{Weighted Fair} and \textbf{GreenHadoop} baselines, in addition to the Decima and default baselines from prototype experiments. 
We renamed the default baseline as FIFO for simulation-based experiments for accurate representation.

\noindent\textbf{Simulator fidelity.} To establish the fidelity of our simulator, we illustrate the granular effect of differences for a batch of 50 TPC-H jobs in \sref{Appendix}{apx:fifo-vs-k8s}.  
A notable difference between the prototype and the simulator is the relative performance of the main baseline (FIFO in the simulator, Spark/Kubernetes default in the prototype).  
In short, the simulator's FIFO scheduler \textit{over-assigns} executors to individual jobs, blocking others from entering service (thus increasing JCT) -- this also increases its relative carbon footprint compared to the default behavior of our prototype.

\begin{table*}[t]
\caption{Summary of results for simulator experiments averaged over all 6 tested carbon traces.  Each metric is normalized with respect to the default Spark FIFO behavior.  \DANISH and \CAP are configured to be moderately carbon-aware, and end-to-end completion time measures the total flow time for batches of jobs arriving continuously. } \vspace{-1em} \label{tab:results-simulator}
{\footnotesize
\begin{tabular}{|l|l|l|l|l|l|l|l|l|}
\hline
\multirow{2}{*}{\footnotesize \textbf{Metric} \textit{(normalized with respect to FIFO)}} & \multirow{2}{*}{\textbf{FIFO}} & \multirow{2}{*}{\textbf{W. Fair}} & \multirow{2}{*}{\textbf{Decima}~\cite{Hongzi:2019:Decima}} & \multirow{2}{*}{\textbf{GreenHadoop}~\cite{Goiri:2012:GreenHadoop}} & \multicolumn{3}{c|}{\textbf{\CAP}} & \multirow{2}{*}{\textbf{\DANISH}} \\ \cline{6-8} 
& & & & & FIFO & W. Fair & Decima & \\ \hline
Carbon Reduction (\%)                                                                & 0\% & 12.1\% & 21.5\% & 8.2\%  & 22.7\% & 34.2\% & 31.1\% & \textbf{39.7\%}                             \\ \hline
Avg. End-to-End Completion Time                                                 & 1.0 & 0.972 & 0.970 & 1.077 & 1.108 & 1.011 & 1.061 & \textbf{1.045 }                            \\ \hline
Avg. Job Completion Time                                                        & 1.0 & 0.652 & 0.654 & 1.918 & 2.274 & 1.217 & 1.479 & \textbf{1.436}                             \\ \hline
\end{tabular}
} \vspace{-0.1cm}
\end{table*}

\smallskip
\noindent \textbf{Results. \ } 
\autoref{tab:results-simulator} presents our top-line results showing \DANISH and \CAP achieve significant reductions in carbon emissions compared to the baselines. 
Configured to be moderately carbon-aware, \DANISH achieves an average reduction of 23.1\% compared to Decima, and a reduction of 39.7\% compared to FIFO.  \CAP achieves an average carbon reduction of 22.7\% when implemented on top of FIFO, 25.1\% on top of Weighted Fair, and 14.5\% on top of Decima. 
For 25, 50, and 100 jobs, \DANISH increases average end-to-end completion time (ECT) by 7.7\% with respect to Decima, which is only a 4.5\% degradation compared to FIFO.  For \CAP, average ECT increases by 10.8\% when implemented on top of (and compared to) FIFO, 4.0\% on top of Weighted Fair, and 9.3\% on top of Decima.   

At a per-job level, \DANISH increases the average job completion time (JCT) by 119.57\% compared to Decima.  Similarly, \CAP increases average JCT by 127.4\%, 86.6\%, and 126.8\% when implemented with FIFO, Weighted Fair, and Decima respectively.  Larger increases in JCT relative to ECT 
happen because \DANISH allows more queue build-up during high-carbon periods, but ``makes up for lost time'' in low-carbon periods.  

\begin{figure}[t]
    \centering 
    \includegraphics[width=0.8\linewidth]{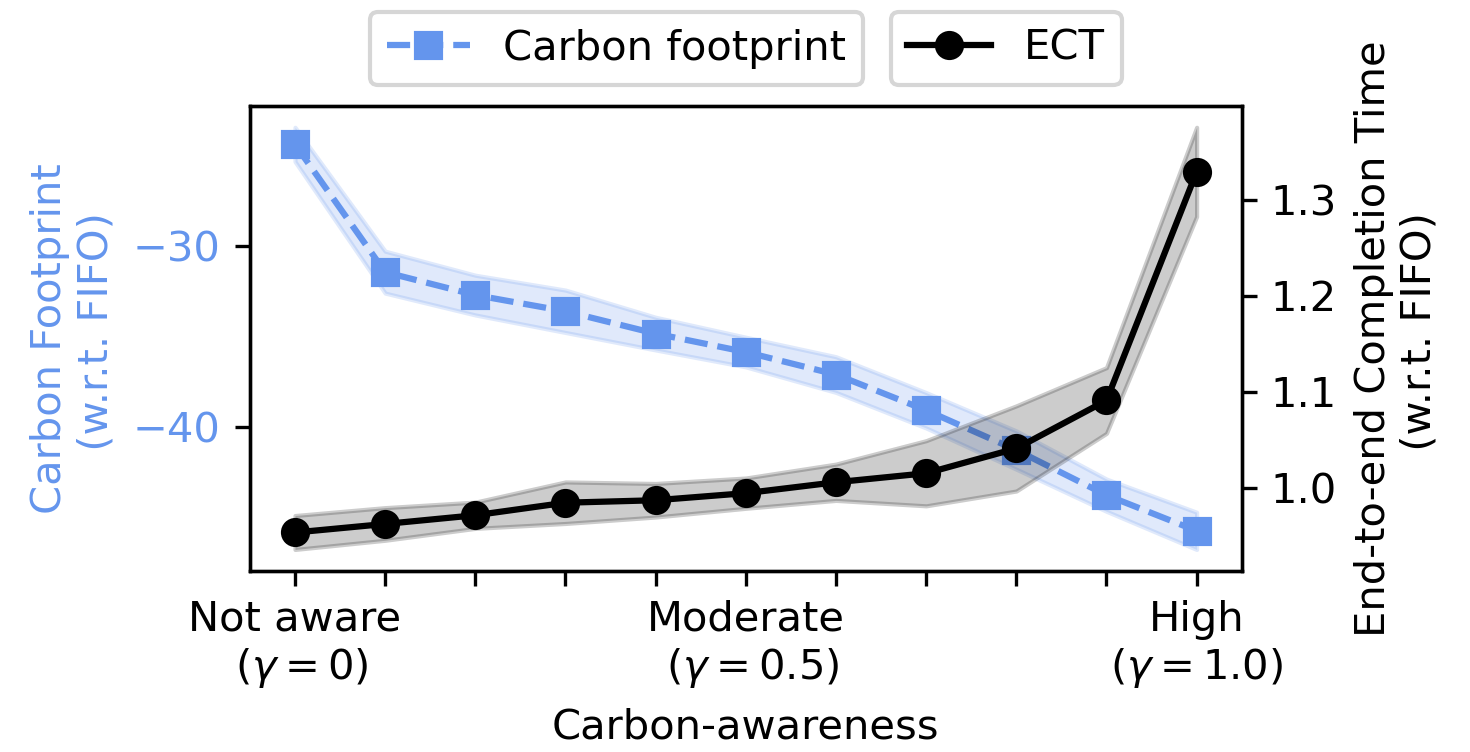} \vspace{-1em}
    \caption{ Relative carbon footprint and end-to-end completion times (with respect to FIFO) for \DANISH in simulator experiments, given different values of $\gamma$ that correspond to degrees of carbon-awareness. Shaded region denotes standard deviation across carbon trace. }
    \vspace{-0.5cm}
    \label{fig:gamma-tradeoff-sim}
\end{figure}
\begin{figure}[h]
    \centering 
    \includegraphics[width=0.8\linewidth]{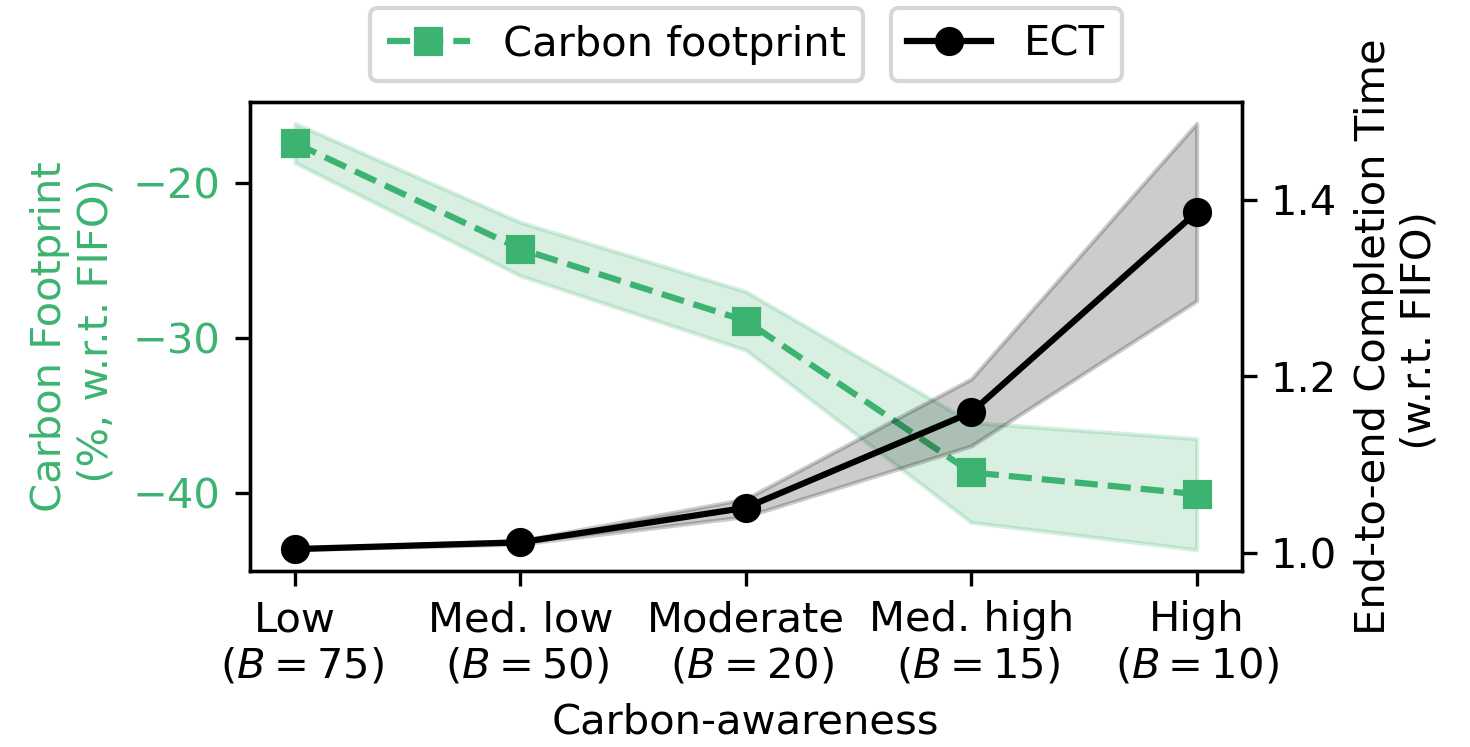} \vspace{-1em}
    \caption{ Relative carbon footprint and end-to-end completion times (with respect to FIFO) for \CAP-FIFO in simulator experiments, given different values of $B$ that correspond to degrees of carbon-awareness.  Shaded region denotes standard deviation across carbon trace. }
    \label{fig:B-tradeoff-sim} \vspace{-0.5cm}
\end{figure}

\noindent \textbf{Trade-offs between carbon and job completion time. \ }
In the results summary, we observe positive carbon reduction in exchange for degradation in JCT and ECT.  Since \DANISH and \CAP can be configured to be more or less carbon-aware, we explore this \textit{trade-off} in the \verb|DE| grid with batches of 50 jobs.  We vary hyperparameters $\gamma$ (\DANISH) and $B$ (\CAP) to measure the impact of configuration on both carbon and JCT.

In \autoref{fig:gamma-tradeoff-sim}, we illustrate this trade-off for \DANISH compared against FIFO.  As the carbon-awareness of \DANISH increases (indicated by the value of $\gamma$), the carbon savings of \DANISH improve at the expense of longer ECT.  This effect is most pronounced for large values of $\gamma$ approaching $1$, because $\DANISH$ defers many tasks to lower carbon periods.
Conversely, \autoref{fig:B-tradeoff-sim} illustrates the trade-off for \CAP-FIFO, showing a similar trend of improving carbon at the expense of longer ECT.  Compared to \autoref{fig:gamma-tradeoff-sim}, \CAP-FIFO sacrifices more in terms of ECT for the same or lower amounts of carbon savings, and the increase in completion time begins earlier (at lower degrees of carbon-awareness).

\noindent \textbf{Advantages of relative importance. \ }
Between \DANISH and \CAP-Decima, the carbon-agnostic scheduler is identical -- thus, performance differences can be attributed to the key ideas behind \DANISH, namely relative importance (see \autoref{sec:danish-design}).  In what follows, we examine this in detail using the \verb|DE| grid region with batches of 50 jobs.  We configure \DANISH and \CAP-Decima with ten parameter settings for varying carbon-awareness.  \autoref{fig:capdecima-vs-danish} plots the result of this experiment, where each dot denotes the outcome of one trial.  We fit a cubic polynomial to the outcomes of both methods to illustrate the key trend: \DANISH exhibits a strictly better trade-off between carbon footprint and ECT.  For trials where either method achieves carbon savings between 35\% and 45\%, \DANISH increases ECT by an average of 7.9\%, while \CAP-Decima increases it by an average of 42.7\%.  Conversely, for those trials where either method increases ECT by between 0\% and 10\%, \DANISH achieves average carbon savings of 35.6\%, while \CAP-Decima achieves an average savings of only 20.1\%.

\begin{figure}[t]
    \centering 
    \includegraphics[width=0.8\linewidth]{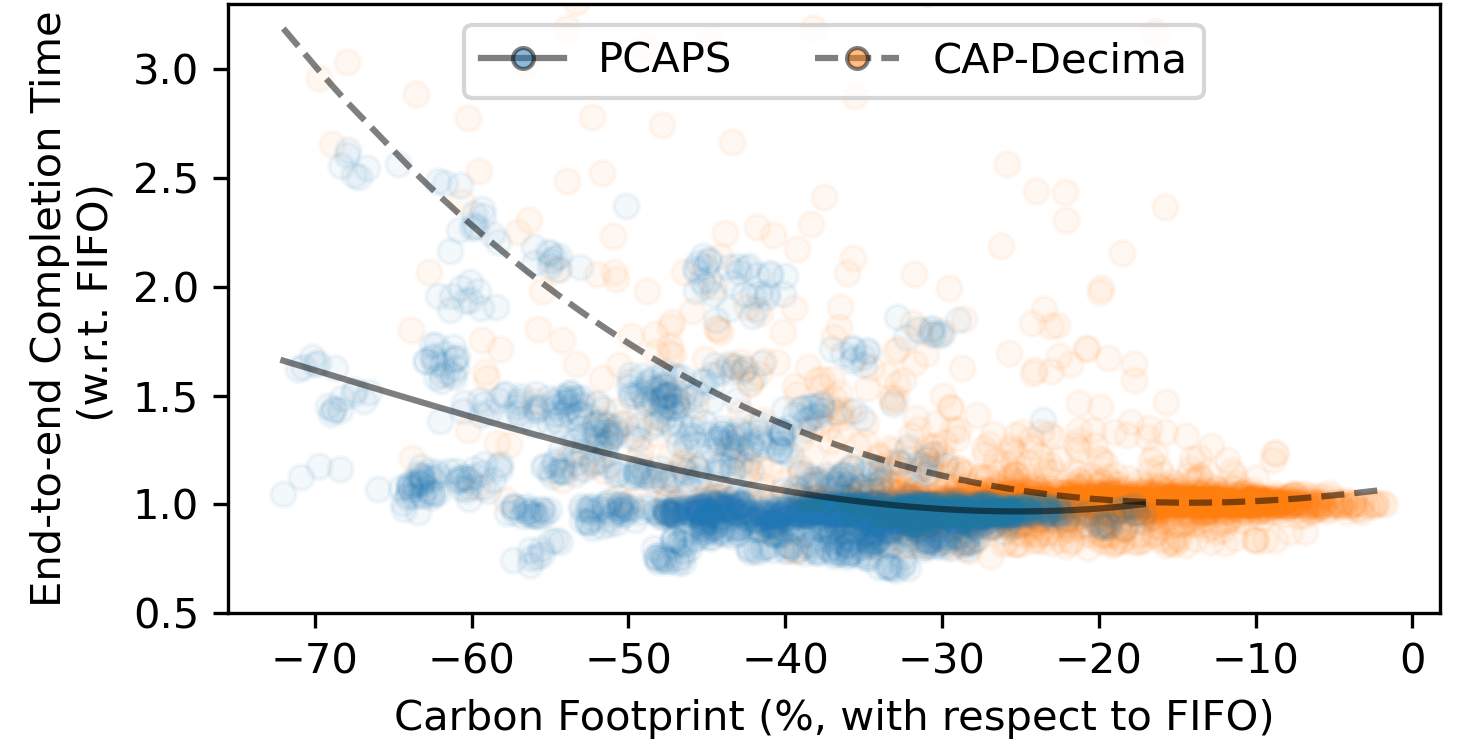} \vspace{-1em}
    \caption{ Relative carbon footprint %
    vs. end-to-end completion time %
    for \DANISH and \CAP-Decima in simulator experiments, given varying parameters $\gamma \in [0.1, 1.0]$ and $B \in \{ 5, 10, \dots , 85\}$ that correspond to degrees of carbon-awareness.  Each dot represents an individual trial, and lines represent a cubic polynomial of best fit. } %
    \label{fig:capdecima-vs-danish} \vspace{-0.6cm}
\end{figure}

\noindent \textbf{Effects of carbon intensity trace characteristics. \ } The top-line results in \autoref{tab:results-simulator} average over all six grid regions.  
We configure both schedulers to be moderately carbon-aware and characterize each carbon trace based on its \textit{coefficient of variation}. In \autoref{fig:traces-sim}, we plot the carbon reduction and ECT for each of \CAP-FIFO, \DANISH, and Decima.  
Decima's carbon reduction relative to the 
default is higher relative to that observed in the prototype -- this is due to differences between Spark's standalone FIFO scheduler and the Spark/Kubernetes behavior of our prototype (see \sref{Appendix}{apx:fifo-vs-k8s}).  Similarly, \DANISH's ``baseline'' carbon reductions increase alongside Decima's, and \CAP gains relative ground compared to \CAP-FIFO in the simulator.
We observe similar trends overall -- grid regions with more intermittent and variable energy mixes due to renewables drive increases in both carbon reduction and ECT, as observed in \autoref{fig:traces-proto}.

\begin{figure}[t]
    \centering 
    \vspace{-0.5em}
    \includegraphics[width=0.9\linewidth]{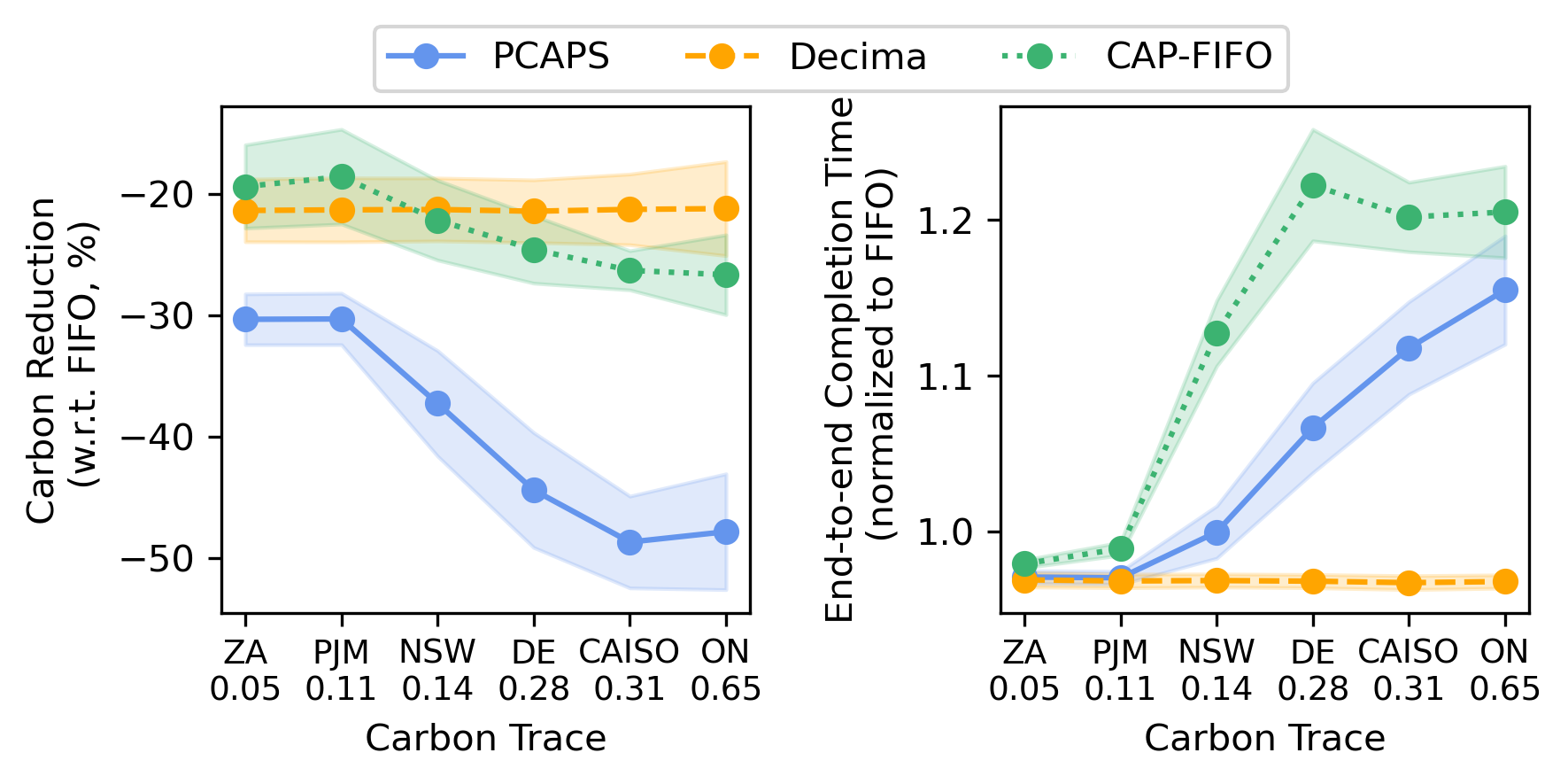} \vspace{-0.4cm}
    \caption{ Carbon reduction \textit{(left)} and increase in end-to-end completion time \textit{(right)} for \CAP, \DANISH, and Decima (relative to FIFO) in six grid regions.  Shaded areas denote the standard deviation across a carbon trace. }
    \label{fig:traces-sim} 
    \vspace{-0.4cm}
\end{figure}

\subsection{Takeaways}
Through evaluation in a realistic simulator and a prototype cluster, we show that a moderately carbon-aware \DANISH reduces carbon emissions by up to 39.7\% in exchange for modest increases ($< 10\%$) in ECT for batches of 25, 50, and 100 data processing jobs.  
\CAP configured to be moderately carbon-aware ($B=20$) is also effective, reducing carbon by up to 25.1\% with respect to the scheduler it is implemented on, in exchange for slightly larger increases in ECT.  While \CAP does not take dependencies into account and is suboptimal in terms of the carbon-time trade-off (see \autoref{fig:capdecima-vs-danish}), it is easier to implement and more general than \DANISH.
Intuitively, the performance of all carbon-aware techniques exhibits a dependence on the time-varying behavior of the power grid.
Greater carbon savings can be achieved in regions with more renewables (thus more variability) in carbon intensity.

\section{Conclusion} \label{sec:conclusion}

\DANISH demonstrates that an augmented scheduler that directly takes both the carbon cost of computation and precedence constraints (e.g., in the DAG of a data processing job) into consideration can achieve a favorable trade-off between carbon savings and completion time.  Through experiments, we show that \DANISH's configurability enables a scheduling policy that meaningfully reduces the carbon footprint of data processing without prohibitive increases in completion time.  Furthermore, our detailed analytical and experimental study of \CAP provides another avenue towards configurable carbon-aware scheduling that is broadly applicable and easy-to-implement.

\balance
\bibliographystyle{plainnat}
\bibliography{paper.bib}

\clearpage
\appendix
\onecolumn
\section*{Appendix}

\section{Evaluation Supplements}

In this section, we give additional experiment setup details and results that are deferred from the main body (i.e., in \autoref{sec:eval})

\subsection{Deferred setup details} \label{apx:extra-details}

\subsubsection{\textbf{GreenHadoop~\cite{Goiri:2012:GreenHadoop} adaptation and implementation}} \label{apx:greenhadoop}

This implementation begins by considering green (renewable) energy and brown (non-renewable) energy values available in the carbon traces obtained from Electricity Maps~\cite{electricity-map}. The system derives a "green window" by iterating over future minutes and summing the portion of executor capacity that can be powered purely by renewable energy sources until it meets the outstanding workload. Similarly, a "brown window" is derived by computing the number of executor minutes to finish the outstanding workload, assuming the system is run at maximum executor capacity. These two windows bracket the best-case scenarios of optimizing purely for carbon and for time. 

To balance carbon usage and JCT, the system derives a final window size via a convex combination of the green and brown windows, modulated by a tunable parameter, $\theta$, which details the carbon-awareness of the system. By default, $\theta$ is set to $0.5$, which balances between $0$ (carbon-agnostic) and $1$ (fully-carbon-aware) At each scheduling decision, the system utilizes all of the available green energy in the future timesteps, and computes the fraction of available brown energy required to complete all jobs by the end of our convex window. Within that determined executor limit, tasks are dispatched using a standard FIFO queue, similar to how inter-job dependencies are managed in Hadoop. While this approach separates the carbon-aware resource provisioning from the job ordering logic, it does not embed the carbon-importance of tasks within each job.

\subsubsection{\textbf{Differences between Spark standalone FIFO baseline and default Spark / Kubernetes behavior}} \label{apx:fifo-vs-k8s}

In the main body, in \autoref{sec:eval-proto}, we discuss some notable  differences between the \textit{baselines} in the prototype experiments and the simulator experiments.
In the simulator experiments, we use the default Spark standalone cluster scheduler (i.e., FIFO) as a baseline.  In contrast, as the prototype is implemented on top of Spark and Kubernetes, we use the combined default behavior of the Spark DAG scheduler and Kubernetes scheduler as a baseline.

The difference between these baselines is particularly pronounced when comparing e.g., the relative carbon footprint of Decima against the primary baseline -- in the simulator, Decima's carbon footprint is a 21.5\% improvement over FIFO, while in the prototype Decima improves on the default behavior by only 1.2\% -- similar trends are evident in average job completion time.  This difference in performance can be attributed to a difference in how per-job parallelism limits are set by the FIFO scheduler in the simulated Spark standalone environment vs. the Spark/Kubernetes behavior configured on our prototype cluster.  In the Spark standalone mode of operation, the default FIFO behavior assigns up to $N$ executors to each stage of a job, where $N$ is the number of tasks within said stage.  In contrast, our Spark/Kubernetes cluster scheduler is configured to assign up to $25$ executors across all stages of any job to avoid an issue where executors from previously completed stages continue to use cluster resources, causing Spark to hang.  

\begin{figure*}[h]
    \vspace{-1em}
    \centering
    \includegraphics[width=0.9\linewidth]{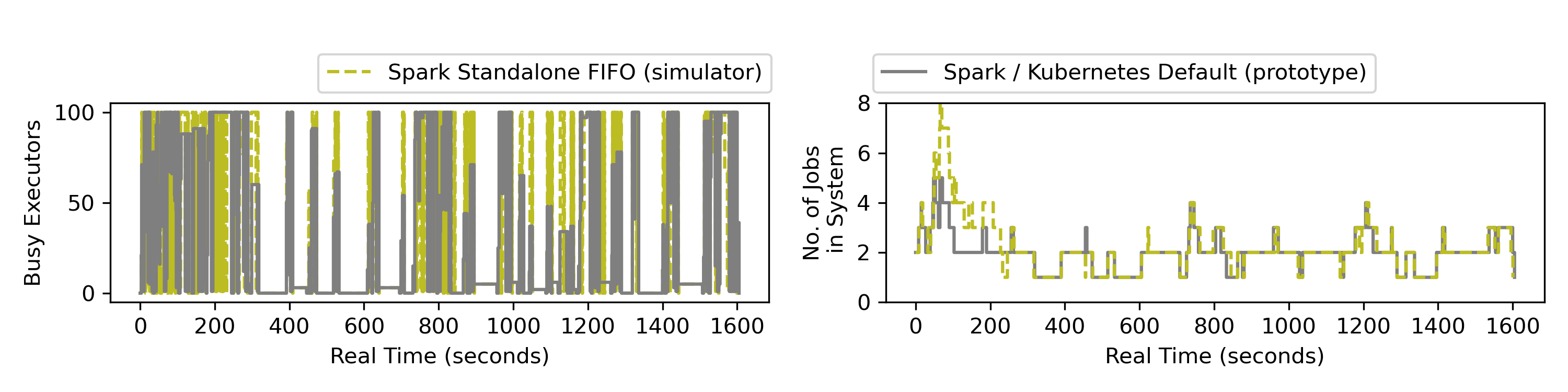} \vspace{-1em}
    \caption{ Executor usage (left) and number of jobs in the system (right) for an identical group of 50 TPC-H jobs in both the simulator (FIFO) and prototype (Spark / Kubernetes default) clusters, both with a maximum of 100 executors. }
    \label{fig:fifo-vs-k8s-timeseries}
\end{figure*}

In \autoref{fig:fifo-vs-k8s-timeseries}, we illustrate the granular effect of these different policies for an experiment of 50 TPC-H jobs with identical ordering and identical interarrival times in the simulator and prototype, respectively.  The left-hand side plots the number of busy executors, while the right-hand side plots the number of jobs in the system.  Note that the number of busy executors in the Spark / Kubernetes prototype more frequently drops below 100, particularly when the number of jobs in the system is low (e.g., between 1000 - 1200 seconds).  This corresponds to the executor cap configured for the Spark / Kubernetes default behavior, which results in more efficient executor usage and reduced blocking (i.e., when a job enters the system, it is more likely that there will be free executors to work on it).  As a result, the Spark / Kubernetes behavior improves on the Spark standalone FIFO scheduler by 18.8\% in carbon footprint and 22.1\% in average job completion time in an experiment with 50 TPC-H jobs, mirroring the broader trends observed in our simulator and prototype experiments.

\subsection{Deferred experiments} \label{apx:exp}

In this section, we present the results of additional experiments in the simulator and the prototype that are omitted from the main body due to space considerations.

\subsubsection{\textbf{Impact of total number of jobs}}

In the main results presented in \autoref{sec:eval}, we evaluate the performance of tested algorithms under experiments with 25, 50, and 100 continuously arriving jobs.  We explore the impact of varying this number of jobs in each experiment below, with \DANISH and \CAP configured to be moderately carbon-aware, in the \verb|DE| grid region.

\autoref{fig:num-jobs-sim} plots the average carbon reduction, end-to-end completion time, and average per-job completion time with respect to FIFO for \DANISH, Decima, and \CAP on top of FIFO in the simulator environment, using experiments with 12, 25, 50, 100, and 200 jobs.  We find that the relative ordering of all three techniques largely stays constant, although measuring results for a small number of jobs (e.g., 12, 25) is intuitively prone to more variance as the end-to-end results are more easily impacted by e.g., one or two random jobs.  Out of the three metrics, carbon footprint is the most stable.  As the number of jobs increases, all metrics ``converge'' to some average behavior.  One interesting exception is \CAP-FIFO -- due to the ``blocking'' behavior of the FIFO scheduler in the simulator that is more prone to queue build-up (e.g., see \autoref{apx:fifo-vs-k8s}), a larger total number of jobs results in longer job completion times for \CAP-FIFO.

\begin{figure*}[h]
\begin{minipage}{0.32\linewidth}
        \centering
    \includegraphics[width=\linewidth]{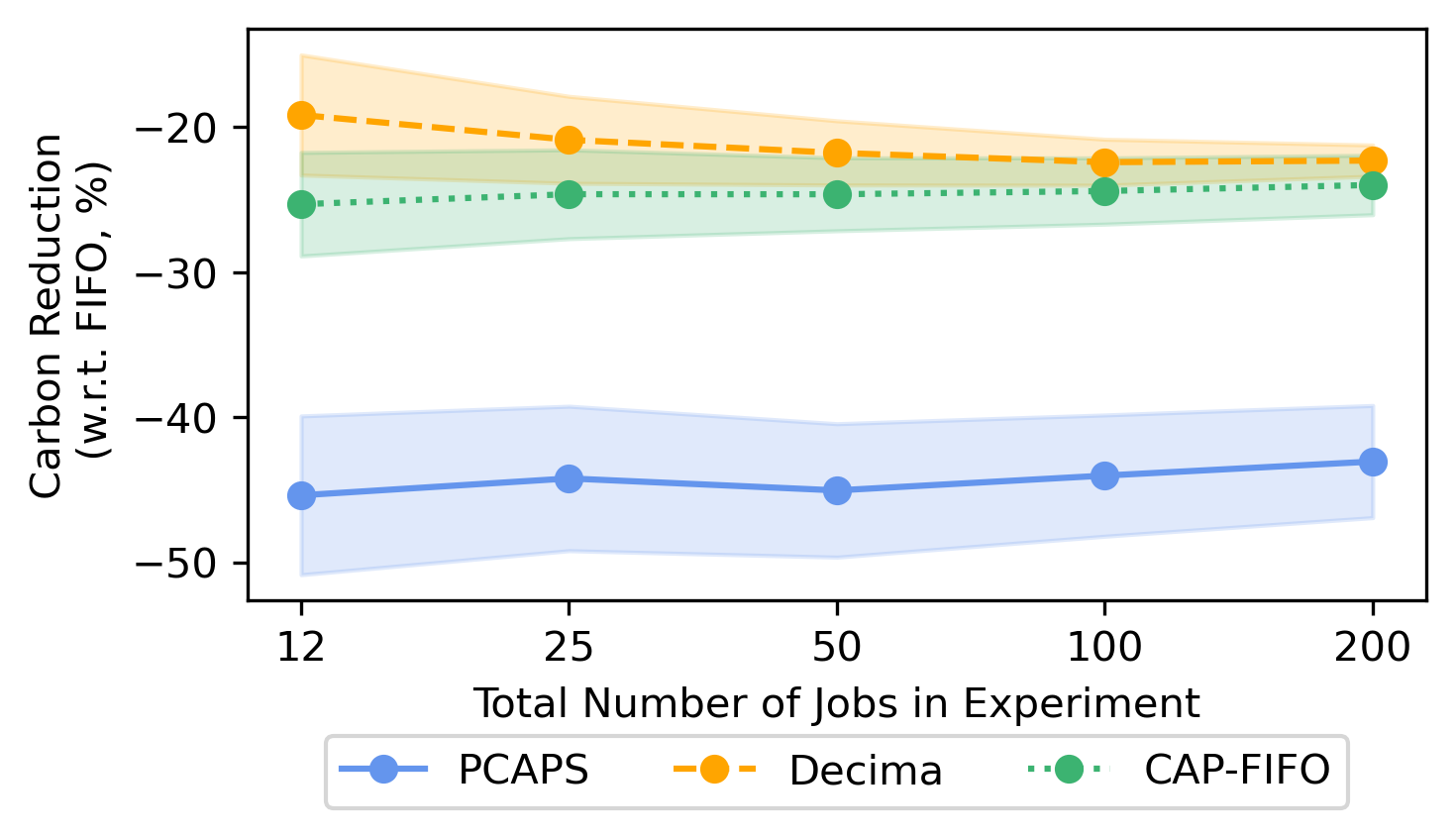} \vspace{-1em}
    {\centering \textbf{\textit{(a)}}}
\end{minipage} \hfill
\begin{minipage}{0.32\linewidth}
        \centering
    \includegraphics[width=\linewidth]{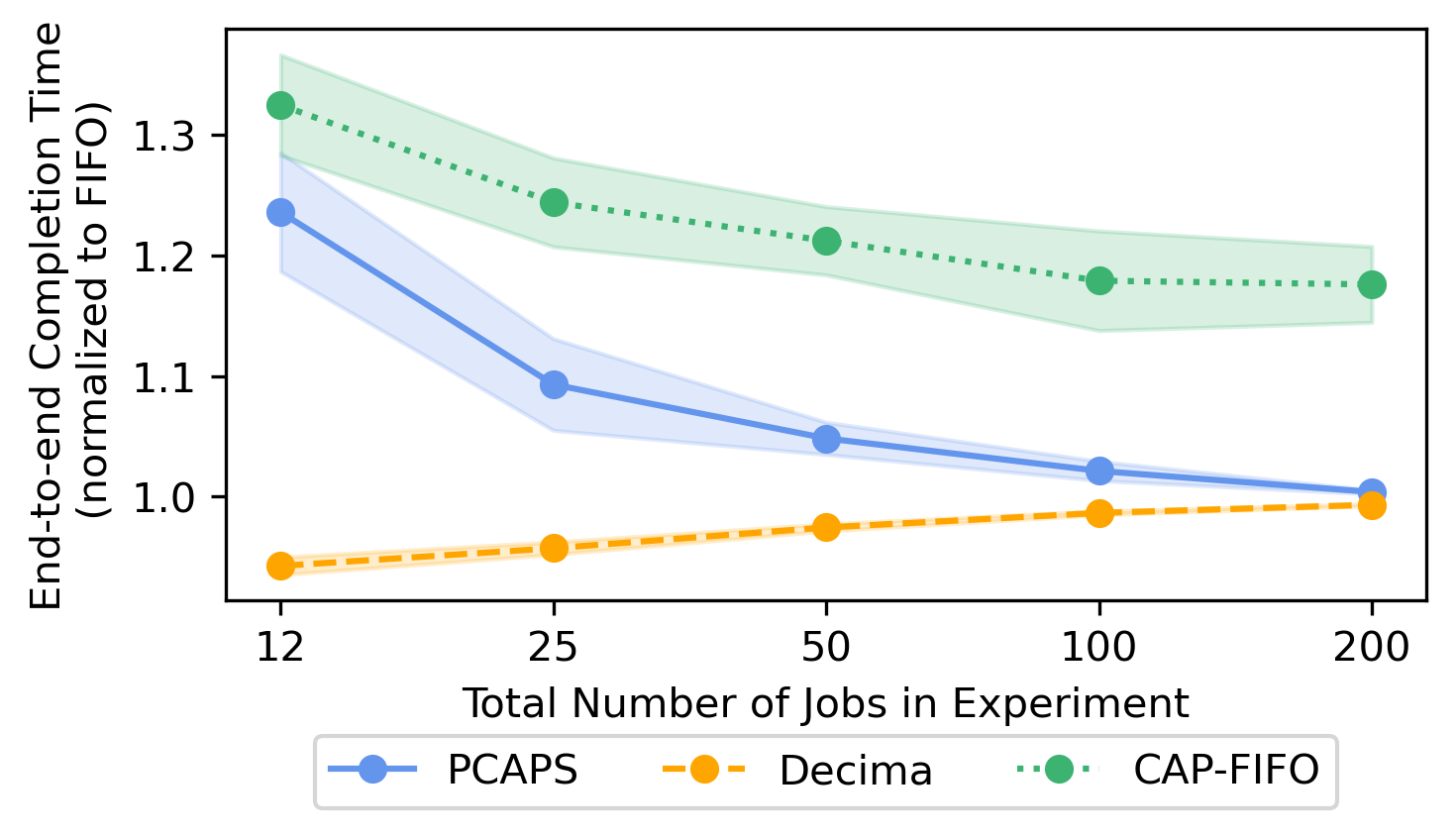} \vspace{-1em}
    {\centering \textbf{\textit{(b)}}}
\end{minipage} \hfill
\begin{minipage}{0.32\linewidth}
        \centering
    \includegraphics[width=\linewidth]{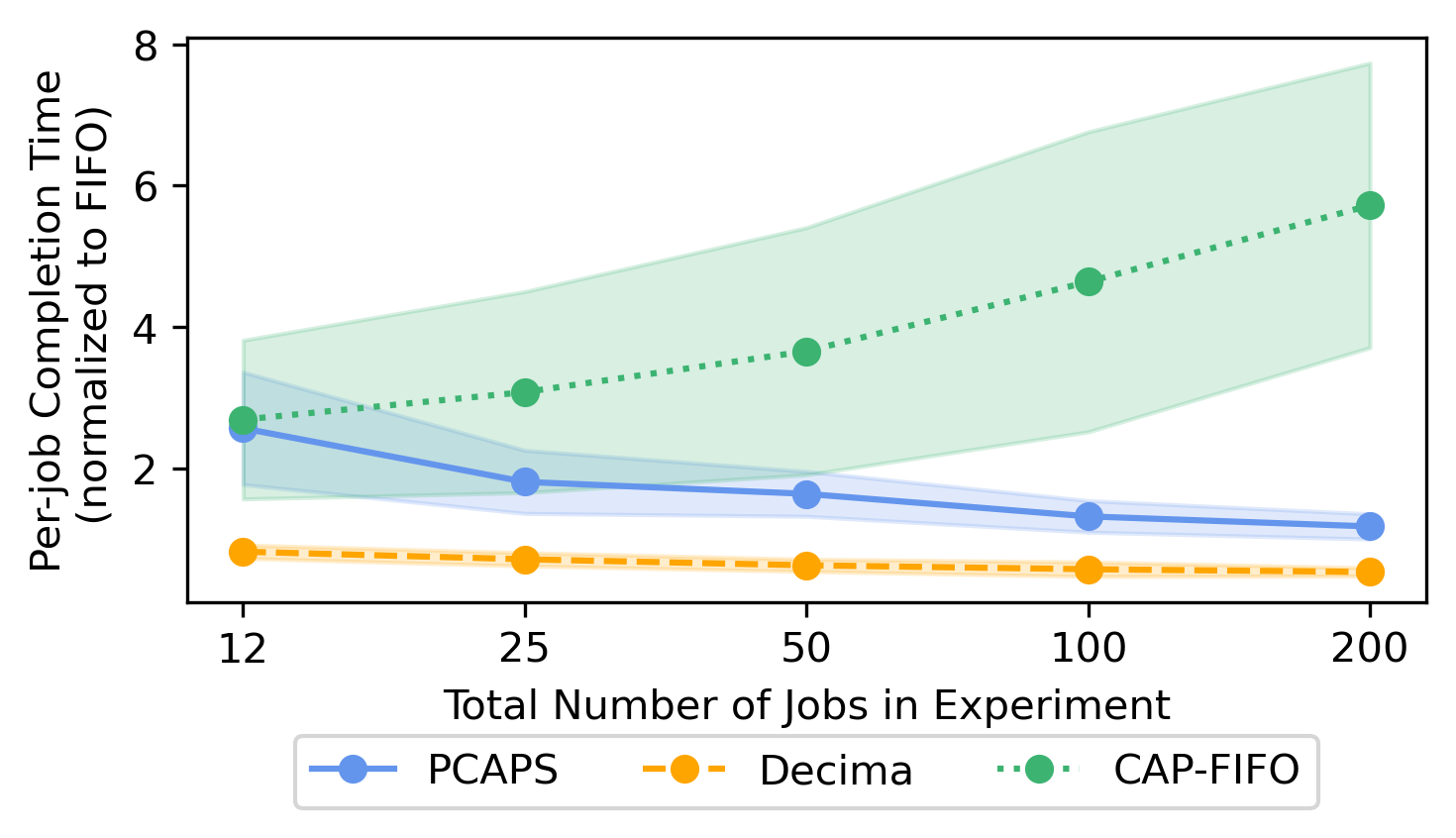} \vspace{-1em}
    {\centering \textbf{\textit{(c)}}}
\end{minipage}
\caption{  \textbf{\textit{(a)}} Carbon reduction, \textbf{\textit{(b)}} end-to-end completion time, and \textbf{\textit{(c)}} average job completion time achieved by \DANISH, \CAP-FIFO, and Decima (relative to FIFO) in a single grid region for varying experiment sizes.  Shaded regions denote the standard deviation across the entire carbon trace.  } \label{fig:num-jobs-sim}
\end{figure*}
\begin{figure*}[h]
\begin{minipage}{0.32\linewidth}
        \centering
    \includegraphics[width=\linewidth]{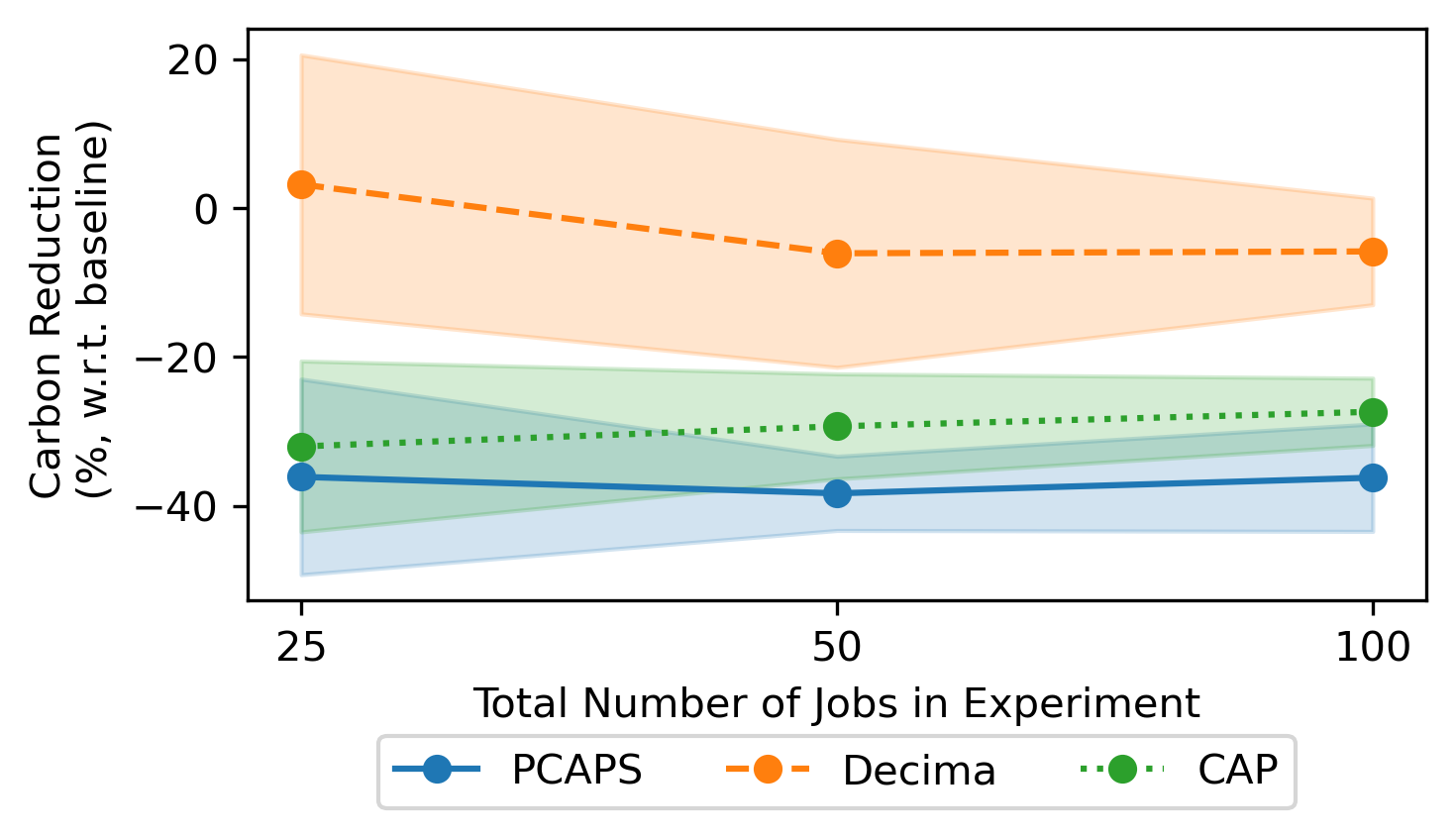} \vspace{-1em}
    {\centering \textbf{\textit{(a)}}}
\end{minipage} \hfill
\begin{minipage}{0.32\linewidth}
        \centering
    \includegraphics[width=\linewidth]{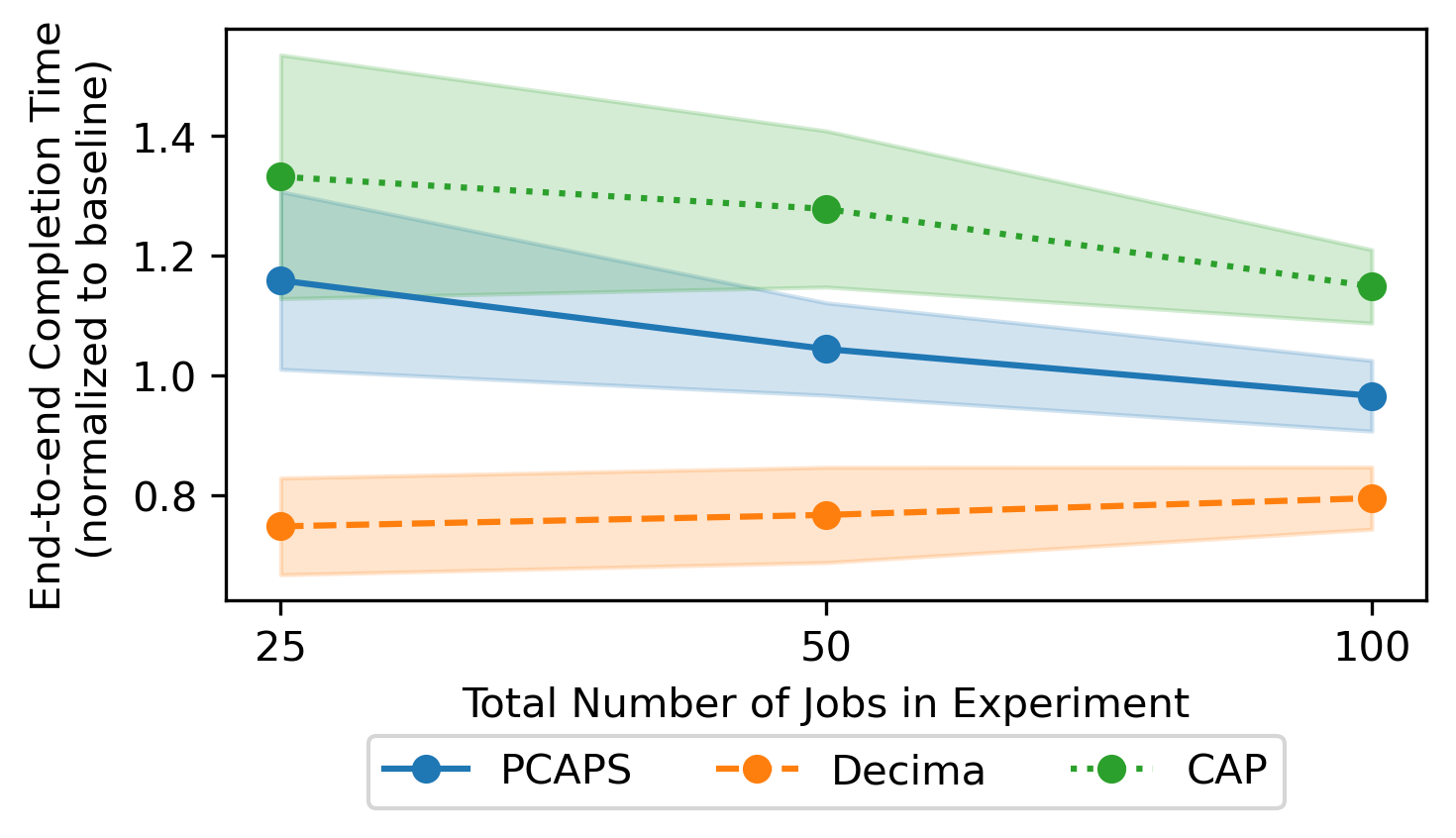} \vspace{-1em}
    {\centering \textbf{\textit{(b)}}}
\end{minipage} \hfill
\begin{minipage}{0.32\linewidth}
        \centering
    \includegraphics[width=\linewidth]{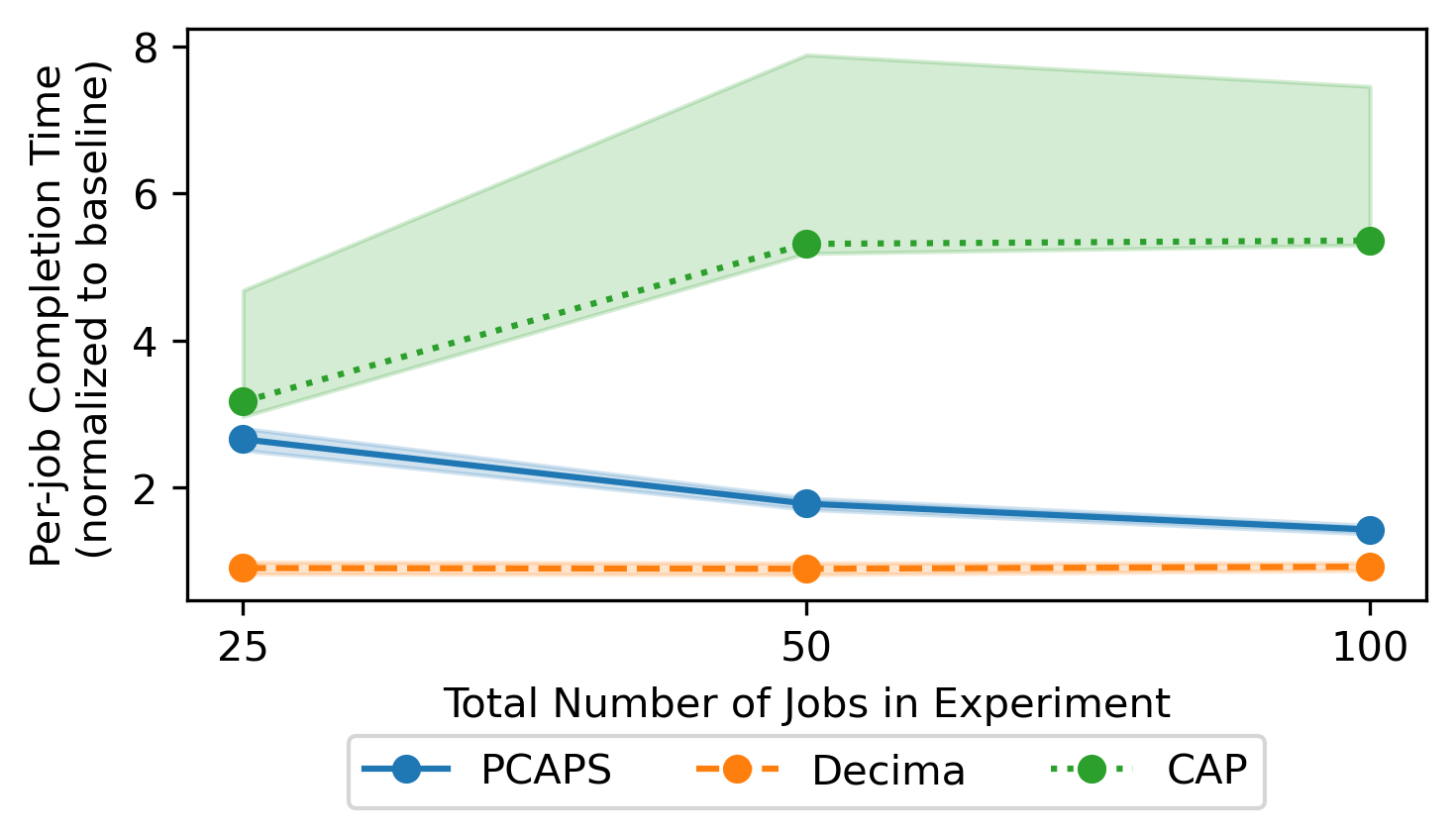} \vspace{-1em}
    {\centering \textbf{\textit{(c)}}}
\end{minipage} 
\caption{ \textbf{\textit{(a)}} Carbon reduction, \textbf{\textit{(b)}} end-to-end completion time, and \textbf{\textit{(c)}} average job completion time achieved by \DANISH, \CAP, and Decima (relative to the Spark/Kubernetes default) in a single grid region for varying experiment sizes.  Shaded regions denote the standard deviation across 10 random trials. } \label{fig:num-jobs-proto}
\end{figure*}

In \autoref{fig:num-jobs-proto}, we plot the same metrics with respect to the default Spark/Kubernetes scheduler in the prototype implementation for \DANISH, Decima, and \CAP, using experiments with 25, 50, and 100 jobs.  These results generally mirror those shown in the simulator above, although \CAP implemented on top of the default Spark/Kubernetes behavior does not exhibit the same increase in per-job completion time for larger experiment sizes that \CAP-FIFO does in the simulator -- this is because the blocking behavior of FIFO is more pronounced than in the default Spark/Kubernetes scheduler (e.g., see \autoref{apx:fifo-vs-k8s}).

\subsubsection{\textbf{Impacts of submission rate}}
In the main results presented in \autoref{sec:eval}, we evaluate the performance of tested algorithms under continuous job arrivals following a Poisson process with an average interarrival time of $\nicefrac{1}{\lambda} = 30$ minutes ($30$ seconds in real time).  In what follows, we explore the impact of varying this interarrival time below, where note that smaller interarrival times imply that the cluster is more heavily utilized.  \DANISH and \CAP algorithms are both configured to be moderately carbon-aware, and the tested grid region is \verb|DE|.

In \autoref{fig:lambda-sim}, we plot the average carbon reduction, end-to-end completion time, and average per-job completion time with respect to FIFO for \DANISH, Decima, and \CAP-FIFO in the simulator environment.  We find that the relative ordering of algorithm performance remain largely the same, but differences arise particularly for small interarrival times -- in these scenarios where the cluster is more heavily utilized, we find that both \DANISH and Decima benefit from more intelligent scheduling decisions with respect to FIFO, which manifests in higher carbon reductions relative to FIFO and lower end-to-end completion times.

\begin{figure*}[h]
\begin{minipage}{0.32\linewidth}
        \centering
    \includegraphics[width=\linewidth]{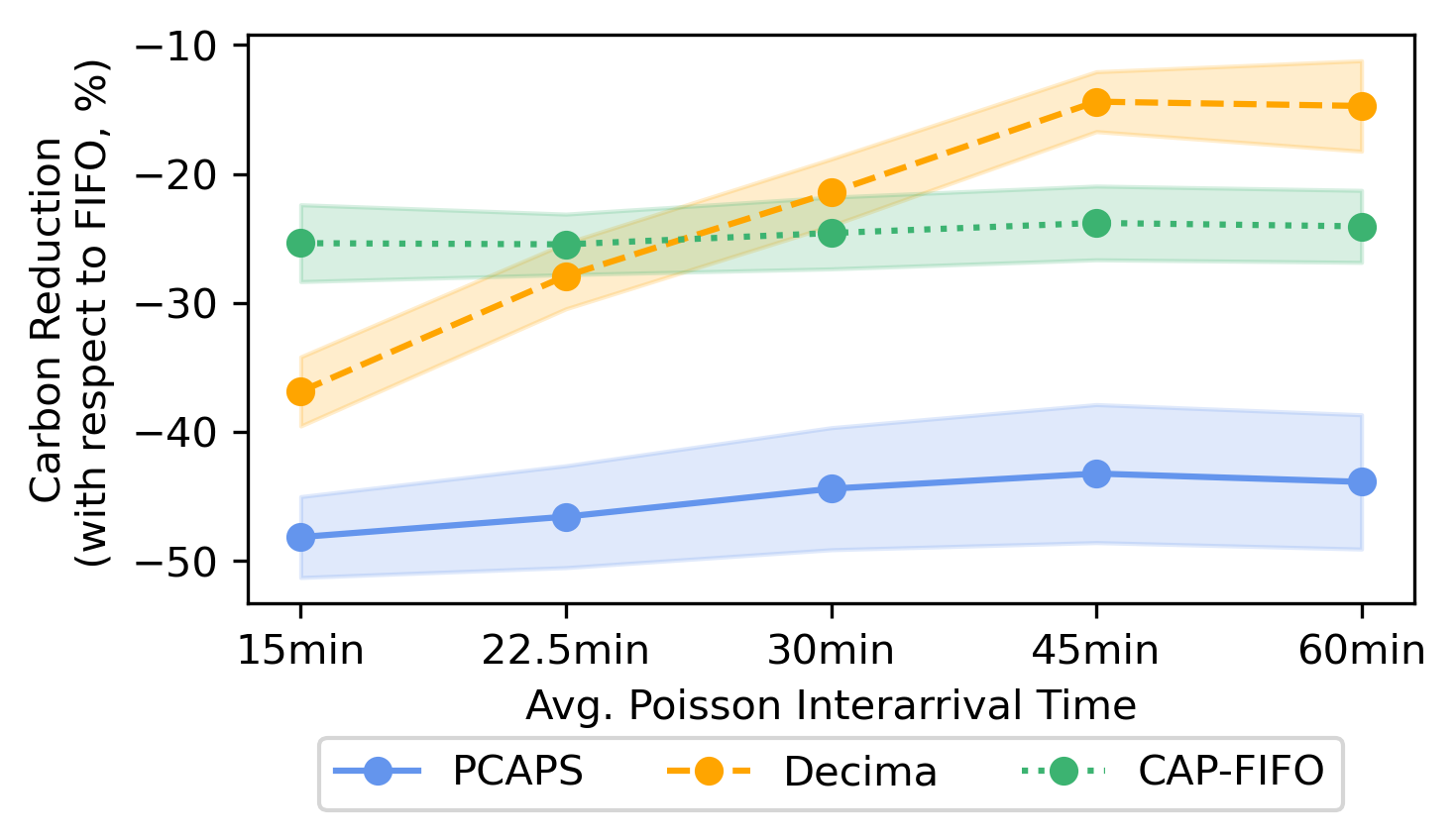} \vspace{-1em}
    {\centering \textbf{\textit{(a)}}}
\end{minipage} \hfill
\begin{minipage}{0.32\linewidth}
        \centering
    \includegraphics[width=\linewidth]{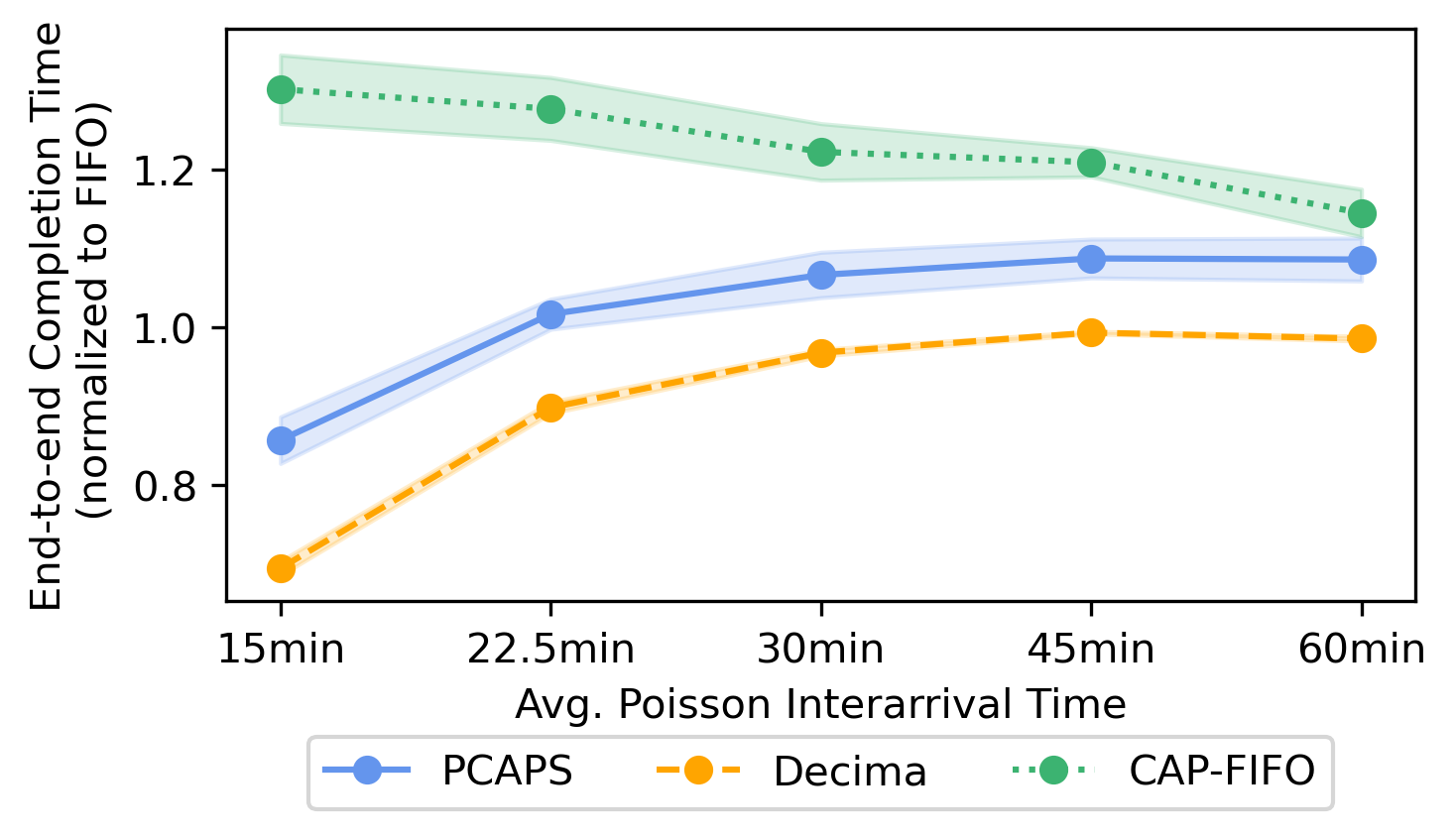} \vspace{-1em}
    {\centering \textbf{\textit{(b)}}}
\end{minipage}\hfill
\begin{minipage}{0.32\linewidth}
        \centering
    \includegraphics[width=\linewidth]{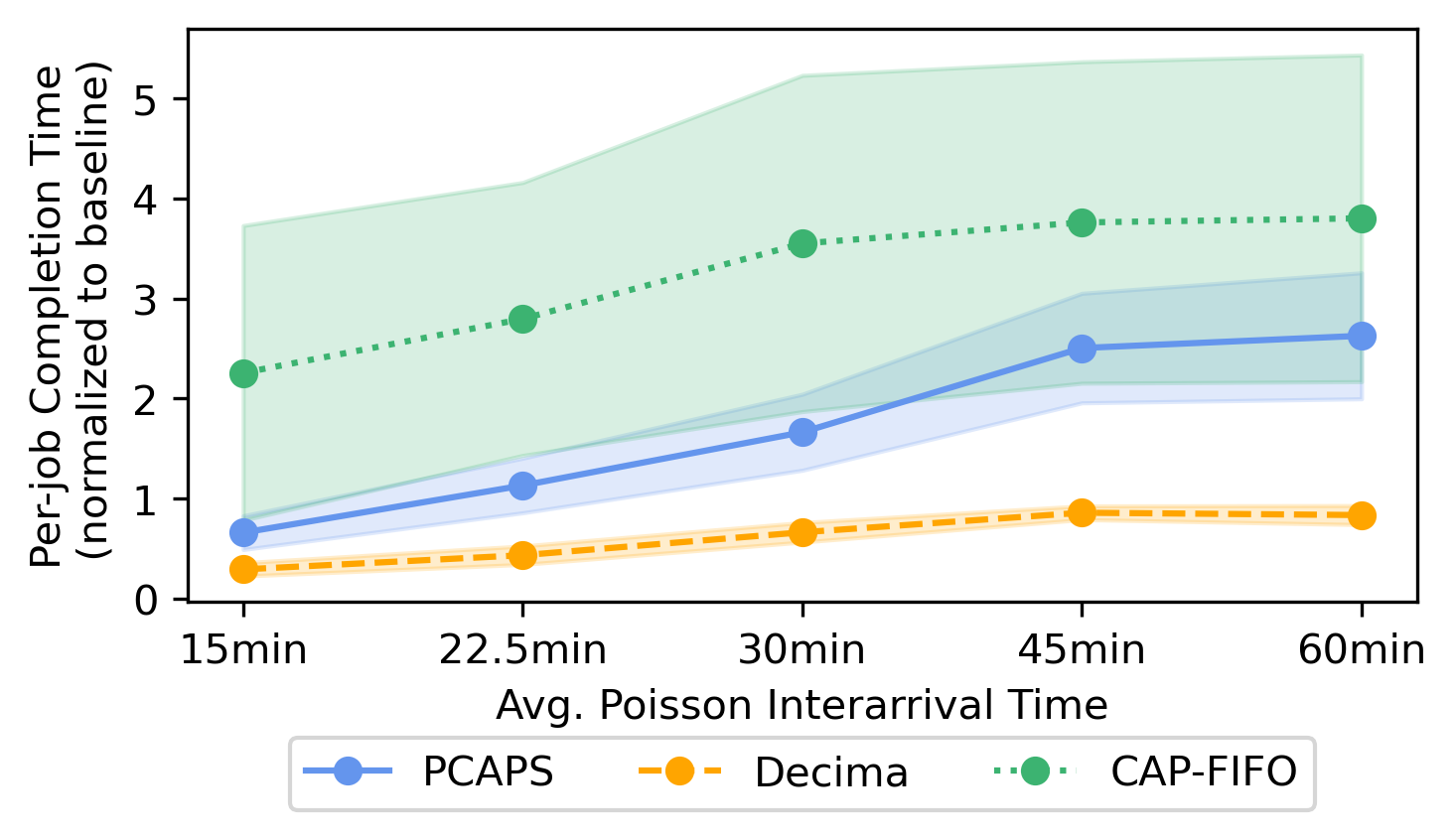} \vspace{-1em}
    {\centering \textbf{\textit{(c)}}}
\end{minipage}
\caption{ \textbf{\textit{(a)}} Carbon reduction, \textbf{\textit{(b)}} end-to-end completion time, and \textbf{\textit{(c)}} average job completion time achieved by \DANISH, \CAP-FIFO, and Decima (relative to FIFO) in a single grid region for varying Poisson interarrival times.  Shaded regions denote the standard deviation across the entire carbon trace. } \label{fig:lambda-sim}
\end{figure*}

In \autoref{fig:lambda-proto}, we plot the same metrics with respect to the default Spark/Kubernetes scheduler in the prototype implementation for \DANISH, Decima, and \CAP.  These results largely mirror those shown in the simulator above -- the most notable change is the improved performance of \DANISH and Decima (in terms of both end-to-end completion time and carbon footprint) for small interarrival times, i.e., when the cluster is heavily utilized.

\begin{figure*}[h]
\begin{minipage}{0.32\linewidth}
        \centering
    \includegraphics[width=\linewidth]{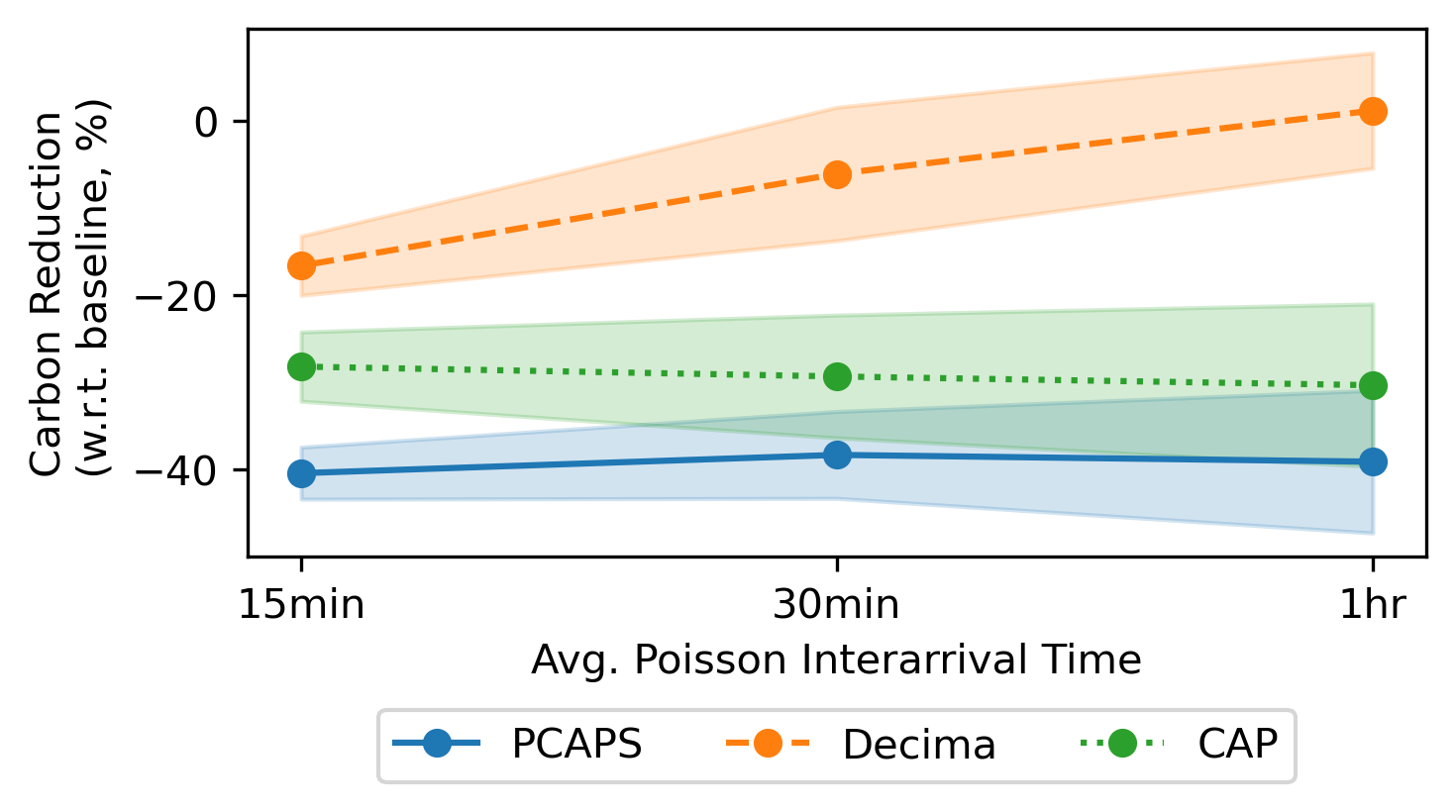} \vspace{-1em}
    {\centering \textbf{\textit{(a)}}}
\end{minipage} \hfill
\begin{minipage}{0.32\linewidth}
        \centering
    \includegraphics[width=\linewidth]{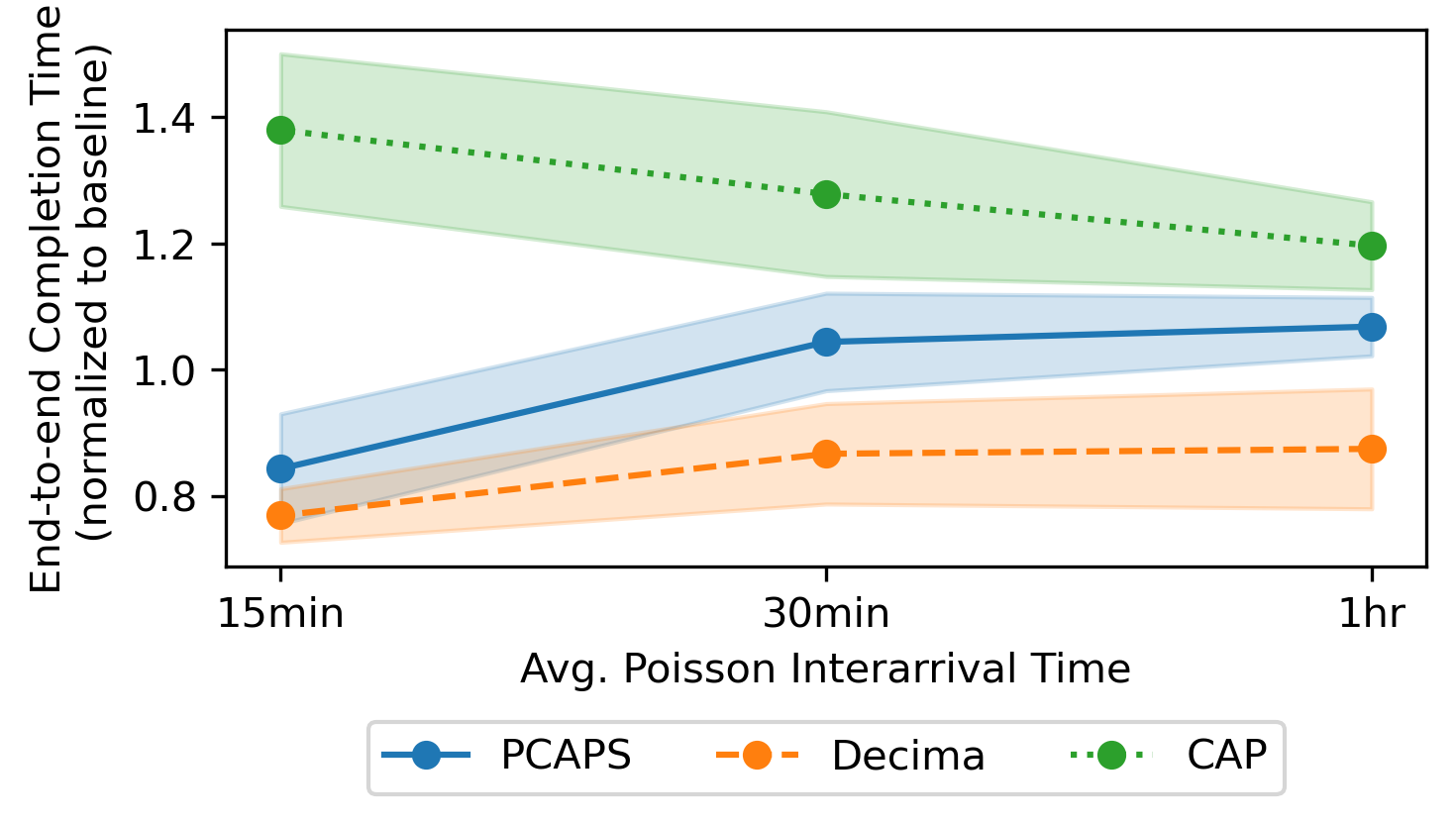} \vspace{-1em}
    {\centering \textbf{\textit{(b)}}}
\end{minipage}\hfill
\begin{minipage}{0.32\linewidth}
        \centering
    \includegraphics[width=\linewidth]{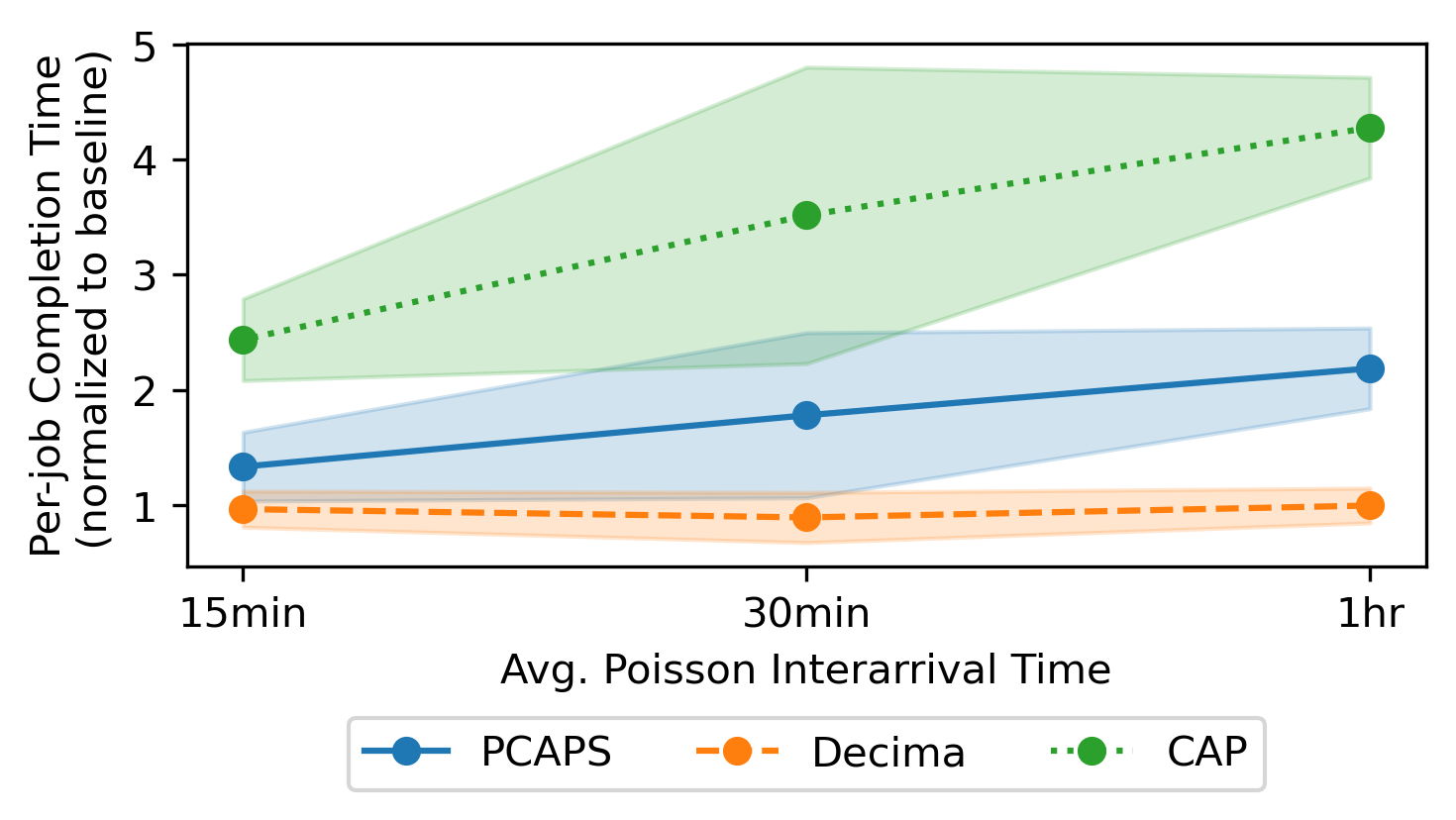} \vspace{-1em}
    {\centering \textbf{\textit{(c)}}}
\end{minipage}
\caption{ \textbf{\textit{(a)}} Carbon reduction, \textbf{\textit{(b)}} end-to-end completion time, and \textbf{\textit{(c)}} average job completion time achieved by \DANISH, \CAP, and Decima (relative to the Spark/Kubernetes default) in a single grid region for varying Poisson interarrival times.  Shaded regions denote the standard deviation across 10 random trials. } \label{fig:lambda-proto}
\end{figure*}

\subsubsection{\textbf{Carbon-awareness logic latency}}
The logic of \DANISH and \CAP naturally introduce latency due to the overhead required to make carbon-aware decisions.  We quantify this \textit{scheduling overhead} for FIFO, Decima, \CAP-FIFO, and \DANISH in the simulator below, in  the \verb|DE| grid region over 1000 simulated trials.  Note that we  report the latency in real-time (i.e.,  without the scaling mentioned  in \autoref{sec:carbon-traces}).  These measurements do not include the latency of e.g., calls to an external carbon intensity API.

In \autoref{fig:latency}(a), we plot the average latency (in milliseconds) of invoking each scheduler once, given that there is exactly $\{1, 5, 10, 25, 50, 75, 100\}$ outstanding TPC-H jobs in the queue.  We find that simple decision rule policies (FIFO and \CAP-FIFO) exhibit consistently low latencies below 5 milliseconds, where \CAP adds an overhead of a few milliseconds.  In contrast, Decima and \PCAPS, which both use a GNN+RL model to run inference at each invocation, intuitively exhibit latency that grows as a function of the number of outstanding jobs.  Relative to Decima, \PCAPS adds an overhead of a few milliseconds at each invocation, and this overhead is constant (i.e., it does not grow with the number of jobs in the queue).

In \autoref{fig:latency}(b), we plot the average latency over a full experiment, where the initial number of jobs in the queue is one of $\{1, 5, 10, 25, 50, 75, 100\}$ -- note that this latency goes down over time as jobs are completed.  It is computed as a ratio between the total time spent in the scheduler and the number of scheduler invocations over the experiment length.  The findings in this metric are largely similar, with lower overall averages due to the averaging over a full experiment (as opposed to averaging over a single queue length).
Overall, these results imply that the latency impact of carbon-awareness is minimal -- in the context of long-running big-data workloads, the sub-20 millisecond latencies observed are likely to be insignificant compared to other overheads on the cluster.

\begin{figure*}[h]
\hfill 
\begin{minipage}{0.32\linewidth}
        \centering
    \includegraphics[width=\linewidth]{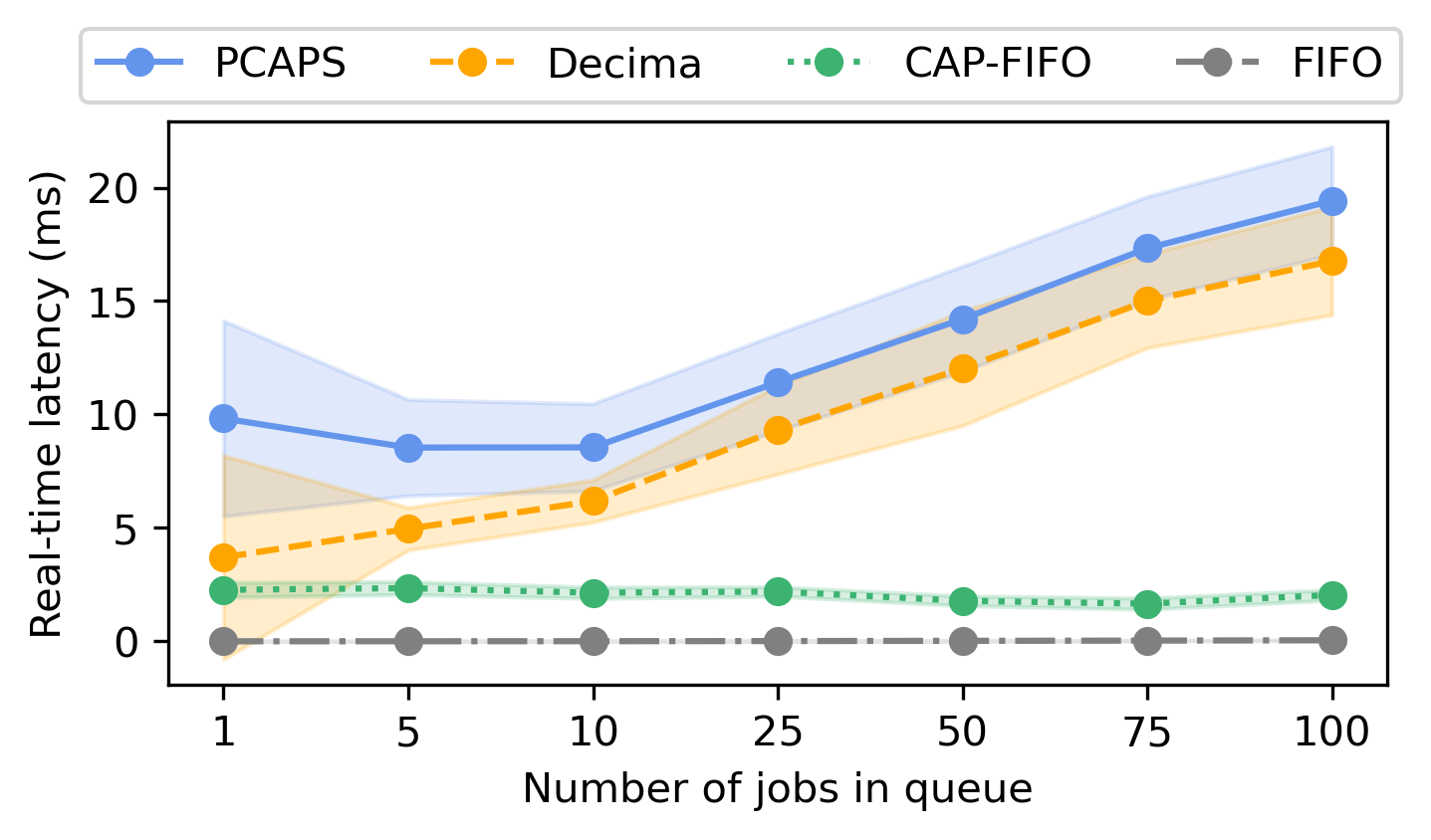} \vspace{-1em}
    {\centering \textbf{\textit{(a)}}}
\end{minipage} \hfill
\begin{minipage}{0.32\linewidth}
        \centering
    \includegraphics[width=\linewidth]{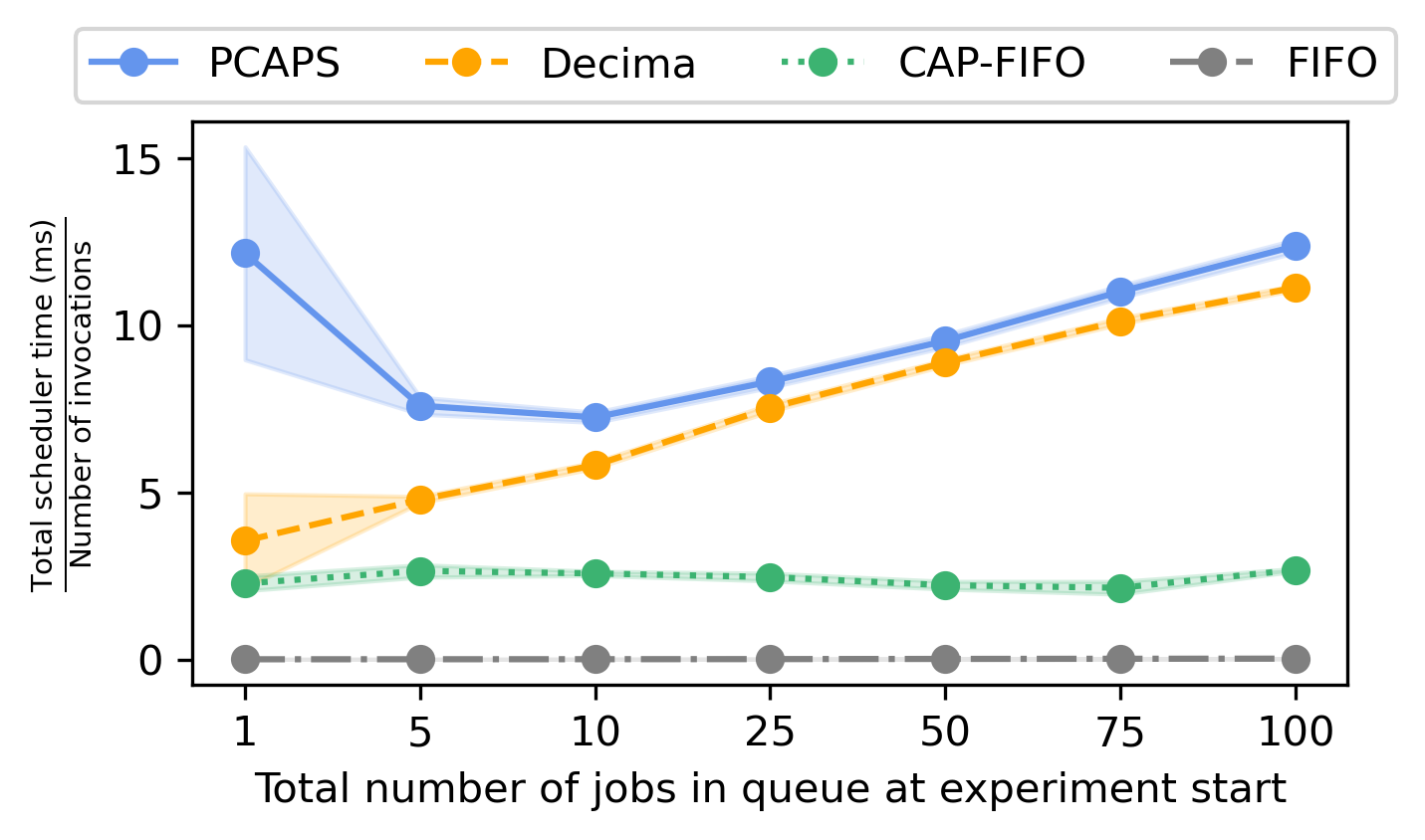} \vspace{-1em}
    {\centering \textbf{\textit{(b)}}}
\end{minipage} 
\hfill \hfill
\caption{  \textbf{\textit{(a)}} Average latency with $N$ jobs in the queue and \textbf{\textit{(b)}} average normalized time  in the scheduler for \PCAPS, \CAP-FIFO, Decima, and FIFO in a single grid region for varying experiment sizes.  Shaded region denotes the standard deviation across all 1000 trials. } \label{fig:latency}
\end{figure*}

\section{Deferred Analytical Results and Discussion}
In this section, we give full proofs for analytical results and detailed discussion for the \DANISH and \CAP scheduler designs introduced in \autoref{sec:design}.

\subsection{Deferred Proofs and Discussion from \autoref{sec:danish-design} (Precedence- and carbon-aware provisioning and scheduling)} \label{apx:danish-proofs}

In this section, we discuss and prove the analytical results for the \DANISH scheduler, introduced in \autoref{sec:danish-design}.  Throughout this section, we let $\texttt{PB}$ denote a \textit{carbon-agnostic, probabilistic} baseline scheduler, as outlined in \sref{Def.}{dfn:pb}.

For the sake of analysis, in the following results we leverage the classic makespan bound of Graham's list scheduling algorithm~\cite{Graham:66}, which is known to produce a schedule that is a $(2 - \nicefrac{1}{K})$-approximation for the optimal makespan on $K$ identical machines.  
Note that any carbon-agnostic probabilistic scheduler is an instance of list scheduling where the list (of tasks) is random, and the next task is assigned to a machine as soon as it becomes idle.

Recall the definition of \DANISH's carbon-awareness filter, parameterized by the $\Psi_\gamma$ function (see \autoref{alg:danish}). $\Psi_\gamma$ exhibits an exponential dependence on $r$, the relative importance of a task.  This is inspired by literature on one-way trading and related online problems~\cite{ElYaniv:01, Zhou:08}, where such an exponential trade-off is derived by balancing the marginal reward of the current price against the risk that better prices may arrive in the future.  
For DAG scheduling, we interpret relative importance in an analogous way: tasks with high importance have a substantial negative impact if they are not scheduled, so they are almost always scheduled at the current ``price'' (i.e., carbon intensity).  On the other hand, tasks with low relative importance (e.g., short tasks) may not significantly affect the DAG's completion time if they are deferred to start later, so they can ``wait'' for better carbon intensities.

To analyze the carbon stretch factor of \DANISH, we define $\mathcal{D}(\gamma, \mbf{c}) \in [0,1]$, a quantity that describes the fraction of tasks (in terms of total runtime) that are deferred by \DANISH when scheduling with a given value of $\gamma$ and given carbon trace $\mbf{c}$.  It is $\leq 1$ for any value of $\gamma$, and $\mathcal{D}(0, \mbf{c}) = 0$ for any $\mbf{c}$.  As $\gamma$ grows and \DANISH becomes ``more carbon-aware'',  $\mathcal{D}(\gamma, \mbf{c})$ grows in expectation.%

\subsubsection{\textbf{Proof of \autoref{thm:danishMakespan}}}\label{apx:danishMakespan}

We are ready to prove \autoref{thm:danishMakespan}, which states that for a given carbon intensity trace $\mbf{c}$, \DANISH's carbon stretch factor is $1 + \frac{\mathcal{D}(\gamma, \mbf{c}) K}{2 - \frac{1}{K}}$.

\begin{proof}
Let $\texttt{PB}_K(\mathcal{J})$ denote the schedule produced by the carbon-agnostic probabilistic baseline scheduler (e.g., Decima) using $K$ machines, and let $\DANISH_K(\mathcal{J})$ denote the schedule produced by $\DANISH$ for the same job $\mathcal{J}$ with $n$ tasks and a maximum of $K$ machines. We denote the carbon intensity trace by $\mbf{c} \coloneqq \{c(t) : t \ge 0\}$.

We henceforth use $\texttt{PB}_\text{K}^\texttt{D}$ to denote the instance of $\texttt{PB}$ that $\DANISH$ interfaces with.

Consider the perspective of a single node $v$: suppose that $\texttt{PB}$ samples node $v$ to be scheduled at time $t \geq 0$, where node $v$ has probability $p_{v,t} > 0$ and $v \in \mathcal{A}_t$.  By definition of $\texttt{PB}$, as soon as a task (node) is sampled, it runs on the available machine in $\texttt{PB}$'s schedule.

Now we consider the same node sampled by $\texttt{PB}_\text{K}^\texttt{D}$.  We denote $D_{v}$ as a random variable that denotes the number of times that node $v$ is \textit{not} scheduled (i.e., deferred) when it is sampled by $\texttt{PB}_\text{K}^\texttt{D}$. It is defined as:
\[
D_{v} = \sum_{t \in \mathcal{T}_v} \mathbb{I} \{ c(t) > \Psi_\gamma(r_{v,t}) \},
\]
where $\mathcal{T}_v$ denotes the times at which node $v$ is sampled by $\texttt{PB}_\text{K}^\texttt{D}$.
In the worst-case, observe that whenever a task $v$ is sampled but not scheduled, a deferral increases the total makespan by at most $X_v$, where $X_v$ is the runtime of task $v$.  Consider the edge case where all other tasks in $\mathcal{A}_t$ (i.e., all other tasks that are ready to run at the same time as task $v$) are scheduled on other machines at time $t$, and all further tasks (i.e., tasks that have not yet been completed but also were not in $\mathcal{A}_t$) are successors of $v$ (i.e., they cannot run until $v$ is completed).

In this case, assuming that the other tasks in $\mathcal{A}_t$ run to completion, there will be some time step $t' > t$ such that $\mathcal{A}_{t'} = \{ v \}$ -- i.e., the only task available to run is task $v$.  As soon as this is the case, $v$ will run -- this is because the \textit{relative importance} of any task in a set of size $1$ is always $1$.  Thus, in the worst-case, the schedule length increases by exactly $X_v$ -- this is the case if all of the other tasks in $\mathcal{A}_t$ finish at the same time $t'$ (i.e., no overlap with $v$).

This gives the following makespan bound in terms of $D_{v}$:
\[
\E [ \DANISH_K(\mathcal{J}) ] \le \E [ \texttt{PB}_K(\mathcal{J}) ] + \E \left[ \sum_{v \in \mathcal{J}} D_v X_v \right]
\]
By linearity of expectation, we have:
\[
\E [ \DANISH_K(\mathcal{J}) ] \le \E [ \texttt{PB}_K(\mathcal{J}) ] + \sum_{v \in \mathcal{J}} \E \left[ D_v \right] X_v
\]

Consider the \textit{total} number of deferrals $D = \sum_{v \in \mathcal{J}} D_v$, and note that $D$ must be $\leq n-1$ -- since at least one machine is active at all times, the maximum number of deferrals is that which causes the schedule to drop to a single machine across the entire time interval.  This immediately implies that $\E[D] \leq n-1$.
Define a sorted list $\{ X'_i : i \in n\}$ such that $X'_0 = \max_{v \in \mathcal{A}} X_v, \dots, X'_n = \min_{v \in \mathcal{A}} X_v$, and note that we can upper bound the above as follows:
\begin{align*}
\E [ \texttt{PB}_K(\mathcal{J}) ] + \sum_{v \in \mathcal{J}} \E \left[ D_v \right] X_v &\le \E [ \texttt{PB}_K(\mathcal{J}) ] + \sum_{i=0}^{\E[D]} X'_i,
\end{align*}
Note that this is a worst-case assumption -- in reality, the tasks with low relative importance (that are likely to be deferred) are unlikely to be the longest tasks for any reasonable scheduler \texttt{PB}.
Note that $\sum_{i=0}^n X'_i = \sum_{v \in \mathcal{J}} = \OPT_1(\mathcal{J})$, i.e., the optimal makespan using a single machine.  To simplify the above bound, we can define a function $\mathcal{D}(\gamma, \mbf{c})$ as follows:
\[
\mathcal{D}(\gamma, \mbf{c}) = \frac{\sum_{i=0}^{\E[D]} X'_i}{\OPT_1(\mathcal{J})}.
\]
Note that $\mathcal{D}(\gamma, \mbf{c}) \le 1$ for any $\gamma$ and any $\mbf{c}$, and $\mathcal{D}(0, \mbf{c}) = 0$ for any $\mbf{c}$.  In practice, $\E [D]$ can be estimated from e.g., historical carbon traces and the characteristic behavior of $\texttt{PB}$ (i.e., in terms of relative importance).
We have the following bound:
\begin{align*}
\E [ \DANISH_K(\mathcal{J}) ] &\le \E [ \texttt{PB}_K(\mathcal{J}) ] +  \mathcal{D}(\gamma, \mbf{c}) \OPT_1(\mathcal{J}).
\end{align*}
This gives insight into the tuning of hyperparameter $\gamma$ -- for a low-carbon period $\mbf{c}'$ where a practitioner desires full throughput, one should tune $\gamma$ such that $\mathcal{D}(\gamma, \mbf{c'}) \approx 0$.

Since \texttt{PB} follows the list scheduling paradigm of scheduling tasks in an order that respects precedence constraints, it inherits the following worst-case theoretical bound on  its makespan:
\begin{align*}
\E [ \texttt{PB}_K(\mathcal{J}) ] \le \left(2 - \frac{1}{K} \right) \OPT_K(\mathcal{J}).
\end{align*}

From the proof of \autoref{thm:ksMakespan}, we also have the following bound for $\OPT_1(\mathcal{J})$:
\begin{align*}
\OPT_1(\mathcal{J}) \le K\cdot \OPT_K(\mathcal{J}).
\end{align*}

Combining these results, we have the following:
\begin{align*}
\E [ \DANISH_K(\mathcal{J}) ] &\le \left(2 - \frac{1}{K} + \mathcal{D}(\gamma, \mbf{c}) K \right) \OPT_K(\mathcal{J}).
\end{align*}

Combined with the bound for $\texttt{PB}$, this shows that $\DANISH$ has a carbon stretch factor of $1 + \frac{\mathcal{D}(\gamma, \mbf{c}) K}{2 - \frac{1}{K}}$.

\end{proof}

We now turn to the question of carbon savings, and the result stated in \autoref{thm:danishCarbonSavings}. 
First, for ease of analysis, we define a \textit{discretized time model} that is motivated by the empirical fact that carbon intensities are reported in discrete time intervals~\cite{electricity-map, watttime}.
Assuming that new carbon intensity values arrive at integers in continuous time, we define a discretized carbon signal for any discrete time step $t$ as $c_{t} \coloneqq c(t') : t' \in [t, t+1)$, where note that $c(t')$ does not change on the interval $t \in [t, t+1)$.

Slightly abusing notation, we let $C_{\DANISH}(t)$ denote the carbon emissions of $\DANISH$'s schedule during discrete time step $t$, and let $C_{\texttt{PB}}(t)$ denote the carbon emissions of $\texttt{PB}$ at discrete time step $t$, respectively.  
The schedules generated by $\DANISH$ and $\texttt{PB}$ each use a certain number of machines during any discrete time interval -- to capture this, we let $E^\DANISH_t : t' \in [t, t+1)$ denote the number of active machines during discrete time step $t$ in \DANISH's schedule, and $E^\texttt{PB}_t \leq K$ denotes the same for \texttt{PB}'s schedule.  In what follows, we use $W$ as shorthand to denote the \textit{excess work} that \DANISH must ``make up'' with respect to \texttt{PB}'s schedule (i.e., due to deferral actions). Formally, $W = \sum_{i=0}^T \max\{ E^\texttt{PB}_t - E^\DANISH_t , \ 0 \}$.

\subsubsection{\textbf{Proof of \autoref{thm:danishCarbonSavings}}}\label{apx:danishCarbonSavings}

In what follows, we prove \autoref{thm:danishCarbonSavings}, which states that for a given carbon intensity trace $\mbf{c}$, \DANISH yields carbon savings of $W \left( \overline{s}_{-}^{(0,T)} - \overline{s}_{+}^{(0,T)} - \overline{c}^{(T, T')} \right)$ compared to a carbon-agnostic baseline \texttt{PB}. %

\begin{proof}
We let $C_s(t)$ denote the \textit{carbon savings} of \DANISH at discrete time step $t$.  Formally, we have:
\[
C_s(t) = \begin{cases}
    C_{\texttt{PB}}(t) - C_{\DANISH}(t) &  1 \leq t \leq T,\\
    - C_{\DANISH}(t) & T < t \leq T'.
\end{cases}
\]
By definition, we have the following by substituting for the carbon emissions of $\texttt{PB}$ and \DANISH:
\[
C_s(t) = \begin{cases}
    E^\texttt{PB}_t c_t  - E^\DANISH_t c_t &  1 \leq t \leq T,\\
    - E^\DANISH_t c_t& T < t \leq T'.
\end{cases}
\]
Summing over all time steps, we have that the carbon savings is given by:
\[
\sum_{i=0}^{T'} C_s(i) = \sum_{i=0}^{T} (E^\texttt{PB}_i-E^\DANISH_i) c_i - \sum_{i=T+1}^{T'} E^\DANISH_i c_i
\]

Note that because of \DANISH's job-specific decisions, it is not necessarily the case that $E^\texttt{PB}_t \geq E^\DANISH_t$ for any $t \leq T$ -- namely, if \texttt{PB}'s schedule is constrained by a bottleneck task during a low-carbon time step, \DANISH's schedule may use that low-carbon time step to schedule deferred tasks from previous time steps.

Thus, to begin simplifying this expression, we consider two cases for the sum from $0$ to $T$ as follows:
\begin{align*}
\sum_{i=0}^{T'} C_s(i) &= \sum_{i=0}^{T} (E^\texttt{PB}_i-E^\DANISH_i) c_i \mathbb{I}(E^\texttt{PB}_i \geq E^\DANISH_i),\\
& - \sum_{i=0}^{T} (E^\DANISH_i - E^\texttt{PB}_i) c_i \mathbb{I}(E^\texttt{PB}_i < E^\DANISH_i) - \sum_{i=T+1}^{T'} E^\DANISH_i c_i.
\end{align*}

We define three terms that capture the \textit{weighted average} carbon intensity per unit of work that is deferred, opportunistically completed, or completed later as follows.  
Note that $\sum_{i=0}^{T} (E^\texttt{PB}_i-E^\DANISH_i) = \sum_{i=T+1}^{T'} E^\DANISH_i$.

We let $\overline{s}_{-}^{(0,T)}$ denote the weighted average carbon intensity of the machine time (work) that is \textit{deferred} in \DANISH's schedule (i.e., carbon saved due to \DANISH's carbon-aware filtering):
\[
\overline{s}^{(0,T)} = \frac{\sum_{i=0}^{T} (E^\texttt{PB}_i-E^\DANISH_i) c_i}{W} \mathbb{I}(E^\texttt{PB}_i \geq E^\DANISH_i)
\]

Furthermore, we let $\overline{s}_{+}^{(0,T)}$ denote the weighted average carbon intensity of the machine time (work) that is \textit{opportunistically completed} in \DANISH's schedule (i.e., when \DANISH does more work than \texttt{PB}, likely during a low-carbon period):
\[
\overline{s}_{+}^{(0,T)} = \frac{\sum_{i=0}^{T} (E^\texttt{PB}_i-E^\DANISH_i) c_i}{W} \mathbb{I}(E^\texttt{PB}_i < E^\DANISH_i)
\]

Finally, we let $\overline{c}^{(T, T')}$ denote the weighted average carbon intensity of the work completed by \DANISH after time $T$:
\[
\overline{c}^{(T,T')} = \frac{\sum_{i=T+1}^{T'} E^\DANISH_i c_i}{W}
\]
Under the above definitions, we can pose the total carbon savings as:
\[
\sum_{i=0}^{T'} C_s(i) = W \left( \overline{s}_{-}^{(0,T)} -\overline{s}_{+}^{(0,T)} - \overline{c}^{(T,T')} \right)
\]

\end{proof}

\noindent We note that an adversary could construct instances such that \PCAPS  uses more carbon than a carbon-agnostic baseline -- for instance, consider the case where the carbon intensity at each time step is strictly increasing over time.  In such a scenario, the ``carbon-optimal'' solution simply finishes the job as soon as it can, and such a scenario implies that $\overline{c}^{(T,T')} + \overline{s}_{+}^{(0,T)} > \overline{s}_{-}^{(0,T)}$, meaning that \PCAPS's carbon savings are negative.  However, we note such scenarios for the carbon intensity on the grid are unrealistic.  In reality, grid carbon intensity exhibits \textit{diurnal (i.e., daily) patterns} that mediate this effect -- see \autoref{fig:CI-traces} for an example.

The above result contextualizes the total carbon savings achieved by \DANISH for a single job, but we also consider the average carbon savings at each (discrete) carbon intensity interval as follows.

Let $\rho_{\DANISH}(c)$ denote the average machine utilization for \DANISH's schedule conditioned on the fact that the current carbon intensity is $c = c_t$.  Denoting the set of discrete time steps with carbon intensity $c$ by $\mathcal{T}_c$, we have the following: $\rho_{\DANISH}(c) = \lim_{T\to \infty} \frac{\sum_{i\in\mathcal{T}_c} \nicefrac{E^\DANISH_i}{K}}{\vert \mathcal{T}_c \vert}$.  
Note that $\rho_{\DANISH}(c)$ can be estimated based on e.g., the observed relative importances of tasks produced by \texttt{PB} -- this \textit{distribution} of relative importances maps (via $\Psi_\gamma$) into a distribution of carbon intensity values -- the fraction of these values that lie below $c$ is proportional to $\rho_{\DANISH}(c)$, since the fraction of values above $c$ correspond to tasks that would be deferred by $\DANISH$.

\begin{cor} \label{cor:danishCarbonSavingsContinuous}
In a setting where there are always jobs with outstanding tasks in the data processing queue, the average carbon savings of \DANISH at any given discrete time step $t$ is given by $\left(\rho_{\texttt{PB}} K - \rho_{\DANISH}(c_t) K \right) c_t$.
\end{cor}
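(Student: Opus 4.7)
The plan is to derive the corollary as a direct consequence of the instantaneous carbon emission formula $C_{\ALG}(t) = c(t) E_{\ALG}(t)$ introduced just before \sref{Def.}{dfn:carbonsavings}, combined with the utilization definitions above the corollary. The key observation is that the ``always outstanding tasks'' assumption removes the head/tail transient behavior that complicated \autoref{thm:danishCarbonSavings} (where we had to separately track deferred work, opportunistically completed work, and tail work after time $T$). In steady state, \texttt{PB} and \DANISH both operate continuously, so the per-step carbon savings reduces to comparing the per-step emissions of the two schedulers at the same carbon intensity $c_t$.

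First, I would write the instantaneous carbon savings at discrete step $t$ as $C_{\texttt{PB}}(t) - C_{\DANISH}(t) = c_t (E^\texttt{PB}_t - E^\DANISH_t)$, using the discretized time model from \autoref{apx:danishCarbonSavings}. Next, I would average over the set of time steps $\mathcal{T}_{c_t}$ at which the carbon intensity takes the value $c_t$. Since \texttt{PB} is carbon-agnostic, its expected utilization is independent of $c_t$, i.e., $\lim_{T\to\infty} \tfrac{1}{|\mathcal{T}_{c_t}|}\sum_{i\in\mathcal{T}_{c_t}} E^\texttt{PB}_i / K = \rho_{\texttt{PB}}$. By the definition stated just before the corollary, the analogous average for \DANISH is $\rho_{\DANISH}(c_t)$, which \emph{does} depend on $c_t$ because \DANISH's filter $\Psi_\gamma$ uses the current carbon intensity. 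Multiplying each utilization by $K$ to recover the number of active machines and then by $c_t$ yields the claimed expression.

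The one subtle step, and the main obstacle, is justifying that the ``always outstanding tasks'' assumption is what allows us to drop the tail term $\overline{c}^{(T,T')}$ and the opportunistic-completion term $\overline{s}_{+}^{(0,T)}$ that appeared in \autoref{thm:danishCarbonSavings}. I would argue this explicitly: when the queue never empties, there is no time $T$ after which \DANISH must ``make up'' deferred work against an idle \texttt{PB} baseline; the two schedulers run concurrently and their utilizations are well-defined long-run averages conditional on $c_t$. Thus the net effect of \DANISH's deferrals at intensity $c_t$ is captured entirely by the difference $\rho_{\texttt{PB}} - \rho_{\DANISH}(c_t)$ at that same intensity, with no cross-time bookkeeping needed. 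Combining these steps gives the per-step average carbon savings $\bigl(\rho_{\texttt{PB}} K - \rho_{\DANISH}(c_t) K\bigr) c_t$, as stated.
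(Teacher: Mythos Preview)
Your proposal is correct and follows essentially the same approach as the paper: both argue that in the steady-state (``always outstanding tasks'') regime the per-step savings reduce to $c_t$ times the difference in expected active machines, then replace those by the long-run utilizations $\rho_{\texttt{PB}}K$ and $\rho_{\DANISH}(c_t)K$. The paper's proof is terser---it simply defines $\rho_{\texttt{PB}}$ and invokes \autoref{thm:danishCarbonSavings}---whereas you spell out explicitly why the tail and opportunistic-completion terms vanish and why \texttt{PB}'s utilization is independent of $c_t$; this extra justification is helpful but not a different argument.
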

\begin{proof}
In this setting, the expression of the \textit{average} carbon savings at any given discrete time step simplifies as follows:

Let $\rho_{\texttt{PB}}$ denote the average machine utilization of \texttt{PB}'s schedule, i.e., $\rho_{\texttt{PB}} = \lim_{T\to \infty} \frac{\sum_{i=0}^T \nicefrac{E^\texttt{PB}_i}{K}}{T}$.

Then by \autoref{thm:danishCarbonSavings}, the average carbon savings $\overline{C}_s$ at  any time step $t$ is given by the following:
\begin{align*}
\overline{C}_s(t) &= \left(\rho_{\texttt{PB}} K - \rho_{\DANISH}(c_t) K \right) c_t.
\end{align*}
\end{proof}

\subsection{Deferred Proofs and Discussion from \autoref{sec:cap-design} (Carbon-aware provisioning (\CAP))} \label{apx:cap-proofs}

In this section, we discuss and prove the analytical results for \CAP, introduced in \autoref{sec:cap-design}.  Throughout this section, we let $\texttt{AG}$ denote a \textit{carbon-agnostic} baseline scheduler.

For the sake of analysis, in the following results we leverage the classic makespan bound of Graham's list scheduling algorithm~\cite{Graham:66}, which is known to produce a schedule that is a $(2 - \nicefrac{1}{K})$-approximation for the optimal makespan on $K$ identical machines.  
Note that FIFO is an instance of list scheduling where the list (of tasks) is a FIFO queue, and the next task is assigned to a machine as soon as it becomes idle.

Suppose that for a job $\mathcal{J}$, \CAP's schedule completes it at time $T' \geq 0$.
Note that if $c(t) = L$ for all time steps, the schedule of \CAP is identical to that of \texttt{AG} because \CAP always sets $r(t) = K$.  Otherwise, we let $\OPT_{K}(\mathcal{J})$ denote the optimal makespan on $K$ machines, and let $\mathcal{M}(B, \mbf{c})$ denote the minimum resource cap specified by \CAP at any point in its schedule (note this depends on the carbon signal $\mbf{c}$).  Formally, we let: $\mathcal{M}(B, \mbf{c}) = \arg \max_{i\in[K]} \Phi_i : \Phi_i \le c(t) \ \forall t \in [0, T']$.

\subsubsection{\textbf{Proof of \autoref{thm:ksMakespan}}}\label{apx:ksMakespan}

We are now ready to prove \autoref{thm:ksMakespan}, which states that \CAP's
carbon stretch factor is $\left(\frac{K}{\mathcal{M}(B, \mbf{c})}\right)^2 \frac{2\mathcal{M}(B, \mbf{c})-1}{2K-1}$.
We prove the statement by first showing that \CAP's makespan is at most $\left( \frac{2K}{\mathcal{M}(B, \mbf{c})} - \frac{K}{\mathcal{M}(B, \mbf{c})^{2}} \right) \OPT_{K}(\mathcal{J})$.

\smallskip

\begin{proof}
Let $\CAP_K(\mathcal{J} \mid \mathcal{M}(B, \mbf{c}) )$ denote the makespan of \CAP given $K$ machines conditioned on the value of $\mathcal{M}(B, \mbf{c})$, and let $\texttt{AG}_K(\mathcal{J})$ denote the makespan of \texttt{AG} (i.e., Graham's list scheduling with $K$ machines).

First, note that the following holds by~\cite{Graham:66}:
\[
\texttt{AG}_K(\mathcal{J}) \le \left(2 - \frac{1}{K}\right) \OPT_K(\mathcal{J}).
\]

Second, note that the makespan of $\CAP_K(\mathcal{J} \mid \mathcal{M}(B, \mbf{c}) )$ is upper-bounded by that of $\texttt{AG}_{\mathcal{M}(B, \mbf{c})}(\mathcal{J}) $.  By definition of $\mathcal{M}(B, \mbf{c})$, $\CAP$ always has $\mathcal{M}(B, \mbf{c})$ machines available, which is the same as $\texttt{AG}_{\mathcal{M}(B, \mbf{c})}(\mathcal{J})$.  If any other machines become available and process jobs during the schedule of $\CAP_K(\mathcal{J} \mid \mathcal{M}(B, \mbf{c}) )$, these \textit{strictly help} the makespan with respect to $\texttt{AG}_{\mathcal{M}(B, \mbf{c})}(\mathcal{J})$.  Thus, we have:
\[
\CAP_K(\mathcal{J} \mid \mathcal{M}(B, \mbf{c}) ) \le \texttt{AG}_{\mathcal{M}(B, \mbf{c})}(\mathcal{J})
\]

Furthermore, we have the following relationship between the optimal makespans (for the same job) when given different amounts of machines.  Letting $\mathcal{M}(B, \mbf{c}) \le K$, we have that:
\[
\OPT_{\mathcal{M}(B, \mbf{c})}(\mathcal{J}) \le \frac{K}{\mathcal{M}(B, \mbf{c})} \OPT_K(\mathcal{J}).
\]

To observe this, consider the limiting case as $\mathcal{M}(B, \mbf{c}) \to 1$.     When $\mathcal{M}(B, \mbf{c}) = 1$, the optimal makespan contains no ``gaps'', in the sense that the single machine is always being utilized.  If the job is perfectly parallelizable and subdividable, we have that $\OPT_{\mathcal{M}(B, \mbf{c})}(\mathcal{J}) = \frac{K}{\mathcal{M}(B, \mbf{c})} \OPT_K(\mathcal{J})$ by a geometric proof (i.e., $\OPT_K(\mathcal{J})$ has a makespan that is $\nicefrac{1}{K}$ as long as $\OPT_1(\mathcal{J})$).  For any other job, as the number of machines increases, the utilization of machines worsens.  

\begin{figure}[h]
    \centering
    \includegraphics[width=0.5\linewidth]{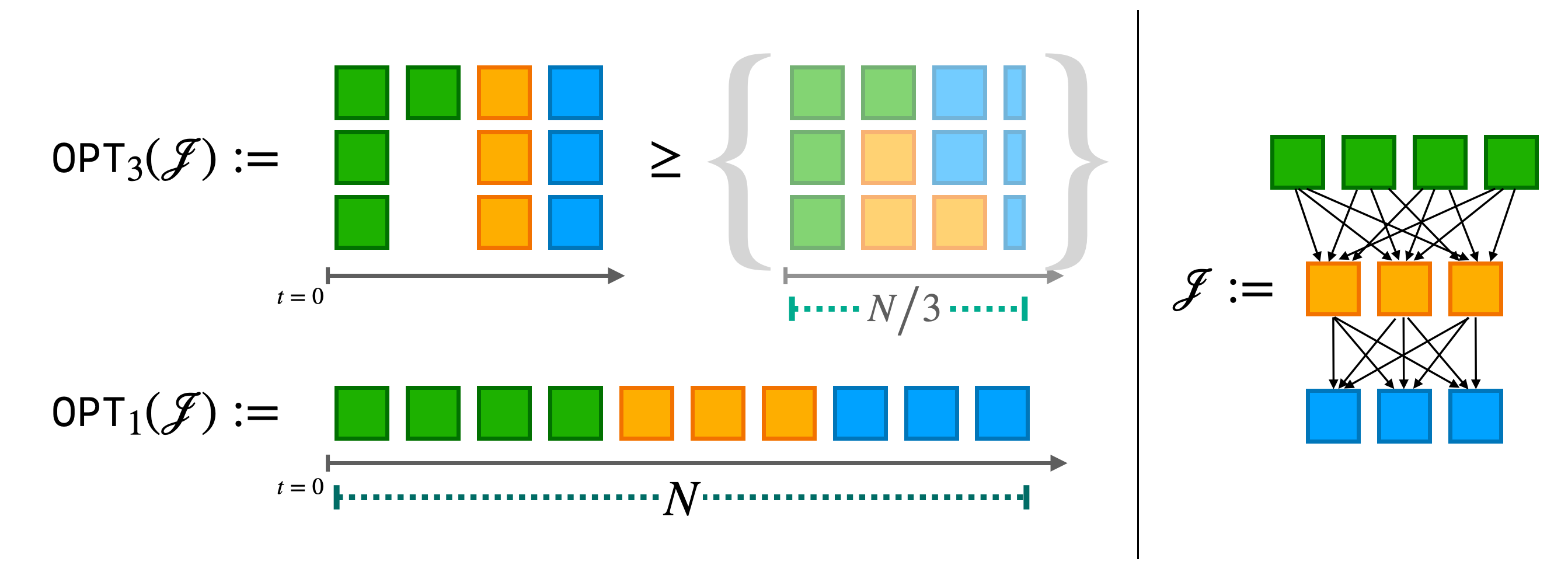} \vspace{-1em}
    \caption{ An example to contextualize how the optimal makespan differs when a job is given different amounts of machines.  In the case of a single machine (i.e., $\OPT_1(\mathcal{J})$), the precedence constraints defined by the DAG on the right-hand side of the figure are non-binding -- there is always one or more tasks that are ready to execute.  As the number of machines increases, situations arise where some machines must be idle (e.g., in $\OPT_3(\mathcal{J})$), indicated by ``blank slots'' in the optimal schedule.  The relation between makespans is then formalized by considering a hypothetical schedule (in brackets) that ignores precedence constraints and subdivides individual tasks across machines -- while this is an infeasible schedule, it forms a \textit{lower bound} on the makespan of any feasible one. }
    \label{fig:cap-makespan-explainer}
\end{figure}

We give a visual example of such a job $\mathcal{J}$ with $N=10$ tasks in \autoref{fig:cap-makespan-explainer} -- observe that respecting the precedence constraints in the case with $K=3$ machines necessarily forces a makespan that is greater than the hypothetical best makespan if jobs are perfectly parallelizable and subdividable. 
Thus, we have that scaling $\OPT_K(\mathcal{J})$ by $\frac{K}{\mathcal{M}(B, \mbf{c})}$ is always an upper bound on the optimal makespan $\OPT_{\mathcal{M}(B, \mbf{c})}(\mathcal{J})$.

Combining the above bounds, we obtain the following:
\begin{align*}
\CAP_K(\mathcal{J} \mid \mathcal{M}(B, \mbf{c}) ) &\le \texttt{AG}_{\mathcal{M}(B, \mbf{c})}(\mathcal{J}),\\
&\le \left(2 - \frac{1}{\mathcal{M}(B, \mbf{c})}\right) \OPT_{\mathcal{M}(B, \mbf{c})}(\mathcal{J}),\\
&\le \left( \frac{2K}{\mathcal{M}(B, \mbf{c})} - \frac{K}{\mathcal{M}(B, \mbf{c})^{2}} \right) \OPT_K(\mathcal{J}).
\end{align*}

This gives that the carbon stretch factor (\sref{Definition}{dfn:csf}) of \CAP is given by $\left(\frac{K}{\mathcal{M}(B, \mbf{c})}\right)^2 \frac{2\mathcal{M}(B, \mbf{c})-1}{2K-1}$. 
\end{proof}

\noindent We now turn to the question of carbon savings, and the result stated in \autoref{thm:ksCarbonSavings}.  
For ease of analysis, we again consider a \textit{discretized time model} as defined in \sref{Appendix}{apx:danishCarbonSavings}.

Slightly abusing notation, we let $C_{\CAP}(t)$ denote the carbon emissions of $\CAP$'s schedule during discrete time step $t$, and let $C_{\texttt{AG}}(t)$ denote the carbon emissions of $\texttt{AG}$ at discrete time step $t$, respectively.  
Schedules generated by $\CAP$ and $\texttt{AG}$ each use a certain number of machines during any discrete time interval -- to capture this, we let $E^\CAP_t \leq r(t') : t' \in [t, t+1)$ denote the number of active machines during discrete time step $t$ in \CAP's schedule.  Note that  $r(t')$ is constant on the interval $t \in [t, t+1)$, and that $E^\CAP_t$ \textit{need not be} an integer -- i.e., if a machine is active for only half of the discrete time interval, that machine contributes fractionally to $E^\CAP_t$.  We let $E^\texttt{AG}_t \leq K$ denote the same quantity for \texttt{AG}'s schedule.

On a per-job basis, let $T$ denote the makespan of $\texttt{AG}$ (i.e., $T = \texttt{AG}_K(\mathcal{J})$), where note that $T \le T'$.
In what follows, we use $W$ as shorthand to denote the \textit{excess work} that \CAP must complete after time $T$ (i.e., after $\texttt{AG}$ has completed).  Formally, $W = \sum_{i=0}^T E^\texttt{AG}_t - E^\CAP_t$.  We also define quantities $\overline{s}^{(0,T)}$ and $\overline{c}^{(T, T')}$, which are weighted averages based on a combination of the carbon intensity and schedules of \texttt{AG} and \CAP, respectively. 
These can be interpreted as the realization of carbon intensities that \CAP ``waited for'' -- in other words, it deferred some work in between time $0$ and time $T$ (saving $\overline{s}^{(0,T)} $ amount of carbon), so it must make up the difference after time $T$.  

\subsubsection{\textbf{Proof of \autoref{thm:ksCarbonSavings}}}\label{apx:ksCarbonSavings}

We are ready prove \autoref{thm:ksMakespan}, which states that \CAP yields carbon savings compared to a carbon-agnostic baseline of $W \left( \overline{s}^{(0,T)} - \overline{c}^{(T, T')} \right)$.

\begin{proof}
Slightly abusing notation, we let $C_s(t)$ denote the \textit{carbon savings} of \CAP at discrete time step $t$.  Formally, we have:
\[
C_s(t) = \begin{cases}
    C_{\texttt{AG}}(t) - C_{\CAP}(t) &  1 \leq t \leq T,\\
    - C_{\CAP}(t) & T < t \leq T'.
\end{cases}
\]
By definition, we have the following by substituting for the carbon emissions of $\texttt{ECA}$ and \CAP:
\[
C_s(t) = \begin{cases}
    E^\texttt{AG}_t c_t  - E^\CAP_t c_t &  1 \leq t \leq T,\\
    - E^\CAP_t c_t& T < t \leq T'.
\end{cases}
\]
Summing over all time steps, we have that the carbon savings is given by:
\[
\sum_{i=0}^{T'} C_s(i) = \sum_{i=0}^{T} (E^\texttt{AG}_i-E^\CAP_i) c_i - \sum_{i=T+1}^{T'} E^\CAP_i c_i
\]
To simplify this expression, we define two terms that capture the \textit{weighted average} carbon intensity per unit of work completed/deferred.  First, note that $\sum_{i=0}^{T} (E^\texttt{AG}_i-E^\CAP_i) = \sum_{i=T+1}^{T'} E^\CAP_i = W$.

Formally, we let $\overline{s}^{(0,T)}$ denote the weighted average carbon intensity of the deferred work $W$ that is completed by \texttt{AG} before time $T$ but must be completed after time $T$ by \CAP:
\[
\overline{s}^{(0,T)} = \nicefrac{\sum_{i=0}^{T} (E^\texttt{AG}_i-E^\CAP_i) c_i}{W}
\]

Similarly, we let $\overline{c}^{(T, T')}$ denote the weighted average carbon intensity of the deferred work $W$ that is completed by \CAP after time $T$:
\[
\overline{c}^{(T,T')} = \nicefrac{\sum_{i=T+1}^{T'} E^\CAP_i c_i}{W}
\]
Under the above definitions, we can pose the total carbon savings as:
\[
\sum_{i=0}^{T'} C_s(i) = W \left( \overline{s}^{(0,T)} - \overline{c}^{(T,T')} \right)
\]
\end{proof}

The above result contextualizes the total carbon savings achieved by \CAP for a single job, but we may also consider the average carbon savings at each (discrete) carbon intensity interval as follows.
We let $\rho_{\texttt{AG}} \in [0,1)$ denote the average cluster utilization of $\texttt{AG}$, and we let $\rho_{\texttt{CAP}} \in [0,1)$ denote the average cluster  utilization of \CAP.  In general, since \CAP allows jobs to use less than or equal the amount of resource that \texttt{AG} allows, we expect $\rho_{\texttt{AG}} \leq \rho_{\texttt{CAP}}$ as for the same number of jobs submitted (e.g., during a given period), jobs will take up a greater proportion of the resources that \CAP allows.

\begin{cor} \label{cor:ksCarbonSavingsContinuous}
In a setting where there are always jobs with outstanding tasks in the data processing queue, the average carbon savings of \CAP at any given discrete time step $t$ is given by $(\rho_{\texttt{AG}} K - \rho_{\CAP} r_t ) \Phi_{r_t + B}$.
\end{cor}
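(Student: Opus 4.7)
My plan is to specialize Theorem~\ref{thm:ksCarbonSavings} to the steady-state regime hypothesized in the corollary, where the queue is never empty. In this regime, per-time-step accounting is well-defined because over long horizons every discrete step contributes a stable average amount of work on both sides: the carbon-agnostic baseline \texttt{AG} sustains on average $\rho_{\texttt{AG}} K$ active executors, while \CAP sustains $\rho_{\CAP} r_t$ active executors (its long-run utilization $\rho_{\CAP}$ applied to the resource cap $r_t$ currently in force). This mirrors the per-step decomposition of $C_s(t)$ used in the proof of Theorem~\ref{thm:ksCarbonSavings}, but now every step is of a single ``type'' rather than being partitioned into pre-$T$ and post-$T$ phases, so no aggregation across $T'$ is needed.

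The first step is to write the instantaneous savings as the difference of the two emission rates, namely $(\rho_{\texttt{AG}} K - \rho_{\CAP} r_t)\, c_t$. This quantity is nonnegative in expectation because $\rho_{\CAP} r_t \le \rho_{\CAP} K \le \rho_{\texttt{AG}} K$ whenever $c_t > L$, with equality recovered at $r_t = K$. The second step is to tie the factor $c_t$ to the threshold structure of \CAP: the selection rule $r(t) \gets \arg\max_{i \in \mathcal{R}} \Phi_i : \Phi_i \le c(t)$ implies that whenever the active quota corresponds to index $r_t + B$, the carbon intensity satisfies $c_t \ge \Phi_{r_t + B}$, with equality precisely at the threshold-crossing instant. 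Reading the corollary in the conservative, tight-at-threshold regime employed throughout the $k$-search-style analysis, substituting $\Phi_{r_t + B}$ for $c_t$ delivers the stated expression.

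The main obstacle is justifying the substitution $c_t \to \Phi_{r_t + B}$ cleanly, since the actual savings at step $t$ scale with $c_t$ rather than with the lower threshold that activated the current quota. I would handle this by interpreting the statement as a conservative per-step savings bound: because $\Phi_{r_t + B} \le c_t$ and $(\rho_{\texttt{AG}} K - \rho_{\CAP} r_t) \ge 0$, the product $(\rho_{\texttt{AG}} K - \rho_{\CAP} r_t)\,\Phi_{r_t + B}$ is a valid lower bound on the expected per-step savings, consistent with the worst-case philosophy of the $k$-search construction underlying \CAP. A sharper treatment would replace $\Phi_{r_t + B}$ by the conditional expectation $\E[c_t \mid r(t) = r_t + B]$, which would require an additional mild assumption on the empirical distribution of carbon within each threshold bucket; in the worst case over carbon traces the two quantities coincide, which is why the bound reported in the corollary is naturally phrased with $\Phi_{r_t + B}$.
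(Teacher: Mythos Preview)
Your approach is essentially the same as the paper's: both write the per-step average savings as $(\rho_{\texttt{AG}} K - \rho_{\CAP} r_t)\, c_t$ in the always-busy regime, and then invoke the threshold selection rule to replace $c_t$ by $\Phi_{r_t+B}$, obtaining the stated expression as a lower bound (the paper in fact writes the final step as an inequality $\overline{C}_s(t) \ge (\rho_{\texttt{AG}} K - \rho_{\CAP} r_t)\,\Phi_{r_t+B}$, which matches your ``conservative per-step savings bound'' reading). One small slip: to argue nonnegativity of the prefactor you assert $\rho_{\CAP} \le \rho_{\texttt{AG}}$, but the paper explicitly notes the opposite relation $\rho_{\texttt{AG}} \le \rho_{\CAP}$ (since \CAP restricts resources, its relative utilization of the cap $r_t$ is at least as high as \texttt{AG}'s utilization of $K$); the paper does not justify nonnegativity at all, so this does not derail your argument, but the chain $\rho_{\CAP} K \le \rho_{\texttt{AG}} K$ is backwards.
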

\begin{proof}
In a setting where there are always jobs with outstanding tasks in the data processing queue, the expression of the \textit{average} carbon savings at any given discrete time step simplifies as follows:

Let $\rho_{\texttt{AG}}$ denote the average machine utilization of \texttt{AG}'s schedule, i.e., $\rho_{\texttt{AG}} = \lim_{T\to \infty} \frac{\sum_{i=0}^T \nicefrac{E^\texttt{AG}_i}{K}}{T}$, and let $\rho_{\CAP}$ denote the same for \CAP's schedule, i.e., $\rho_{\CAP} = \lim_{T\to \infty} \frac{\sum_{i=0}^T \nicefrac{E^\CAP_i}{j_i}}{T}$

Then the average carbon savings $\overline{C}_s$ at  any time step $t$ is given by the following, where $r_t \coloneqq r(t') : t' \in [t, t+1)$:
\begin{align*}
\overline{C}_s(t) &= \left(\rho_{\texttt{AG}} K - \rho_{\CAP} r_t \right) c_t, \\
& \ge \left(\rho_{\texttt{AG}} K - \rho_{\CAP} r_t \right) \Phi_{r_t + B}.
\end{align*}
\end{proof}

\end{document}